\newcommand\vldbdoi{XX.XX/XXX.XX}
\newcommand\vldbpages{XXX-XXX}
\newcommand\vldbvolume{17}
\newcommand\vldbissue{11}
\newcommand\vldbyear{2024}
\newcommand\vldbauthors{\authors}
\newcommand\vldbtitle{\shorttitle} 
\newcommand\vldbavailabilityurl{https://github.com/LeixiaWang/PriPLT}
\newcommand\vldbpagestyle{plain}
\setlist[enumerate]{wide=\parindent}
\DeclareMathOperator{\var}{Var}
\DeclareMathOperator{\E}{\mathbb{E}}
\DeclareMathOperator{\cov}{Cov}
\newcommand*\circled[1]{\tikz[baseline=(char.base)]{
    \node[shape=circle,draw,inner sep=0.5pt] (char) {\small #1};}}
\definecolor{c1}{HTML}{848484}
\definecolor{c2}{HTML}{3d3d3d}
\definecolor{c3}{HTML}{000000}
\definecolor{cb}{HTML}{E9EDEA}
\newtheorem*{rep@theorem}{\rep@title}
\newcommand{\newreptheorem}[2]{%
\newenvironment{rep#1}[1]{%
 \def\rep@title{#2 \ref{##1}}%
 \begin{rep@theorem}}%
 {\end{rep@theorem}}}
\begin{document}
\title{PriPL-Tree: Accurate Range Query for Arbitrary Distribution under Local Differential Privacy}
%PriPL-Tree: A Data-aware Structure for Range Queries under Local Differential Privacy

%%
%% The "author" command and its associated commands are used to define the authors and their affiliations.
\author{Leixia Wang}
\orcid{0000-0003-3475-9552} 
\affiliation{%
  \institution{Renmin University of China}
%  \city{Beijing}
%  \country{China}
}
\email{leixiawang@ruc.edu.cn}

\author{Qingqing Ye}
\orcid{0000-0003-1547-2847}
\affiliation{%
  \institution{Hong Kong Polytechnic University}
%  \city{HongKong}
%  \country{China}
}
\email{qqing.ye@polyu.edu.hk}

\author{Haibo Hu}
\orcid{0000-0002-9008-2112}
\affiliation{%
  \institution{Hong Kong Polytechnic University}
%  \city{HongKong}
%  \country{China}
}
\email{haibo.hu@polyu.edu.hk}

\author{Xiaofeng Meng}
\authornote{Corresponding author: Xiaofeng Meng.}
\orcid{0000-0002-7889-2120}
\affiliation{%
  \institution{Renmin University of China}
%  \city{Beijing}
%  \country{China}
}
\email{xfmeng@ruc.edu.cn}

%%
%% The abstract is a short summary of the work to be presented in the
%% article.
\begin{abstract}

Answering range queries in the context of Local Differential Privacy (LDP) is a widely studied problem in Online Analytical Processing (OLAP). Existing LDP solutions all assume a uniform data distribution within each domain partition, which may not align with real-world scenarios where data distribution is varied, resulting in inaccurate estimates. To address this problem, we introduce PriPL-Tree, a novel data structure that combines hierarchical tree structures with piecewise linear (PL) functions to answer range queries for arbitrary distributions. PriPL-Tree precisely models the underlying data distribution with a few line segments, leading to more accurate results for range queries. Furthermore, we extend it to multi-dimensional cases with novel data-aware adaptive grids. These grids leverage the insights from marginal distributions obtained through PriPL-Trees to partition the grids adaptively, adapting the density of underlying distributions. Our extensive experiments on both real and synthetic datasets demonstrate the effectiveness and superiority of PriPL-Tree over state-of-the-art solutions in answering range queries across arbitrary data distributions.

\end{abstract}

\maketitle

%%% do not modify the following VLDB block %%
%%% VLDB block start %%%
\pagestyle{\vldbpagestyle}
\begingroup\small\noindent\raggedright\textbf{PVLDB Reference Format:}\\
\vldbauthors. \vldbtitle. PVLDB, \vldbvolume(\vldbissue): \vldbpages, \vldbyear.\\
\href{https://doi.org/\vldbdoi}{doi:\vldbdoi}
\endgroup
\begingroup
\renewcommand\thefootnote{}\footnote{\noindent
This work is licensed under the Creative Commons BY-NC-ND 4.0 International License. Visit \url{https://creativecommons.org/licenses/by-nc-nd/4.0/} to view a copy of this license. For any use beyond those covered by this license, obtain permission by emailing \href{mailto:info@vldb.org}{info@vldb.org}. Copyright is held by the owner/author(s). Publication rights licensed to the VLDB Endowment. \\
\raggedright Proceedings of the VLDB Endowment, Vol. \vldbvolume, No. \vldbissue\ %
ISSN 2150-8097. \\
\href{https://doi.org/\vldbdoi}{doi:\vldbdoi} \\
}\addtocounter{footnote}{-1}\endgroup
%%% VLDB block end %%%

%%% do not modify the following VLDB block %%
%%% VLDB block start %%%
\ifdefempty{\vldbavailabilityurl}{}{
\vspace{.3cm}
\begingroup\small\noindent\raggedright\textbf{PVLDB Artifact Availability:}\\
The source code, data, and/or other artifacts have been made available at \url{\vldbavailabilityurl}.
\endgroup
}
%%% VLDB block end %%%

\section{Introduction}

With increasing personal information collected by third-party entities (a.k.a., data collectors), individual privacy protection is garnering more attention \cite{dwork2006calibrating, wang2023eps, zhang2023trajectory, ye2023stateful}. Local Differential Privacy (LDP) has emerged as a rigorous privacy-preserving standard widely employed in academia and industry \cite{erlingsson2014rappor, apple2017learning, ding2017collecting}. Under LDP, users only need to submit perturbed values, preserving the privacy of their raw data. The data collector collects these noisy values and invests effort in estimating various statistics to support data analysis tasks. 

Range queries, as a prevalent query type, have been extensively studied in LDP, where the data collector estimates the frequency of specific ranges within a domain. To support these queries, existing solutions construct hierarchical trees \cite{cormode2019answering, wang2019answering, du2021ahead, wang2023privnud} or grids \cite{yang2020answering, wang2022prism} over the whole domain and estimate frequencies of the partitioned subdomains (i.e., nodes in trees or cells in grids). To answer a range query, the frequencies of those nodes or cells covered by the given range will be summed up. When some subdomains are partially covered, the data within them is assumed to be uniformly distributed, so that the corresponding frequency can be estimated based on the overlap proportion with the query range. However, the data we indexed typically exhibits various distributions rather than uniform in reality. It is inevitable to introduce non-uniform errors by existing methods, leading to inaccurate responses.

Figure~\ref{fig:PL_example} shows an example of this non-uniform estimation error. Given a distribution depicted as the black curve, Figure~\ref{fig:PL_example}(a) partitions the domain in a coarse-grained manner, resulting in a large non-uniform error. Figure~\ref{fig:PL_example}(b) uses finer partitions to reduce the non-uniform error, but incurs significant aggregated LDP noise error due to an increasing number of bins.

\begin{figure}[h]
	\centering
	\includegraphics[width=0.48\textwidth]{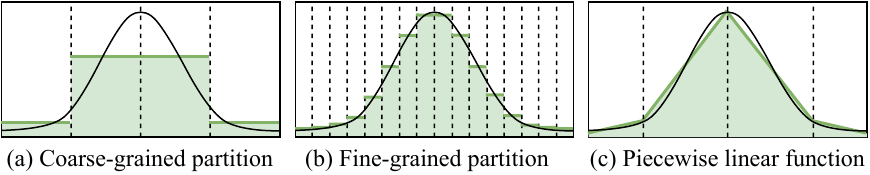}
	\caption{An Illustration on Non-uniform Errors}
	\label{fig:PL_example}
\end{figure}

To tackle this challenge, we propose an innovative solution that employs a \textbf{piecewise linear (PL) function} to model the underlying data distribution instead of relying on a uniform assumption. By partitioning the data domain into several intervals and approximating the data distribution within each interval with a line segment, even complex data distributions can be well-approximated with a few parameters \cite{Cosma2015optimal}. As shown in Figure~\ref{fig:PL_example}(c), the PL function accurately approximates the data distribution with a few segments (represented as frequency-slope pairs), which alleviates both non-uniform error and LDP noise error significantly. 

Building upon the PL function, we introduce the \textbf{Private Piecewise Linear Tree (PriPL-Tree)}. In this tree, each leaf node corresponds to a line segment and stores a frequency-slope pair, while each non-leaf node represents an interval combined from its child nodes' intervals and stores the associated interval frequency. Compared to traditional hierarchical trees, the PriPL-Tree offers several significant advantages: (1) Within a node with the same interval, it provides a more accurate fit to the underlying data distribution than the uniform assumption. (2) A few segments are sufficient to model the distribution, resulting in fewer leaf nodes and a lower tree height. Height reduction is crucial in LDP as it facilitates allocating users or the privacy budget among fewer tree layers (a necessary step to meet LDP's privacy guarantee), thereby mitigating noise errors in frequency estimation for each node. (3) The number of parameters in the tree depends only on the shape of the data distribution, not the domain size, enabling adaptation to large domains with a more concise structure and more accurate results.

However, constructing the PriPL-Tree under LDP is non-trivial because the data remains invisible to the data collector. To address this, we propose a three-phase approach. First, we allocate a portion of users to estimate the data histogram, gaining a rough glimpse of the underlying data distribution, and use it to fit the PL function. Next, we construct the optimal PriPL-Tree and estimate node frequencies with the remaining users. Finally, we perform post-processing to refine the frequencies and slopes in the tree, ensuring the non-negativity and consistency of nodes in the tree.

In addition to handling 1-D range queries, we extend PriPL-Tree for multi-dimensional scenarios by incorporating 2-D adaptive grids. These adaptive grids are also data-aware, featuring non-uniform partitions that adapt to the density of the data distribution, and can be constructed utilizing marginal distributions from 1-D PriPL-Trees. By leveraging both 1-D PriPL-Trees and 2-D grids, we can answer $\lambda$-D range queries ($\lambda\!>\!1$) using the weighted updating approach \cite{yang2020answering, wang2022prism, wang2023privnud}. 

To summarize, our contributions are:
\begin{itemize}[leftmargin=0.5cm, itemindent=0.0cm]
	\item \textbf{Innovative PriPL-Tree}: We design PriPL-Tree, a novel data structure that models the underlying data distribution using a piecewise linear (PL) function instead of relying on a uniform data assumption in LDP. In this way, PriPL-Tree can answer range queries for arbitrary data distributions accurately.  
	\item \textbf{Adaptive data-aware Grids}: By leveraging marginal distributions revealed by the PriPL-Trees, we design adaptive grids tailored to the density of underlying data distributions, which serves as the building block for answering multi-dimensional range queries effectively.
	\item \textbf{Extensive Experimental Evaluation}: We conduct comprehensive experiments on both real and synthetic datasets, validating the effectiveness and superiority of our methods. Compared to existing approaches, our method achieves one order of magnitude improvement in accuracy. 
\end{itemize}

In the remainder of this paper, we introduce LDP and analyze existing range query methods in Section~\ref{sec:preliminary}. Our primary method, PriPL-Tree, is proposed in Section~\ref{sec:pripl-tree}, extended with adaptive grids for multi-dimensional cases in Section~\ref{sec:adaptive_grid}. We evaluate them in Section~\ref{sec:experiments}. Finally, we review related works in Section~\ref{sec:related_work} and conclude our paper in Section~\ref{sec:conclusion}.

\section{Preliminaries}
\label{sec:preliminary}

In this section, we define the problem and introduce necessary knowledge of LDP and existing methods for range queries in LDP.

\subsection{Local Differential Privacy (LDP)}
\label{subsec:ldp_mechanism}
In the context of data collection, LDP provides a mechanism $\mathcal{R}$ that enables users to perturb their data $v$ before sharing it with an untrusted data collector \cite{duchi2018minimax, qian2023collaborative, duan2024ldptube}. By ensuring the resulting perturbed data $\mathcal{R}(v)$ satisfies $\epsilon$-LDP, the data collector cannot distinguish a value $v$ from any other possible value $v'$ with high confidence, thus safeguarding the privacy. A higher level of privacy is achieved when a smaller value of $\epsilon$ is employed.

\begin{definition}[$\epsilon$-Local Differential Privacy ($\epsilon$-LDP) \cite{duchi2018minimax}]
A perturbation mechanism $\mathcal{R}$ satisfies $\epsilon$-LDP ($\epsilon>0$) iff for any pair of input data $v, v' \in D$ and any output $z$ of $\mathcal{R}$, we have
\begin{equation*}
	\textstyle
	\Pr[\mathcal{R}(v)=z] \le e^{\epsilon} \Pr[\mathcal{R}(v')=z].
\end{equation*}
\end{definition}

We introduce two state-of-the-art LDP mechanisms for fundamental frequency and numerical distribution estimation, respectively, both ensuring $\epsilon$-LDP.

\textbf{Optimal Unary Encoding Mechanism (OUE)} \cite{wang2017locally} is the state-of-the-art frequency estimation mechanism with three steps: encoding, perturbation, and aggregation. In the encoding step, each user $u_i$ ($i\le N$) encodes his value $v_i\!\in\!D$ into a bit vector $\mathbf{B}\!\in\!\{0,1\}^{|D|}$, setting the $i$-th position to 1 and others to 0. During perturbation, each user perturbs each bit in $\mathbf{B}$ separately. The original bit ``1'' is retained with probability $p\!=\!1/2$, while the bit ``0'' is flipped to ``1'' with probability $q\!=\!1/(e^\epsilon\!+\!1)$. Then, the data collector aggregates all $N$ users' perturbed vectors, counts the number of $1$s in the $v$-th position as $n_v^\prime$ for each $v$, and calibrates it to an unbiased frequency estimate $\hat{f}_v\!=\!(n_v^\prime\!-\!Nq) / N(p\!-\!q)$, achieving an optimized estimation variance of $\var(\hat{f}_v) \approx 4e^{\epsilon} / (N\cdot(e^\epsilon\!-\!1)^2)$, denoted as $\sigma^2$.

\textbf{Square Wave Mechanism (SW)} \cite{li2020estimating} is for numerical distribution estimation, involving perturbation and aggregation steps. In the perturbation step, each user perturbs his value $v_i \in D$ to $v^\prime_i$ within a domain with size $|D| + 2b$, where $b\!=\!\left\lfloor\frac{\epsilon e^\epsilon - e^\epsilon + 1}{2e^\epsilon(e^\epsilon-1-\epsilon)}\cdot|D|\right\rfloor$. Specifically, with a larger probability $p\!=\!e^\epsilon/((2b\!+\!1)e^\epsilon\!+\!|D|\!-\!1)$, he perturbs $v_i$ to a value $v^\prime_i$ within $|v-v^\prime_i|<b$; with a smaller probability $q\!=\!1/((2b\!+\!1)e^\epsilon\!+\!|D|\!-\!1)$, he perturbs it to other values. During aggregation, the data collector collects the perturbed data and estimates the distribution using the expectation maximization (EM) algorithm or the EM algorithm with smoothing steps (EMS). SW with EM captures spiky distributions effectively, while SW with EMS provides more accurate estimation by smoothing the LDP noise. We denote these two results as $\hat{\mathbf{F}}^{\text{EM}}$ and $\hat{\mathbf{F}}^{\text{EMS}}$, respectively.

%\section{Problem Statement}
%\label{sec:problem_def_existing_works}

\subsection{Problem Definition}

Consider $N$ users and each user $u_i$ ($i \le N$) owns a private record containing $m$ private values on attributes $(A_1, A_2, \ldots, A_m)$. Each attribute $A_j$ ($1\le j\le m$) has a public domain $D_j$. Each user $u_i$'s record is denoted as $\mathbf{v}_i = (v_i^1, v_i^2, \ldots, v_i^m)$, where $v_i^j \in D_j$ represents the attribute $A_j$'s value for user $u_i$. For convenience, we assume $D_j = [0, d_j]$ for continuous data and $D_j = \{1, 2, \ldots, d_j\}$ (abbreviated as $[d_j]$) for discrete data. For 1-dimensional (a.k.a., 1-D) data, we abuse $v_i$ to denote the user $u_i$'s value in the default attribute.

The $\lambda$-dimensional (a.k.a., $\lambda$-D) range query is performed on a set of private attributes $\Phi\!\subseteq\!\{A_j|j\le m\}$, where $\lambda\!=\!|\Phi|\!\le\!m$. Let $[l_j, r_j]$ denote the specified range for the attribute $A_j \in \Phi$. The $\lambda$-D range query returns the frequency of records where all queried attribute values $v_i^j$ ($j\in\Phi$) are within these specified ranges. Formally, 
\begin{equation*}
\textstyle
	Q\left(\cap_{A_j\in\Phi}[l_j,r_j]\right) = \frac{1}{N}\sum_{i=1}^{N}\mathds{1}_{\bigcap_{A_j\in\Phi}\{l_j \le v_i^j \le r_j\}},
\end{equation*}
where $\mathds{1}$ is an indicator function that outputs $1$ if the predicate is true and $0$ otherwise. 

Our goal is to let the untrusted data collector answer the range query $Q\left(\cap_{j\in\Phi}[l_j,r_j]\right)$ while ensuring individual privacy under $\epsilon$-LDP. Extensive research has been conducted on this problem in the context of LDP, as we reviewed below. The notations used are summarized in Table~\ref{tab:notations}. 

\begin{table}
	\small
%	\captionsetup{skip=0pt}
	\setlength\tabcolsep{2pt} %设置表格边距
	\caption{Notations}
	\begin{center}
		\begin{tabular}{|c|c|}
			\hline
			\textbf{Symbols} & \textbf{Description} \\
			\hline
			$N$ & The total number of users \\
			\hline
			$A_j$ & The $j$-th attribute\\
			\hline
			$D_j, d_j$ & The attribute $A_j$'s domain $D_j$ with size $d_j$ \\
			\hline
			$m$ & The number of private attributes in the data \\
			\hline
			$\lambda$ & The number of attributes involved in a range query \\
			%\hline
			%$v_i^j$ & The user $i$'s value on attribute $A_j$ \\
%			\hline
%			$\epsilon$ & Privacy budget\\
			%\hline
			%$[x]$ & The set $\{1,2, \ldots, x\}$ with a size of $x$\\
			\hline
			$n_k$ & The node $n_k$ in the PriPL-Tree\\
			\hline
			$f_k$, $\beta_k$ & The frequency $f_k$ and the slope $\beta_k$ in $n_k$\\
			\hline
			$I_k, s_{k-1},s_k$ & \makecell[l]{The interval $I_k$ of node $n_k$ including $|I_k|$ bucketized values\\ between two breakpoints $s_{k-1}$ and $s_k$}\\
			\hline
			$\alpha$ & \makecell[c]{User allocation ratio in phase 1 for PriPL-Tree}\\
			\hline
			$\sigma^2$ & The variance of OUE with $N$ users and a privacy budget of $\epsilon$\\
			\hline
		\end{tabular}
		\label{tab:notations}
	\end{center}
\end{table}

\subsection{Existing Methods}

The hierarchical tree (HT) is the primary data structure for 1-D range queries in LDP. It hierarchically decomposes the entire domain into disjoint sub-domains (a.k.a., intervals), constructing a $B$-ary tree. Each node in the tree represents an interval and stores an estimated interval frequency. Non-leaf nodes aggregate frequencies of their $B$ child nodes. The data within leaf nodes is assumed to be uniformly distributed. As such, range queries can be answered by summing a few node frequencies (or parts of them) in the tree rather than all individual bins' frequencies within the range, as in a histogram, reducing the accumulated noise error. 
For example, considering a domain $D\!=\![0,16]$, we can either uniformly partition it into 16 bins for a histogram or construct a complete binary tree with 16 leaves. Given a range query $Q([0,5])$, it can be answered by summing the two frequencies of nodes with intervals $[0,4)$ and $[4,5]$ in the tree, rather than five frequencies of individual bins $[0,1)$, $[1,2)$, $[2,3)$, $[3,4)$ and $[4,5]$ in the histogram.
To build this tree under LDP, the privacy budget or users are allocated among layers to estimate nodes' frequencies. To further reduce the error of range queries, a lot of optimization methods have been developed, including Haar transformation of data~\cite{cormode2019answering}, optimizing the branch number for the tree \cite{cormode2019answering,wang2019answering}, customizing branch numbers for nodes \cite{wang2023privnud} and merging nodes with low frequencies~\cite{du2021ahead}. 

Beyond 1-D, the HT can be extended to multi-dimensional cases \cite{wang2019answering, du2021ahead}. However, finely partitioning users or the privacy budget among layers and dimension combinations would increase the noise error. To overcome the curse of dimensionality, grid-based methods are typically employed in $\lambda$-D ($\lambda\!>\!1$) range queries. Given $m$ private attributes, they estimate the frequencies of $m$ 1-D grids for individual attributes and $\binom{m}{2}$ 2-D grids for attribute pairs. Users or privacy budgets are allocated among these grids, where the data in each cell is still assumed to be uniformly distributed. Based on these, a $\lambda$-D range query can be estimated from these 1-D and 2-D grids through the maximum entropy \cite{zhang2018calm} or weighted updating \cite{yang2020answering} algorithms. To reduce the estimation error of range queries, Yang et al. \cite{yang2020answering} optimized the granularity for both 1-D and 2-D grids, and Wang et al. \cite{wang2022prism} further employed prefix-sum (PS) cubes.

\subsection{Observations and Challenges} 
\label{subsec:summary}
Drawing from current research on range queries in LDP, we summarize two key observations that guide our approach and identify a significant challenge. First, we outline the observations:

\textbf{(1) Tree vs. Grid}: Tree-based methods allocate users (or privacy budget) across multiple layers and dimensions, whereas grid-based methods allocate only among dimensions. Considering the significant noise from a few users or a small privacy budget, tree-based methods are preferable for 1-D range queries, while grid-based methods are more suitable for $\lambda$-D ($\lambda>1$) range queries \cite{wang2023privnud}.
	
\textbf{(2) User Allocation vs. Privacy Budget Allocation}: To achieve $\epsilon$-LDP, allocation of users or privacy budget is necessary among the layers of trees and the grids. Generally, user allocation is preferable in the LDP setting as it introduces less noise error than privacy budget allocation \cite{cormode2019answering, wang2019answering, yang2020answering, du2021ahead, wang2023privnud}.	

We then present a significant challenge: \textbf{unrealistic uniform assumptions}. All existing works decompose the domain uniformly and/or assume uniform data distribution in each decomposed sub-domain \cite{cormode2019answering, wang2019answering, yang2020answering, du2021ahead, wang2022prism, wang2023privnud}. However, real-world applications often involve data following various distributions (e.g., Gaussian, Zipf) rather than being uniformly distributed \cite{li2014data, wei2014partial}. This assumption inevitably leads to non-uniform errors and suboptimal estimates. 

In this work, we address uniform assumptions on the domain decomposition and data, and correspondingly provide enhanced estimation accuracy for range queries in LDP. In what follows, we first present a solution for 1-D range queries in Section~\ref{sec:pripl-tree} and then extend it to a multi-dimensional setting in Section~\ref{sec:adaptive_grid}. 

\section{Private Piecewise Linear Tree}
\label{sec:pripl-tree}

%In this section, we present PriPL-Tree, a data structure combining piecewise linear functions and hierarchical trees to address the uniform assumption and accurately respond to 1-D range queries.

In this section, we propose PriPL-Tree, a private piecewise linear tree that combines piecewise linear functions and hierarchical trees to address uniform assumptions for 1-D range queries.

\subsection{Design Rationale}

The piecewise linear (PL) function is capable of approximating the underlying data distribution with only a few parameters, enabling us to not rely on uniform distribution assumptions.
For instance, given a Gaussian distribution in Figure~\ref{fig:priplt_example}(a), we can approximate it using 4 segments with 8 parameters. In this case, the entire domain is divided into 4 intervals, and data in each interval $I_k\!=\![s_{k-1},s_k)$ ($1\le k\le 4$) is fitted with a linear function defined by two parameters, i.e., $\beta_k$ (slope) and $f_k$ (sum of frequencies of all points in the interval). The linear expression is given by $y = \beta_k x + b_k$, where $b_k = f_k / |I_k| - \beta_k (|I_k|+ 2s_{k-1}-1)/2$ and $|I_k|$ is the interval size. 

To facilitate range query processing, we integrate the PL function with a hierarchical tree structure, proposing the \textbf{Private Piecewise Linear Tree (PriPL-Tree)}. Each leaf node represents a segment (corresponding to an interval) of the PL function and stores its slope $\beta_k$ and frequency $f_k$. Each non-leaf node represents an interval and only stores the interval frequency. Like conventional hierarchical trees, the parent node stores the sum of its child nodes' frequencies. We count the layers of the tree starting from 0 at the top.

\begin{figure}[h]
\centering
\includegraphics[width=0.47\textwidth]{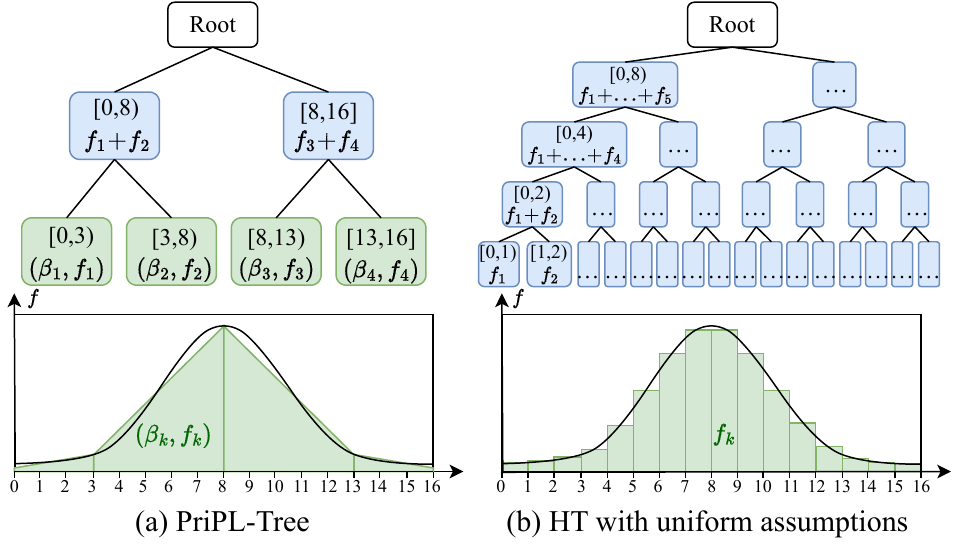}
\caption{An Example of PriPL-Tree and HT}
\label{fig:priplt_example}
%\vspace{-0.05in}
\end{figure}

We provide an example in Figure~\ref{fig:priplt_example} to illustrate the PriPL-Tree, comparing it to the conventional hierarchical tree (HT) with uniform assumptions. 
Obviously, within the same interval, the PL function can provide a more accurate approximation of the underlying distribution than uniform assumptions. As such, the PriPL-Tree captures the underlying distribution using significantly fewer leaf nodes (4 in PriPL-Tree vs. 16 in HT) and correspondingly fewer layers (2 in PriPL-Tree vs. 4 in HT). In the context of LDP, fewer layers mean each layer in the tree can be allocated more users, resulting in less noise error due to the law of large numbers. Moreover, the PriPL-Tree construction depends solely on the distribution of the underlying data, as opposed to HT, which relies on the domain size. In HT, modeling data with a large domain size requires a taller tree or a coarser granularity for leaf nodes, increasing noise errors or non-uniform errors. The PriPL-Tree is well-suited for large domain-sized scenarios while reducing both two types of errors.

However, constructing an effective PriPL-Tree in LDP settings is challenging due to the invisible data distribution. To address this, we first employ some users to collect a noisy histogram using LDP mechanisms, gaining insight into the underlying data distribution. We then fit PL functions based on this noisy histogram and use the remaining users to construct the tree. Through post-processing, we further optimize these estimated frequencies and slopes to maintain tree consistency and improve range query accuracy. Following this idea, we propose a three-phase workflow as outlined below and detail the methods for each phase in separate subsections.

\subsection{Workflow of PriPL-Tree}
\label{subsec:workflow}

The workflow of the PriPL-Tree involves three phases: Private PL Fitting, PriPL-Tree Construction, and PriPL-Tree Refinement, exemplified in Figure~\ref{fig:framework}.

\begin{figure*}[htb]
\centering
\includegraphics[width=\textwidth]{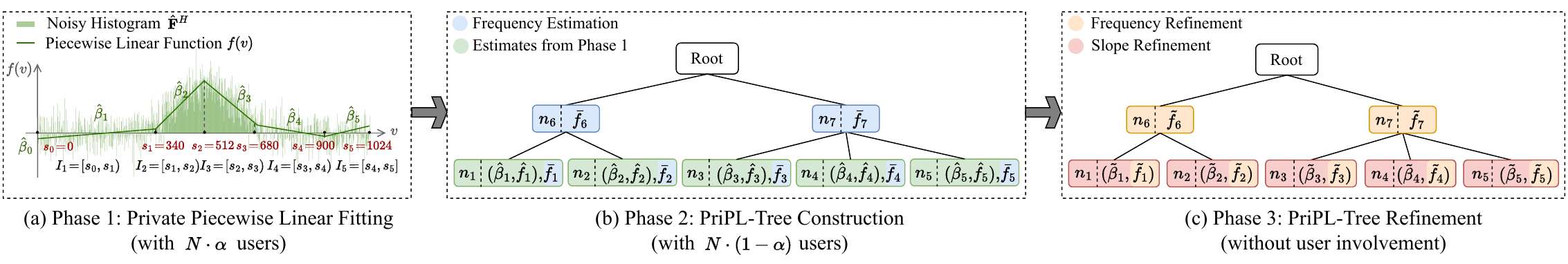}
\caption{Workflow of PriPL-Tree}
\label{fig:framework}
\end{figure*}

\textbf{Phase 1: Private Piecewise Linear (PL) Fitting.}
To gain fundamental insight into the data distribution, we employ a proportion $\alpha$ of users to execute SW protocols with the privacy budget $\epsilon$, collecting a noisy histogram $\hat{\mathbf{F}}^{\text{H}}$ on the bucketized domain $[d]$. Then, we fit the PL function over this histogram, as presented in Section~\ref{subsec:pl_fitting}. During PL fitting, we address two key issues: interval partitioning and segment fitting. Interval partitioning involves determining the number of intervals $K$ and identifying $K\!+\!1$ breakpoints $\{s_0, s_1, ..., s_K\}$. The derived intervals are denoted as $I_k=[s_{k-1}, s_k)$ for $1\!\le\!k\!<\!K$ and $I_K=[s_{K-1}, s_K]$ for the last interval. For mapping to the histogram, we can also mark $[s_{k-1}, s_k)$ as $[s_{k-1}, s_k-1]$. Segment fitting focuses on fitting the slope parameter $\beta_k (1\!\le\!k\!\le\!K)$ of the line segment for each interval. Although an intercept parameter of the PL function is also derived, we do not record it, only the slope parameter $\beta_k$ and the estimated interval frequency, i.e., the sum of frequencies of values in each interval, denoted as $\hat{f}_k\!=\!\sum_{v \in S_k} \hat{f}_v^{\text{H}} (1\!\le\!k\!\le\!K)$. These two parameters can be further optimized using the collected interval frequencies in subsequent phases. 

\textbf{Phase 2: PriPL-Tree Construction.}
Based on the PL function, we dynamically construct the PriPL-Tree structure in this phase, as detailed in Section~\ref{subsec:tree_construction}. Each leaf node corresponds to a fitted segment in sequence, e.g., $n_1$ to interval $I_1$ and $n_2$ to interval $I_2$, as shown in Figure~\ref{fig:framework} (a) and (b). Each non-leaf node represents an interval encompassing its children and has a non-uniform branch number (i.e., fan-out). This flexible structure is designed to minimize average error in responding to range queries. 

Given the PriPL-Tree structure, we allocate the remaining $N\cdot(1\!-\!\alpha)$ users to nodes and estimate their frequencies. Because the intervals of nodes along each path from the root to the leaves overlap, each user is randomly allocated to one node per path. As a result, the total number of users along each path is $N\cdot(1\!-\!\alpha)$. Each individual user is assigned multiple nodes with non-intersecting intervals that jointly cover the entire domain. Informed of these intervals, users can encode their values into bit vectors to employ the OUE mechanism with privacy budget $\epsilon$ for frequency estimation. For example, if a user's value $v$ is covered by nodes $\{n_3, n_7\}$ and he receives the intervals of nodes $(n_6, n_3, n_4, n_5)$, he can encode his value as $(0,1,0,0)$ and apply OUE. By aggregating all users' perturbed values for corresponding nodes, we derive each node's frequency $\bar{f}_k$, forming a preliminary PriPL-Tree, as shown in Figure~\ref{fig:framework}~(b). Each leaf node has two frequencies: $\hat{f}_k$, estimated during private PL fitting in phase 1, and $\bar{f}_k$, estimated by OUE in this phase. 

\textbf{Phase 3: PriPL-Tree Refinement.}
In the current PriPL-Tree, there are several frequency inconsistencies: (1) the estimated frequency of values or intervals may be beyond the actual range of $[0,1]$, (2) the frequency of a parent node may differ from the frequency sum of its child nodes, (3) two different frequencies occur at leaf nodes. To address these issues, we propose a post-processing method in Section~\ref{subsec:tree_refinement}, yielding an optimized PriPL tree as shown in Figure~\ref{fig:framework}~(c). It has consistent frequencies $\tilde{\mathbf{F}}$ across all nodes and optimized slopes $\tilde{\beta}_k$ for leaf nodes.
By now, a well-estimated PriPL-Tree is ready to respond to range queries. 

\textbf{Response to 1-D Range Query $Q([l,r])$.}
Given a 1-D range query $Q([l,r])$, the response is obtained by summing the frequencies $\tilde{f}_k$ of nodes $n_k$ that are fully within the range $[l,r]$ but whose parents are not, as well as frequencies from parts of leaf nodes that overlap but are not completely within $[l,r]$. For example, when querying $Q([200,1024])$ in Figure~\ref{fig:framework} (c), we aggregate the frequencies $\tilde{f}_2$ of node $n_2$, $\tilde{f}_7$ of node $n_7$, and the frequency of the sub-range $[200,340)$ (i.e., $[200,339]$) within node $n_1$. All these nodes and their corresponding frequencies can be derived by traversing the PriPL-Tree from top to bottom. Let $[l_{\text{sub}}, r_{\text{sub}}]$ denote the intersecting range of $[l,r]$ with the interval $I_k$ of a leaf node $n_k$; the frequency of this sub-range $Q([l_{\text{sub}}, r_{\text{sub}}])$ can be computed using Eq. (\ref{eq:sub_range}). The detailed computation process is shown in Appendix~A.1 in \cite{wang2024priplt}.  
\begin{equation}
	\textstyle
	\resizebox{0.441\textwidth}{!}{$
	Q([l_{\text{sub}}, r_{\text{sub}}])\!=\!
	\left(r_{\text{sub}}\!-\!l_{\text{sub}}\!+\!1\right)\cdot\left(\tilde{\beta}_k \left(\frac{l_{\text{sub}}+r_{\text{sub}}+1\!-\!|I_k|}{2}\!-\!s_{k\!-\!1}\right)\!+\!\frac{\tilde{f}_k}{|I_k|}\right)$}
	\label{eq:sub_range}
\end{equation}

%In the following sections, we will present the implementation details of these three phases, respectively.

\subsection{Private PL Fitting}
\label{subsec:pl_fitting}

As a fundamental data model, the PL function has been extensively studied in stream compression \cite{keogh2001online, buragohain2006space, liu2008novel, elmeleegy2009online, xie2014maximum} and learned index \cite{galakatos2019fiting, ferragina2020pgm, li2020lisa, li2021finedex} applications. In these contexts, the original data distribution is available, and the PL model can be learned using heuristic algorithms with specified error restrictions. However, our task poses a key challenge as we aim to use a noisy histogram to fit an unknown distribution while achieving an optimal error. To address this challenge, we carefully design the following segment fitting and interval partitioning steps to learn a PL model on the data distribution. The pseudocode is provided in Algorithm~\ref{algo:pl_fitting}.

\subsubsection{Segment Fitting}
Let's start with a simple case with $K$ partitioned intervals, and we aim to optimize the PL function by minimizing the squared error between the fitted and the noisy values. To alleviate the impact of LDP noise, we assume the PL function is continuous. This allows us to model the entire noisy histogram as a whole, leveraging all histogram data to fit each line segment, rather than using only a subset of data located in individual intervals, which may be overwhelmed by LDP noise. Moreover, this assumption is practical even for non-continuous distributions, as data around the breakpoints can be approximated by a continuous function with a sharp line connecting two breakpoints. Such approximation would produce only minor errors, especially for bucketized histogram values.

We model the continuous PL function in Eq.(\ref{eq:pl_func}) and illustrate it with $K\!=\!5$ segments in Figure~\ref{fig:framework} (a). Each $s_k$ ($0\!<\!k\!<\!K$) represents a breakpoint between two segments, while $s_0$ and $s_K$ mark the domain's endpoints. Let $\beta_0$ denote the intercept of the first segment. For each segment in the interval $I_k$, its slope is denoted by $\beta_k$, and its linear expression appears in the $k$-th row of Eq.(\ref{eq:pl_func}).
\begin{equation}
	f(v)\!=\!\left\{
	\begin{aligned}
		&\beta_0\!+\!\beta_1(v\!-\!s_0),\!&\!s_0\!\le\!v\!<\!s_1\\
		&\beta_0\!+\!\beta_1(s_1\!-\!s_0)\!+\!\beta_2(v\!-\!s_1),\!&\!s_1\!\le\!v\!<\!s_2\\
		&...&\\
		&\textstyle\beta_0\!+\!\sum_{k=1}^{K\!-\!1}\beta_k(s_k\!-\!s_{k\!-\!1})\!+\!\beta_K (v\!-\!s_{K\!-\!1}),\!&\!s_{K\!-\!1}\!\le\!v\!\le\!s_{K}
	\end{aligned}
	\right.
	\label{eq:pl_func}
\end{equation}

Given the noisy histogram with frequency $\hat{f}_v^{\text{H}}$ for $v \in [d]$, our objective is to minimize the squared error between the PL function $f(v)$ and the observed frequencies, i.e., $\min \sum_{v\in[d]}(f(v) - \hat{f}_v^{\text{H}})^2$.

For ease of optimization, we express the problem using matrices. Let $\mathds{1}$ represent an indicator function, which is 1 when its predicate is met and 0 otherwise. We denote the frequencies in the noisy histogram by the vector $\mathbf{\hat{F}}^{\text{H}} = [\hat{f}_1^{\text{H}} \ \hat{f}_2^{\text{H}} \ \ldots \ \hat{f}_d^{\text{H}}]^T$, and the parameters of the PL function by $\mathbf{B} = [\beta_0 \ \beta_1 \ \ldots \ \beta_K]^T$. We define two matrices, $\mathbf{X}_{d\times(K+1)}$ and $\mathbf{A}_{d\times(K+1)}$, as described in Eq.(\ref{eq:x_matrix}) and Eq.(\ref{eq:a_matrix}) respectively. In both matrices, the $v$-th ($v\in[d]$) row corresponds to the expression of $f(v)$ by Eq.(\ref{eq:pl_func}). For $k > 0$, the $k$-th column in $\mathbf{X}$ refers to the term $(v-s_{k-1})$, and the $k$-th column in $\mathbf{A}$ refers to the term $(s_{k}-s_{k-1})$.
\begin{equation}
\textstyle
	\mathbf{X} = 
	\begin{bmatrix}
		1 & (1-s_0)\mathds{1}_{s_0\le 1 < s_1} & \ldots &(1-s_{K-1})\mathds{1}_{s_{K-1} \le 1 \le s_{K}}\\
		1 & (2-s_0)\mathds{1}_{s_0\le 2 < s_1} & \ldots &(2-s_{K-1})\mathds{1}_{s_{K-1} \le 2 \le s_{K}}\\
		\vdots & \vdots & \ddots & \vdots\\
		1 & (d-s_0)\mathds{1}_{s_0\le d < s_1} & \ldots &(d-s_{K-1})\mathds{1}_{s_{K-1} \le d \le s_{K}}
	\end{bmatrix}
	\label{eq:x_matrix}
\end{equation}
\begin{equation}
\textstyle
	\mathbf{A} = 
	\begin{bmatrix}
		0 & (s_1-s_0)\mathds{1}_{1 > s_1} & \ldots & (s_{K-1}-s_{K-2})\mathds{1}_{1>s_{K-1}} & 0 \\
		0 & (s_1-s_0)\mathds{1}_{2 > s_1} & \ldots & (s_{K-1}-s_{K-2})\mathds{1}_{2>s_{K-1}} & 0 \\
		\vdots & \vdots & \ddots & \vdots & \vdots \\
		0 & (s_1-s_0)\mathds{1}_{d > s_1} & \ldots & (s_{K-1}-s_{K-2})\mathds{1}_{d>s_{K-1}} & 0 \\
	\end{bmatrix}
	\label{eq:a_matrix}
\end{equation}

By calculating $(\mathbf{X} + \mathbf{A})\cdot\mathbf{B}$, we derive the PL fitted frequencies for all values in $[d]$, i.e., $[f(1) \ f(2) \ \ldots \ f(d)]^T$. Consequently, we reformulate the optimization problem using matrices as follows:
\begin{equation}
	\min ((\mathbf{X}+\mathbf{A})\cdot\mathbf{B}-\mathbf{\hat{F}}^{\text{H}})^T\cdot((\mathbf{X}+\mathbf{A})\cdot\mathbf{B}-\mathbf{\hat{F}}^{\text{H}}).
	\label{eq:opt_matrix}
\end{equation}

Let $L(\mathbf{B}) = ((\mathbf{X}+\mathbf{A})\cdot\mathbf{B}-\mathbf{\hat{F}}^{\text{H}})^T\cdot((\mathbf{X}+\mathbf{A})\cdot\mathbf{B}-\mathbf{\hat{F}}^{\text{H}})$ denote the loss function. By setting $\partial L(\mathbf{B}) / \partial \mathbf{B} = 0$, we derive the closed-form solution for $\hat{\mathbf{B}}$, where $\hat{B}_0$  represents the estimated intercept $\hat{\beta}_0$ and $\hat{B}_k$ ($k>0$) represents the estimated slope $\hat{\beta}_k$ of the $k$-th segment.
\begin{equation}
	\hat{\mathbf{B}} = ((\mathbf{X}+\mathbf{A})^T(\mathbf{X}+\mathbf{A}))^{-1}(\mathbf{X}+\mathbf{A})^T\mathbf{\hat{F}}^{\text{H}}.
	\label{eq:solution_A}
\end{equation}

\begin{algorithm}[tbp]
	\small
	\caption{Private PL Fitting}
	\label{algo:pl_fitting}
	\KwIn{Noisy histograms $\hat{\mathbf{F}}^{\text{EM}}$ and $\hat{\mathbf{F}}^{\text{EMS}}$ by SW mechanism.}
	\KwOut{A PL function with adaptive $K$ segments.}
	Set segment number $K\!=\!1$ and breakpoints $\mathbf{S}\!=\!(0,d\!-\!1)$\;
	Initialize search space $\Theta = \{s | s\in[d-1]\}$\;
	\For{$i \in \{1,2\}$}{
		\leIf{$i == 1$}{$\hat{\mathbf{F}}^{\text{H}} = \hat{\mathbf{F}}^{\text{EM}}$}{$\hat{\mathbf{F}}^{\text{H}} = \hat{\mathbf{F}}^{\text{EMS}}$}
		\While{$RSS$ not converges or $K \le (K_{\text{max}} \times i / 2)$}{
			$K = K+1$\;
			\ForEach{Breakpoint candidate $s$ in $\Theta$}{
				Initialize $\mathbf{X}_{d\times(K+1)}$ and $\mathbf{A}_{d\times(K+1)}$ with $\mathbf{S}$ and $s$\;
				Calculate PL parameters $\hat{\mathbf{B}}$ with Eq.(\ref{eq:solution_A})\;
				Calculate $RSS_k\!=\!\sum_{v\in I_k} (\hat{f}_v^{\text{H}}\!-\!f(v))^2$ for each interval $S_k$, record total $RSS = \sum_{1\le k\le K}RSS_{k}$\;
			}
			\textls[-29]{Append $s^*$ to $\mathbf{S}$, where $s^*\!\in\!\Theta$ produces the minimal $RSS$}\;
			\textls[-20]{Set $\Theta= \{s |s\!\in\!I_{k^*}\}$, where $I_{k^*}$ has the maximum $RSS_{k^*}$ and its frequency $\hat{f}_k\!=\!\sum_{v\in I_k} \hat{f}_v^{\text{EMS}}\!>\!\sigma\sqrt{(1\!-\!\alpha)}$}\; 
		}
		
	}
	\KwRet \textls[-10]{A PL function with breakponts $\mathbf{S}$, slopes $\hat{\mathbf{B}}$ and frequencies $\hat{\mathbf{F}}$} \;
\end{algorithm}

\subsubsection{Interval Partitioning}
\label{subsubsec:interval_partitioning}

Building upon segment fitting, we propose a greedy method to search breakpoints one by one, achieving approximately optimal interval partitions. As described in Algorithm~\ref{algo:pl_fitting}, during each search (i.e., each iteration in lines 5$\sim$12), we traverse all candidate breakpoints $s$ in the search space $\Theta$, fit segments based on it and the existing breakpoints $\mathbf{S}$, and find the best breakpoint $s^*$ that minimizes the residual sum of squares (RSS). The initial search space contains all possible candidates in the whole domain. In subsequent iterations, we select values in the interval $I_{k*}$ as the new search space. This interval $I_{k*}$ has the maximum RSS for its fitted line segment, indicating it requires further splitting for a more accurate fit. Additionally, its frequency $\hat{f}_k$ should be greater than $\sigma\sqrt{(1-\alpha)}$, ensuring that it will not be overwhelmed by OUE noise in node frequency estimation in the next phase. This iteration continues until the maximum number of segments is reached (we set $K_{max}=32$ in experiments) or RSS converges, i.e., the ratio of total RSS between two consecutive iterations approaches 1, indicating no further gain from increasing segments. 

Additionally, we propose two strategies to improve interval partitioning for effectiveness and efficiency. These are briefly described below, with detailed explanations provided in Appendix~A in \cite{wang2024priplt}.

(1) \emph{Twice Partitioning Strategy for Effectiveness}: To fit both jagged and smooth distributions, we sequentially perform interval partitioning on two distributions, $\hat{\mathbf{F}}^{\text{EM}}$ and $\hat{\mathbf{F}}^{\text{EMS}}$, as outlined in lines 3$\sim$4 of Algorithm~\ref{algo:pl_fitting}. These distributions, derived from SW using EM and EMS, respectively, represent an initially calibrated (typically jagged) distribution and a smoothed one with reduced noise \cite{li2020estimating}. After this process, we derive the necessary partitions to depict both types of distributions. Notably, these two estimations from SW require only one perturbation per user, not increasing the privacy budget. 

(2) \emph{Search Acceleration Strategy for Efficiency}: To accelerate the search process, we propose a multi-granular search strategy instead of traversing all possible breakpoints at each search in line~7 in Algorithm~\ref{algo:pl_fitting}. Given the granularity factor $\phi$, which limits the maximum number of candidate breakpoints during each search, we initially explore the space $\Theta$ using a step of $\left\lceil |\Theta|/\phi \right\rceil$ to identify an optimal breakpoint $s^*$. Subsequently, we narrow the search space to $\left[s^*\!-\!\left\lceil |\Theta|/\phi \right\rceil, s^*\!+\!\left\lceil |\Theta|/\phi \right\rceil \right]$ and search it with a finer step of $\left\lceil |\Theta|/\phi^2 \right\rceil$. We repeat this process until the step size reduces to 1. A relatively smaller $\phi$ ($\phi < d$) accelerates the search while increasing the probability of encountering local optima. Since breakpoints inherently represent different local optima partitioning the domain, $\phi$ primarily influences the order of breakpoint discovery rather than the final interval partitions. For brevity, this strategy is not included in Algorithm~\ref{algo:pl_fitting}, but it can replace line 7 of it.

\color{black}

\subsection{PriPL-Tree Construction}
\label{subsec:tree_construction}

Based on the derived partitioned intervals, we construct the PriPL-Tree, focusing on user allocation and tree structure construction and providing the pseudocode in Algorithm~\ref{algo:pripl_tree_construction}.

\begin{algorithm}[tbp]
\small
	\caption{PriPL-Tree Construction}
	\label{algo:pripl_tree_construction}
	\KwIn{$K$ segments and $N(1-\alpha)$ users}
	\KwOut{PriPL-Tree $\mathcal{T}$}
	\tcp{\small Tree Structure Construction}
	Construct a basic balanced binary tree $\mathcal{T}$\;
	\For{node $n_k$ in postorder traversal}{
		Compute $Err$ for $\mathcal{T}$ with $n_k$ and $Err'$ w/o $n_k$ using Eq.(\ref{eq:exp_error})\;
		\lIf{$Err' < Err$}{
			Remove $n_k$
		}
	}
	\tcp{\small User Allocation}
	Assign all available users $U_0\!=\!\{u_1, u_2, \ldots, u_{N(1-\alpha)}\}$ to the root\;
	\For{node $n_k$ in level order traversal}{
		\lIf{$n_k$ is root}{
			Set $U_k' = \phi$
		}
		\Else{
			Compute $h_k$, the height of the subtree rooted at $n_k$\;
            Set $U_k'$ as $\left\lceil\frac{|U_k|}{h_k}\right\rceil$ randomly sampled users from $U_k$\;
            Allocate $U_k'$ to $n_k$ for frequency estimation\;
		}
		\lFor{node $n_c \in children(n_k)$}{
			Assign $U_c\!=\!U_k\!-\!U_k'$ to $n_c$
		}
	}
	\tcp{\small Node Frequency Estimation}
	\For{each user $u_i \in U_0$}{
		Assign intervals of nodes $\mathcal{N}_i = \{n_k | u_i \in U_k'\}$ to $u_i$\;
		Collect OUE perturbed vectors with size $|N_i|$\;
	}
	Estimate frequency $\bar{f}_k$ for each node $n_k$ based on OUE\;
	\KwRet PriPL-Tree $\mathcal{T}$\;
\end{algorithm}

\subsubsection{User Allocation}

Given a potentially unbalanced PriPL-Tree $\mathcal{T}$, we explore user allocation strategies. Nodes along each path from the root to the leaves have overlapping intervals and related frequencies, prompting us to allocate users to nodes along paths \cite{wang2023privnud} rather than by layers \cite{cormode2019answering, du2021ahead, wang2019answering}. The allocation process is detailed in lines 5$\sim$12 of Algorithm~\ref{algo:pripl_tree_construction}. Initially, all unallocated users are assigned to the root (line 5). Because the root has a constant frequency of $1$ and requires no estimation, it is allocated no users (line 7) and just passes the user set to its children. For each non-root node $n_k$ that we traversed in level order (line 6), it has inherited the unallocated user set $U_k$ from its parent (line 12), uniformly samples $1/h_k$ of these users for itself, marked as $U_k'$ (lines 10$\sim$11), and passes the remaining users $U_k - U_k'$ to its children (line 12). Here, $h_k$ represents the height of the subtree rooted at $n_k$, ensuring uniform allocation along the longest path. During this process, all child nodes of $n_k$ will receive the same to-be-allocated user group $U_k - U_k'$ since they do not overlap in their intervals.

An example of user allocation is shown in Figure~\ref{fig:tree_construction}, where colored rectangles above each node represent the randomly assigned users. Along each path, such as ``$n_0-n_7-n_8-n_3$'' in Figure~\ref{fig:tree_construction} (a), the total number of users is $N'$. For each user, like the one represented by the yellow rectangle, he will participate in frequency estimations for multiple nodes, such as $\{n_6, n_3, n_4, n_5\}$. The intervals of these nodes do not intersect and collectively cover the entire domain.

\begin{figure}[htb]
\centering
\includegraphics[width=0.48\textwidth]{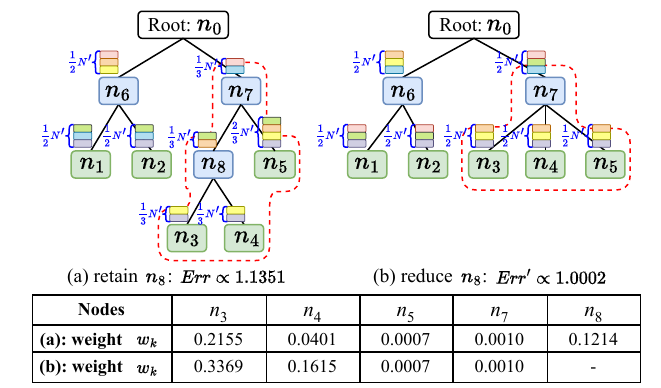}
\caption{An Example of Tree Construction \rm{($N'= N(1-\alpha)$)}}
\label{fig:tree_construction}
\end{figure}

\subsubsection{Tree Structure Construction}

Initially, we can construct a basic balanced binary tree, as illustrated in Figure~\ref{fig:tree_construction} (a). For optimization, we perform \emph{adaptive node reduction}, where the reduction of a node refers to removing it from the tree and linking its child nodes to its parent. For example, reducing node $n_8$ in Figure~\ref{fig:tree_construction} (a) produces the tree in Figure~\ref{fig:tree_construction} (b). We examine all non-leaf nodes through postorder traversal, adaptively determining whether to reduce each node to minimize the average error in response to range queries. 

Specifically, the average error for all possible queries can be evaluated by accumulating the noise error from nodes within query ranges and the PL fitting error from intersecting leaf nodes. Since the PL fitting error primarily depends on estimates from phase 1 (which are determined), we focus on the noise error of nodes, as formalized in the left part of Eq.(\ref{eq:exp_error}). Here, $\mathcal{N}$ denotes all nodes in the PriPL-tree, $w_k$ denotes the probability of node $n_k$ being involved in queries, and $\alpha_k$ denotes the ratio of allocated users for node $n_k$. As the variance of OUE used in node frequency estimation is inversely proportional to its allocated user ratio, we simplify $Err$ to the right part of Eq.(\ref{eq:exp_error}). Assuming all queries arrive with equal probabilities, the weight $w_k$ is calculated by Eq. (\ref{eq:weight}) \cite{qardaji2013understanding}, where $[l_k,r_k]$ ($[l_p, r_p]$) denotes the interval of node $n_k$ (its parent $n_p$), and the total number of possible queries is $(d\!+\!1)d/2$. 
\begin{gather}
	\textstyle
	Err = \sum_{n_k\in \mathcal{N}} w_k \cdot Var(\bar{f}_k) \propto \sum_{n_k\in \mathcal{N}} w_k / \alpha_k 
	\label{eq:exp_error} \\
	w_k = \left(l_k \cdot (d - r_k + 1) - l_p \cdot (d-r_p+1)\right)\big/\left((d+1)d/2\right)
	\label{eq:weight}
\end{gather}

We provide an example in Figure~\ref{fig:tree_construction} to calculate errors and decide whether to reduce $n_8$. Considering that the existence of a non-leaf node only influences user allocation and weight calculation for its ancestor and descendant nodes, as framed by a red dash line in Figure~\ref{fig:tree_construction}, we compare the accumulated error from these nodes rather than the entire tree. Finally, we reduce $n_8$ for a smaller error. 

\subsection{PriPL-Tree Refinement}
\label{subsec:tree_refinement}

To address the frequency inconsistencies outlined in Section~\ref{subsec:workflow}, we perform frequency and slope refinements as follows.

\subsubsection{Frequency Refinement}

As we know, several methods can address these inconsistencies individually, such as Norm-Sub for issue (1) \cite{wang2020locally} and constrained inference for issue (2) \cite{hay2009boosting, qardaji2013understanding}. However, applying one method may lead to the emergence of another inconsistency. AHEAD \cite{du2021ahead} employs these two techniques iteratively to solve issues (1) and (2), but this is inefficient and often inaccurate. To address all three inconsistency issues simultaneously, we devise an optimized constrained inference method comprising two steps: \emph{weighted averaging} and \emph{frequency consistency}, each requiring only a single tree traversal.

In the \emph{weighted averaging} step, we traverse nodes from leaves to root, minimizing frequency variance for each node and addressing inconsistency issue (3). For a leaf node $n_k$ with two frequency estimates, $\hat{f}_k$ (from node frequency estimation) and $\bar{f}_k$ (from the noisy histogram estimation), we update its frequency to $\dot{f}_k\!=\!\theta \hat{f}_k\!+\!(1\!-\!\theta) \bar{f}_k$, where $\theta\!=\!\var(\bar{f}_k)/(\var(\hat{f}_k)\!+\!\var(\bar{f}_k))$, achieving the minimal variance $\var(\dot{f}_k)\!=\!\var(\hat{f}_k)\var(\bar{f}_k)/(\var(\hat{f}_k)\!+\!\var(\bar{f}_k))$. For non-leaf node $n_k$, we similarly update its frequency to $\dot{f}_k$, using its frequency $\hat{f}_k$ and the sum of its child nodes' frequencies $\sum_{n_c\in child(n_k)}\hat{f}_c$. 

In the \emph{frequency consistency} step, we update frequencies from the root to leaves to address inconsistency issues (1) and (2). The root's frequency is fixed at 1, i.e., $\tilde{f}_{\text{root}}\!=\!1$. Given a parent node $n_p$ with optimized frequency $\tilde{f}_p\!\ge\!0$ and its child nodes' to-be-optimized frequencies $\{\dot{f}_{c} | n_c\!\in\!\text{child}(p)\}$, we define the optimization problem in Eq.(\ref{eq:ms_obj}). Let $D_+$ be the set of child nodes with positive updated frequencies, and $D_0$ be those with zero updated frequencies. According to KKT condidtions, the optimal frequency is $\tilde{f}_{c}\!=\dot{f}_{c}\!+\!\left(\tilde{f}_{p}\!-\!{\sum_{n_c\in D_+}\dot{f}_{c}}\right)\big/|D_+|$ if $n_c\!\in\!D_+$, and $\tilde{f}_{c}\!=\!0$ if $n_c\!\in\!D_0$.
\begin{equation}
\begin{gathered}
\textstyle
	\min \sum_{n_c\in child(n_p)}(\tilde{f}_{c}- \dot{f}_{c})^2 
\\ \textstyle
	\text{s.t.} \quad \sum_{n_c\in child(n_p)}\tilde{f}_c = \tilde{f}_p\ \ and \ \ 
	\forall n_c\in child(n_p),\tilde{f}_{c} \geq 0
	\label{eq:ms_obj}
\end{gathered}
\end{equation}

\subsubsection{Slope Refinement}
To meet frequency constraints within nodes, i.e., resolving inconsistency issue (1), we refine slopes based on optimized node frequencies. For node $n_k$ with frequency $\tilde{f}_k$, we must guarantee non-negativity at both endpoints of interval $I_k$. Denoting the endpoints of $I_k$ as $l_k$ and $r_k$, we require $f(l_k) = \tilde{f}_k/|I_k| - \tilde{\beta}_k (|I_k|-1)/2 \geq 0$ and $f(r_k) = \tilde{f}_k/|I_k| + \tilde{\beta}_k (|I_k|-1)/2 \geq 0$. This establishes an effective range $[-C_k, C_k]$ for its slope, where $C_k = 2\tilde{f}_k/|I_k|(|I_k|-1)$. We then update $\tilde{\beta}_k$ using Eq.(\ref{eq:refine_slopes}), minimizing the error $(\tilde{\beta}_k-\hat{\beta}_k)^2$ and ensuring all fitted frequencies in $n_k$ are non-negative.
\begin{equation}
	\tilde{\beta_k}\!=\!\left\{
	\begin{aligned}
		&-C_k, & \hat{\beta_k} < - C_k\\
		&\hat{\beta_k}, &-C_k \le \hat{\beta_k} \le C_k\\
		&C_k, &\hat{\beta_k} > C_k
	\end{aligned}
	\right.
	\label{eq:refine_slopes}
\end{equation}

\subsection{Privacy and Performance Analysis}

In this subsection, we analyze the privacy guarantee, estimation error, and space and time complexity of PriPL-Tree for range queries.

\subsubsection{Privacy Analysis}
During the PriPL-Tree estimation, we collect and estimate frequencies twice from users via SW and OUE, each employing non-overlapping subsets of users and utilizing the full privacy budget $\epsilon$. The PL-fitting, PriPL-Tree construction (excluding node estimation), and refinement are post-processing steps over these collected data. As such, our PriPL-Tree satisfies $\epsilon$-LDP.

\subsubsection{Error Analysis}
\label{subsubsec:error_analysis}
Range query errors from PriPL-Tree arise from two sources: \emph{noise and sampling error} and \emph{PL estimation error}.

\emph{Noise and sampling error} arise from the frequency estimation via LDP mechanisms using a subset of users. In the private PL fitting phase (phase 1), the estimated frequency $\hat{f}_k$ of leaf node $n_k$, derived via the SW mechanism \cite{li2020estimating}, exhibits bias depending on the data distribution and has a square error of $O(\frac{|I_k|}{N\alpha \epsilon^2})$ \cite{duan2022utility}. In the PriPL-Tree construction phase (phase 2), the node frequency $\bar{f}_k$, estimated via the OUE mechanism \cite{wang2017locally}, is unbiased and has variance $\var(\bar{f}_k) = \frac{4e^\epsilon}{ \alpha_k N\cdot(e^{\epsilon}-1)^2} = O(\frac{1}{ \alpha_k N\epsilon^2})$, where $\alpha_k$ represents the proportion of users allocated to node $n_k$. During the PriPL-Tree refinement (phase 3), we aggregate all these frequency estimates to minimize the variance of each node's frequency, leading to error bounds presented in Theorem~\ref{theorem:pripl_tree_node_error}. 
For the precise square error needed for multi-dimensional grid consistency refinement (Section~\ref{subsec:consistency_refinement}), we introduce a numerical method. We treat the updated frequency after refinement as a weighted average of nodes' frequency estimates from the first two phases. By accounting for specific weights, we can derive accurate errors. Due to space constraints, we elaborate on this numerical method, and the proof of Theorem~\ref{theorem:pripl_tree_node_error} in Appendix~C in \cite{wang2024priplt}.

\begin{theorem}
\label{theorem:pripl_tree_node_error}
	Given a PriPL-Tree with at most $K$ segments (corresponding to $K$ leaf nodes), the error variance of frequencies after weight averaging in refinement (phase~3) is $O\left(\frac{K\cdot \log K}{(1-\alpha) \cdot (K+1) \cdot N \cdot \epsilon^2}\right)$ for non-leaf nodes and $O\left(\frac{\log K}{(1-\alpha) \cdot N \cdot \epsilon^2}\right)$ for leaf nodes. After frequency consistency, these variance is capped at $O\left(\frac{K\cdot\log K}{(1-\alpha)\cdot N \cdot \epsilon^2}\right)$.
\end{theorem}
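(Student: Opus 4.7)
The plan is to bound the error variance after each step of the three-phase pipeline and to chain these bounds through the two passes of the refinement phase. First, I would consolidate the per-estimate variances from phases~1 and~2. The SW leaf estimate $\hat{f}_k$ already has square error $O(|I_k|/(\alpha N \epsilon^2))$ from the cited SW analysis. For the OUE estimates in phase~2, I would analyze the user-allocation recurrence along any root-to-leaf path of height $H$: the rule $|U'_k| = |U_k|/h_k$ telescopes along the path so that every non-root node receives exactly $|U'_k| = N(1-\alpha)/H$ sampled users, which, combined with $H = O(\log K)$ for a PriPL-Tree of at most $K$ leaves (adaptive node reduction only decreases height), substitutes into $\var(\bar{f}_k) = O(1/(\alpha_k N \epsilon^2))$ to yield the common bound $\sigma^2 := O(\log K/((1-\alpha)N\epsilon^2))$.

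Next, I would analyze the bottom-up weighted-averaging pass. For a leaf, harmonic averaging of the two independent estimators $\hat{f}_k$ and $\bar{f}_k$ gives $\var(\dot{f}_k) \le \min(\var(\hat{f}_k),\var(\bar{f}_k)) \le \sigma^2$, which matches the stated leaf bound. For a non-leaf node $n_k$ with $n_k$ leaves in its subtree, let $V_k$ denote its post-averaging variance; the recurrence $V_k = \sigma^2 S_k/(\sigma^2 + S_k)$ with $S_k = \sum_{c \in \text{child}(k)} V_c$ admits an inductive bound $V_k \le \sigma^2 \, n_k/(n_k + 1)$. The inductive step uses a Jensen-type inequality $\sum m_i/(m_i + 1) \le B n_k/(n_k + B)$ across the $B$ children with $\sum m_i = n_k$ followed by simplification of the recurrence. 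Specialized at the root with $n_{\text{root}} = K$, this matches the stated non-leaf bound of order $\sigma^2 \cdot K/(K+1)$.

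Third, I would propagate the variances through the top-down frequency-consistency pass, whose KKT update $\tilde{f}_c = \dot{f}_c + (\tilde{f}_p - \sum_c \dot{f}_c)/|D_+|$ is a linear operation in the noisy inputs. A level-wise recurrence of the form $W_\ell \le c\, W_{\ell-1} + O(\sigma^2)$, where the constant $c$ collects the parent and sibling coefficients, iterates over the $O(\log K)$ depths to give $W_{\log K} = O(c^{\log K}\sigma^2)$. Arranging constants so that $c^{\log K} = O(K)$ produces the overall cap $O(K\sigma^2) = O(K \log K/((1-\alpha)N\epsilon^2))$, as claimed.

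The hardest part is this final step: after the consistency update, sibling nodes share both the common parent adjustment $\tilde{f}_p$ and the joint sum $\sum_c \dot{f}_c$, inducing nontrivial covariances that a naive triangle-inequality argument would overcharge, potentially yielding a stronger-than-linear blow-up in $K$. I would handle this either by explicitly tracking the full per-level covariance matrix in the style of classical constrained-inference analyses, or by exploiting that the KKT projection onto the feasible non-negative set is non-expansive in a suitable weighted inner product, giving constant-factor rather than quadratic amplification per level. A secondary technicality is that adaptive node reduction can produce unbalanced, non-binary trees; since reduction only occurs when it decreases the expected error, the resulting tree still satisfies $H = O(\log K)$ and each surviving non-root node still receives $\Omega(N(1-\alpha)/\log K)$ users, so the preceding analysis carries through.
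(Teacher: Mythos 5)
Your three-stage skeleton (path-based allocation giving per-node OUE variance $\sigma^2 = O(\log K/((1-\alpha)N\epsilon^2))$, a harmonic-mean recurrence for the bottom-up weighted averaging, and a per-level amplification argument compounded over height $\log_2 K$ for the top-down consistency pass) is exactly the structure of the paper's proof. However, two steps do not hold as written. First, your induction hypothesis $V_k \le \sigma^2\, n_k/(n_k+1)$, with $n_k$ the number of leaves under $n_k$, fails at its base case: applied to a leaf child it requires $V_{\mathrm{leaf}} \le \sigma^2/2$, which would need $\var(\hat{f}_k) \le \var(\bar{f}_k)$. But the SW estimate is biased with distribution-dependent square error $O(|I_k|/(\alpha N \epsilon^2))$, which can far exceed the OUE variance, and then the harmonic mean only gives $V_{\mathrm{leaf}} \le \var(\bar{f}_k) \le \sigma^2$. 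The repair is to abandon the leaf-count induction (and the Jensen step, which becomes unnecessary): the paper instead observes that path-based allocation gives a parent at most, and each of its children at least, $|U_p|/h_p$ users, so every child's post-averaging variance is bounded by the parent's own OUE variance; with $b_k \le K$ independent children, the harmonic-mean recurrence then gives directly $\var(\dot{f}_p) \le \frac{b_k}{b_k+1}\var(\bar{f}_p) \le \frac{K}{K+1}\sigma^2$.

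Second, in the consistency pass you correctly diagnose the crux --- a triangle inequality on standard deviations charges a factor of about $4$ per level, hence $K^2$ overall --- but you leave the required constant $c \le 2$ as ``arranging constants,'' which is precisely what must be proved. The paper obtains it by expanding the variance of the KKT update $\tilde{f}_c = \frac{|D_+|-1}{|D_+|}\dot{f}_c + \frac{1}{|D_+|}\tilde{f}_p - \frac{1}{|D_+|}\sum_{s\ne c}\dot{f}_s$: sibling estimates come from disjoint subtrees and are independent, so their variances (not standard deviations) add with weights $1/|D_+|^2$; Cauchy--Schwarz is applied only to the single parent--child covariance; the parent--sibling-sum covariance term enters with a negative sign and is discarded. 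Bounding all terms by the running maximum variance yields the per-level factor $\frac{|D_+|^2+|D_+|-1}{|D_+|^2} \le \frac{|D_+|+1}{|D_+|} \le 2$, hence $2^{\log_2 K} = K$ after compounding. This is your first proposed remedy (explicit covariance tracking) in a crude, max-based form, and it is the one that works; the non-expansiveness of the KKT projection alone cannot close the gap, because the feasible set itself depends on the noisy $\tilde{f}_p$, so the parent's noise is injected at every level and the map is genuinely variance-expanding by a factor up to $2$.
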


\emph{PL estimation error} arises when estimating the frequency of the sub-range $[l_{\text{sub}}, r_{\text{sub}}]$ within a leaf node $n_k$ under a linear assumption. For each PL estimated frequency $f(v)$ of value $v$ in this subrange, its square error can be approximated by $\E((f(v)-f_v)^2) = \E((f(v)-\hat{f}_v^{\text{H}}) + (\hat{f}_v^{\text{H}}-f_v))^2 \le 2 (\E(f(v)-\hat{f}_v^{\text{H}})^2 + \E(\hat{f}_v^{\text{H}}-f_v)^2)$. The first term represents the square error of the PL function fitting the noisy histogram, while the second term represents the noise error of the noisy histogram. The magnitude of this error depends on the noisy histogram's distortion degree, the segment number, and the actual data distribution. It tends to be small for smooth distributions and large for jagged distributions. Ultimately, the total error of $Q([l_{\text{sub}}, r_{\text{sub}}])$ accumulates the error of values in it, leading to a result proportional to the range size square $(r_{\text{sub}} - l_{\text{sub}} + 1)^2$.

\subsubsection{Space and Time Complexity Analysis}
\label{subsec:time_complexity}
Assuming PriPL-Tree's maximum segment number, i.e., the number of leaf nodes, is $K$, the space complexity includes the size of PriPL-Tree, $O(K)$, and the size of parameter matrices $\mathbf{X}$ and $\mathbf{A}$ for private PL fitting, $O(K\cdot d)$, which is $O(K\cdot d)$ in total. 
The time complexity of PriPL-Tree involves two parts --- the construction time and the query time. Overall, the construction time complexity mainly arises from private user data aggregation, frequency estimation, and private PL fitting during phases 1 and 2, totaling $O(N\cdot K + d\cdot T + d\cdot\log d\cdot K^3)$. Here, $T$ denotes the number of iterations for EM and EMS in the SW mechanism during distribution estimation in phase 1. The query time complexity, proportional to the tree height, is $O(\log_2 K)$. Due to space limitations, we provide a detailed analysis of the time complexity for each phase in Appendix~E.1 of \cite{wang2024priplt}.

\section{Extension with Adaptive Grids for Multi-Dimensional Queries}
\label{sec:adaptive_grid}

Building on insights from existing works summarized in Section~\ref{subsec:summary}, we combine 1-D PriPL-Trees and 2-D grids to handle multi-dimensional range queries.
In contrast to uniformly constructed 2-D grids proposed by HDG \cite{yang2020answering}, we introduce data-aware adaptive grids. These grids leverage the accurate 1-D marginal distributions from PriPL-Trees to dynamically partition the entire domain into dense or sparse cells, adapting to data distribution density. As a result, they offer a more precise representation of 2-D data distributions and more accurate range query responses.

In this section, we first outline the workflow for multi-dimensional range queries and then present the core methods of \emph{adaptive grid partitioning} and \emph{consistency refinement}. Due to space limitations, we analyze the estimation error and runtime complexity in Appendix~D and Appendix~E.2 in our full version of paper \cite{wang2024priplt}, respectively.

\subsection{Workflow of Multi-dimensional Cases}
\label{subsec:workflow_md}
%Leveraging the idea of low-dimensional estimation for high-dimensional scenarios \cite{yang2020answering, wang2023privnud}, we extend PriPL-Tree for multi-dimensional range queries. A workflow diagram is provided in Appendix~\ref{appendix:md_detail} of \cite{wang2024priplt}.

We list the workflow of multi-dimensional range queries below and provide a figure illustration in Appendix~B.1 of \cite{wang2024priplt}.

\textbf{Step 1: Estimating $m$ 1-D Histograms using PriPL-Tree.} 
Initially, we allocate half of the users to estimate 1-D marginal distributions.
For each attribute $A_i$, we employ $N/2m$ users to estimate the PriPL-Tree and generate histograms $\tilde{\mathbf{F}}_i$ using the PL function within the PriPL-Tree. These histograms enable us to depict each attribute's underlying data distribution effectively.

\textbf{Step 2: Estimating $\binom{m}{2}$ 2-D Adaptive Grids.} 
For each attribute pair $\langle A_i, A_j \rangle$ ($i,j \in [m]$ and $i \neq j$), we construct a 2-D grid, as presented in Section~\ref{subsec:adap_grid}. During each grid's construction, we partition the domain of each dimension into non-uniform intervals based on the marginal distribution, thereby forming 2-D grids that are denser in high-frequency regions and sparser in low-frequency regions, as depicted in Figure~\ref{fig:adaptive_grids}. After construction, we assign $N/2\binom{m}{2}$ users to estimate the frequencies of cells in each grid using the OUE mechanism with a privacy budget of $\epsilon$.

\textbf{Step 3: Refining Consistency Between Grids and PriPL-Trees.} 
Due to the noise introduced by the LDP mechanism, frequency inconsistencies for one attribute may arise among grids and PriPL-Trees, and some frequencies could be negative. To address these issues, we propose a post-processing method, detailed in Section~\ref{subsec:consistency_refinement}, optimizing the frequencies of grids and adjusting both the frequencies and slopes of nodes in PriPL-Trees.

\textbf{Step 4: Answering Range Queries.}
For 1-D queries, we can directly utilize PriPL-Trees to respond. 
For a 2-D query involving attributes $\langle A_i, A_j \rangle$, we answer it by a response matrix with $d_i \times d_j$ values. This matrix represents the 2-D data distribution and is estimated from 1-D histograms and 2-D adaptive grids using the maximum entropy algorithm or weighted update as described in \cite{yang2020answering, wang2023privnud}. The critical difference between us and \cite{yang2020answering, wang2023privnud} is that they enforce the frequency sum of a sub-region equal to match its corresponding 1-D cell, while we enforce each frequency in the marginal distribution of the matrix to match the 1-D histogram. This fine-grained consistency fully exploits the slope information in PriPL-Trees, yielding a more accurate distribution. 
Further, for $\lambda$-D ($\lambda > 2$) queries, we estimate with a $2^\lambda$ response matrix based on associated $2$-D queries as in \cite{yang2020answering, wang2023privnud}. 

\begin{figure}[htb]
\centering
\includegraphics[width=0.4\textwidth]{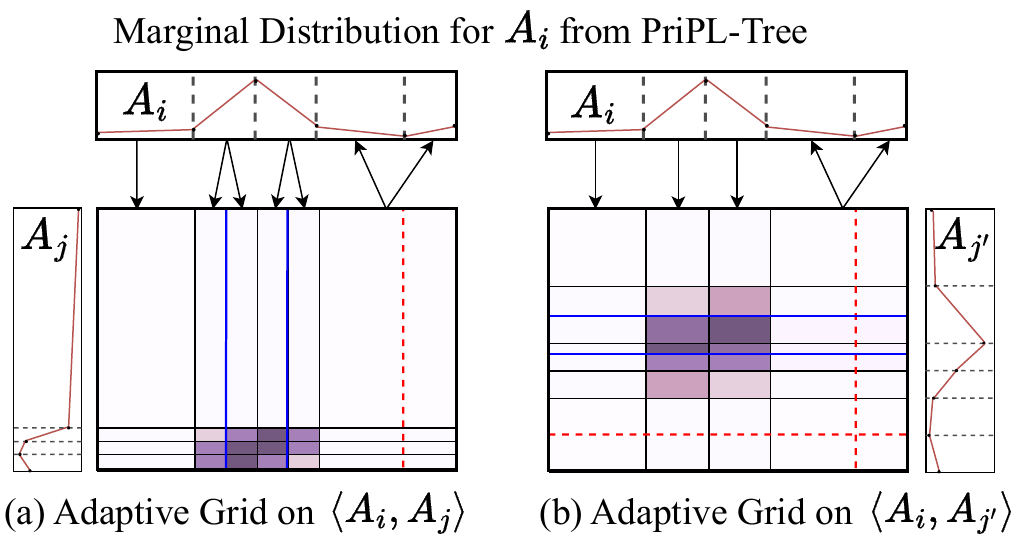}
\caption{Examples of Adaptive Grids \rm{(Black solid lines represent partitions inherited from PriPL-Trees; blue solid lines indicate newly added partitions; red dashed lines indicate deleted partitions.)}}
\label{fig:adaptive_grids}
\end{figure}

\subsection{Adaptive 2-D Grid Partitioning}
\label{subsec:adap_grid}

The 2-D grid depicts the underlying data distribution by assuming uniform frequency distribution within each cell. To enhance its data depiction capability, we partition the grid based on data density. In data-dense areas, densely partitioned cells provide a more accurate distribution representation, reducing reliance on uniform assumptions. Conversely, in data-sparse areas, low cell frequencies may be overwhelmed by OUE noise, diminishing their effectiveness. Therefore, in such areas, sparsely partitioned cells covering larger areas with relatively higher frequencies are preferable. 
Building on this concept, we introduce adaptive partitioning for 2-D grids, as illustrated in Figure~\ref{fig:adaptive_grids}. For the attribute pair $\langle A_i, A_j \rangle$, we initially partition its domain according to the leaf node partitions in the PriPL-Trees, creating an initial grid $G$ with $g_i \times g_j$ cells. We then dynamically adjust the partition lines, adding lines in high-frequency regions and removing them in low-frequency regions along each dimension, as indicated by the blue and red lines in Figure~\ref{fig:adaptive_grids}. The goal is to minimize the squared error $Err_G$ for range queries within the grid while ensuring each cell’s frequency exceeds the standard deviation of the OUE noise, $\sqrt{2\sigma^2 \cdot \binom{m}{2}}$.

We introduce the squared error $Err_G$ as below and detail the algorithm for adaptive partitioning in Appendix~B.2 in \cite{wang2024priplt}. For a cell $c$ in grid $G$, when fully covered by a query rectangle, it incurs a squared noise and sampling error of $2\sigma^2 \binom{m}{2}$. If the cell intersects with a query rectangle, it incurs an estimation error proportional to the intersecting frequency $f_c$, which can estimated by the product of its marginal frequencies. Let $\pi_{i}(\cdot)$ denotes the projection of the cell index on marginal attribute $A_i$, the squared estimation error is thus $\eta\cdot (\tilde{f}_{\pi_i(c)} \cdot \tilde{f}_{\pi_j(c)})^2$, with $\eta$ as a constant. 
 For a range query $Q$ selecting a portion $r$ of the area of a grid with $g_i \times g_j$ cells, the total squared error combines noise and sampling errors from $r g_i g_j$ cells and estimation errors proportional to $r$ times the square of all cells' frequencies, totaling $2 rg_i g_j \sigma^2 \binom{m}{2} + r\eta \sum_{c \in G}(\tilde{f}_{\pi_i(c)} \cdot \tilde{f}_{\pi_j(c)})^2$. Our experiments demonstrate that an $\eta$ value of 0.04 provides accurate estimations across various datasets.

\subsection{Consistency Refinement}
\label{subsec:consistency_refinement}

For partitions of $A_i$ between the PriPL-Tree and the grid for $\langle A_i, A_j \rangle$, we observe several one-to-many relationships, as shown by arrows in Figure~\ref{fig:adaptive_grids}. We treat each one-to-many relationship as a tree and can apply our optimized constrained inference method, detailed in Section~\ref{subsec:tree_refinement}, to ensure their frequency consistency and non-negativity. 
In a global view, each attribute $A_i$ is linked to $m-1$ grids, and each grid on $\langle A_i,A_j \rangle$ associates with two attributes. Applying the above method straightforwardly to update $A_i$ and each related grid sequentially is challenging for in maintaining global consistency. For instance, resolving consistency between $A_i$ and $\langle A_i, A_{j'} \rangle$ might reintroduce inconsistencies between $A_i$ and $\langle A_i, A_j \rangle$ that were previously resolved. Therefore, we first update all 1-D attributes, i.e., the leaf node frequencies in PriPL-Trees, by applying improved constrained inference sequentially across their $m-1$ corresponding grids. Using these updated 1-D frequencies, we then update the grids with the frequency consistency operation (i.e., the second step in the improved constrained inference). 
When multiple marginal cells in grids correspond to a single leaf node, we directly use the frequency of this leaf node to update the relevant cells. Conversely, if a marginal cell in grids corresponds to multiple leaf nodes, we update the cells' frequencies using the sum of frequencies from these leaf nodes. For a grid on $\langle A_i, A_j \rangle$, to prevent reintroducing inconsistencies with $A_i$ after aligning with $A_j$, we alternately update it with $A_i$ and $A_j$ until convergence.

Moreover, using the updated frequencies of each attribute, we can further refine the PriPL-Tree to enhance accuracy for 1-D range queries. The leaf node slope can be updated based on these frequencies as described in Section~\ref{subsec:tree_refinement}, and non-leaf node frequencies can be updated by aggregating the frequencies of their child nodes.

\section{Evaluation}
\label{sec:experiments}

In this section, we evaluate the performance of PriPL-Tree and its extension for both 1-D and multi-D range queries.

\begin{figure*}
	\centering
	\includegraphics[width=0.35\textwidth]{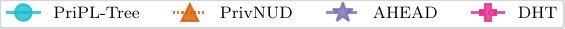}
	\vspace{-0.1in} % shorten the sapce between legend and figures.
	\\
	\subfloat[Gaussian]{
		\includegraphics[width=0.24\textwidth]{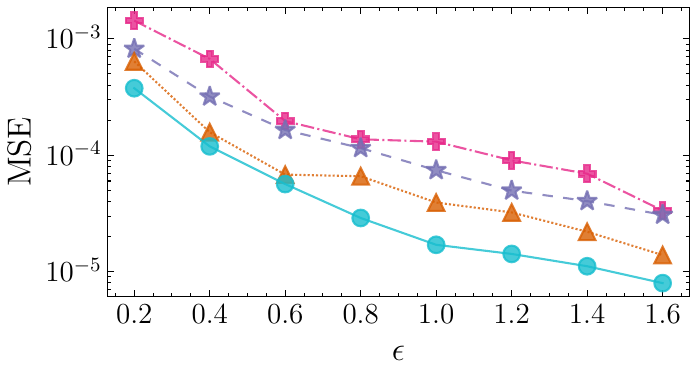}
	}
	\subfloat[MixGaussian]{
		\includegraphics[width=0.24\textwidth]{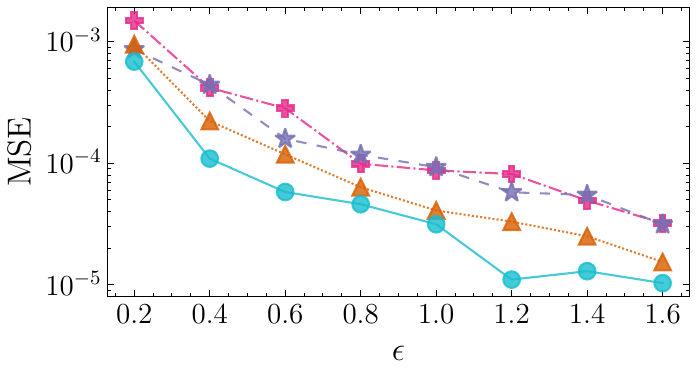}
	}
	\subfloat[Cauchy]{
		\includegraphics[width=0.24\textwidth]{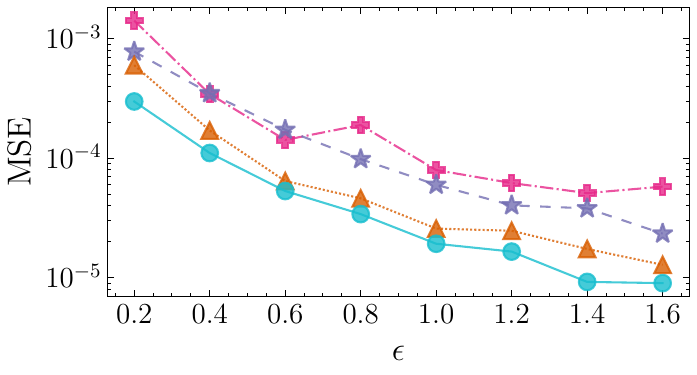}
	}
	\subfloat[Zipf]{
		\includegraphics[width=0.24\textwidth]{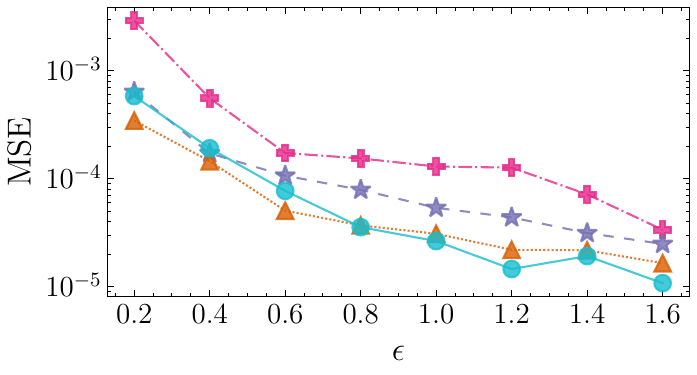}
	}
	\vspace{-0.05in} % shorten the sapce between figures.
	\\
	\subfloat[Adult]{
		\includegraphics[width=0.24\textwidth]{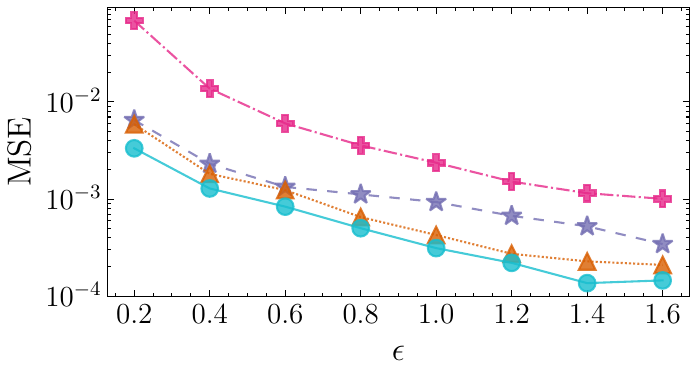}
	}
	\subfloat[Loan]{
		\includegraphics[width=0.24\textwidth]{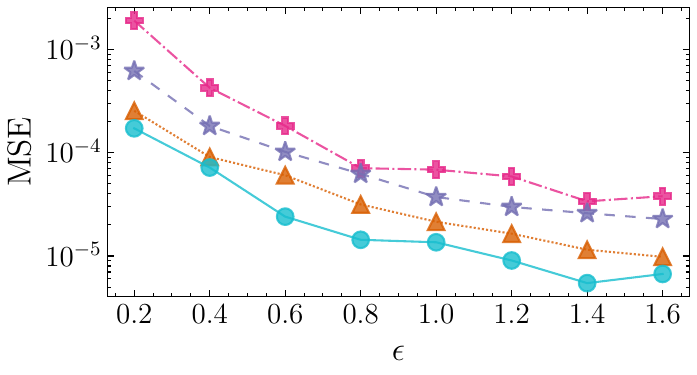}
	}
	\subfloat[Salary]{
		\includegraphics[width=0.24\textwidth]{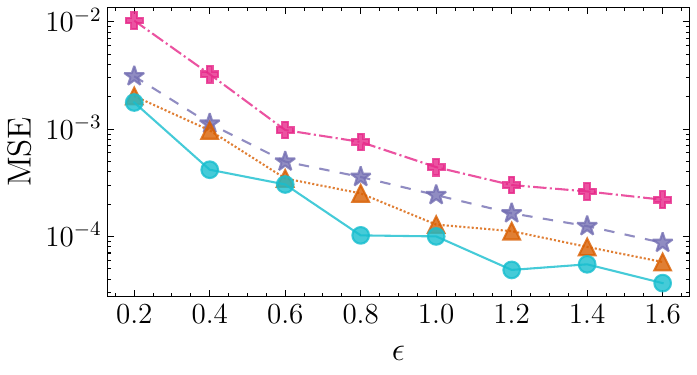}
	}
	\subfloat[Financial]{
		\includegraphics[width=0.24\textwidth]{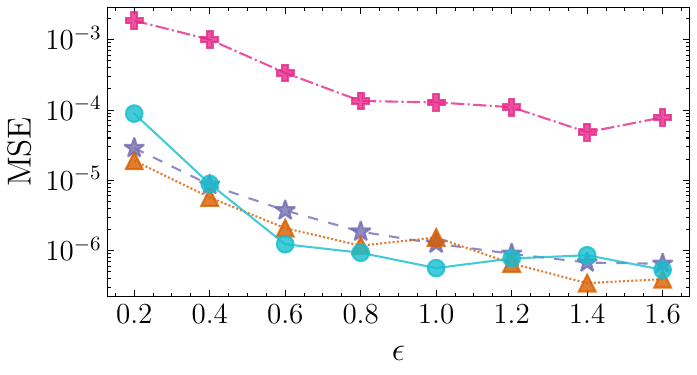}
	}
	\caption{Evaluation for 1-D Range Queries with Varying Privacy Budget $\epsilon$}
	\label{fig:evaluation_1d_epsilon}
\end{figure*}
\begin{figure*}
	\flushleft
	\includegraphics[width=0.35\textwidth]{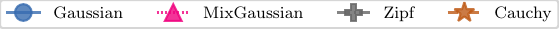}
	\vspace{-0.1in} % shorten the sapce between legend and figures.
	\hspace{6em}
	\includegraphics[width=0.35\textwidth]{figures/experiments/legend_1d.pdf}
	\\
	\centering
	\subfloat[User Allocatin Ratio $\alpha$]{
	\includegraphics[width=0.24\textwidth]{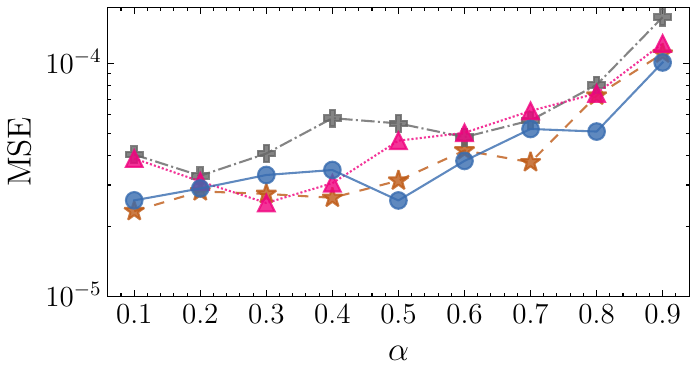}
	}
	\subfloat[Domain size $d$ on Gaussian]{
	\includegraphics[width=0.24\textwidth]{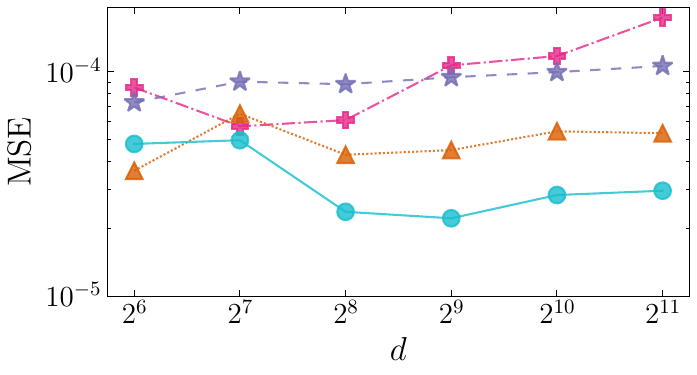}
	}
	\subfloat[Query Volume $Vol$ on Gaussian]{
	\includegraphics[width=0.24\textwidth]{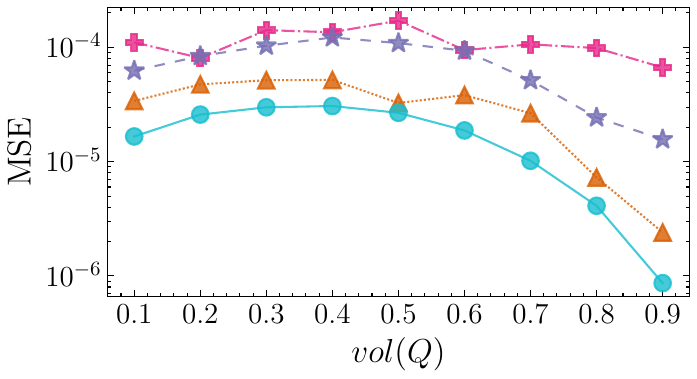}
	}
	\subfloat[User number $N$ on Gaussian]{
	\includegraphics[width=0.24\textwidth]{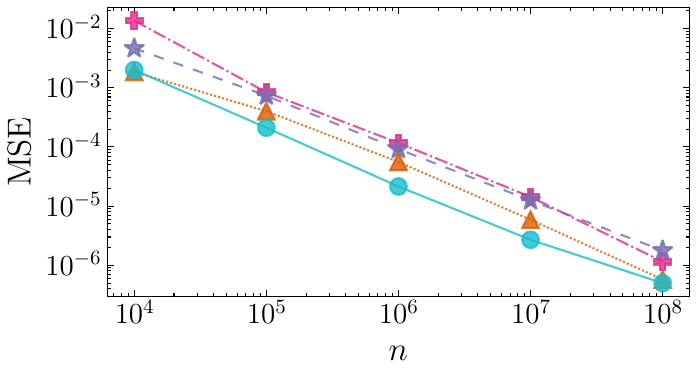}
	}
	\caption{Evaluation for 1-D Range Queries with Varying Parameters}
	\label{fig:evaluation_1d_params}
\end{figure*}

\subsection{Experimental Setting}

\textbf{Competitors.} We compare our methods with state-of-the-art techniques for range queries in LDP, including DHT \cite{cormode2019answering}, AHEAD \cite{du2021ahead}, PrivNUD \cite{wang2023privnud} for 1-D cases, and HDG \cite{yang2020answering}, AHEAD, PrivNUD, PRISM \cite{wang2022prism} for multi-D cases. We exclude hierarchical tree HH \cite{cormode2019answering} and HIO \cite{wang2019answering} baselines, as they have been demonstrated inferior to DHT and AHEAD in 1-D cases \cite{cormode2019answering, du2021ahead} and to HDG in multi-D cases \cite{yang2020answering}. For a fair comparison, we implement these methods using the codes and parameters from their original papers.

\textbf{Datasets.} We use four synthetic (Gaussian, MixGaussian, Cauchy, Zipf) and four real-world datasets (Adult \cite{adult}, Loan \cite{loan}, Salary \cite{salary}, and Financial \cite{financial}), each with 5 dimensions. On each dimension, the synthetic datasets Gaussian, Cauchy, and Zipf sample data from $Gaussian(0,1)$, $Cauchy(0,1)$, and $Zipf(1.1)$ distributions, respectively. The MixGaussian dataset follows a mixture of $Gaussian(0,0.5)$ and $Gaussian(3,0.8)$. For applying LDP mechanisms for frequency estimation and aligning with the existing works \cite{cormode2019answering, du2021ahead, yang2020answering, wang2022prism, wang2023privnud}, all datasets are bucketized into domain [1024] for 1-D evaluations and [256] for multi-D evaluations, except for some attributes in the real-world dataset with discrete values and original domain sizes less than our specified ones. We provide statistics of these datasets in Table~\ref{tab:datasets} where the mean and variance are for the default attribute in 1-D scenarios. A more detailed description is provided in Appendix~F of \cite{wang2024priplt}.

\textbf{Metrics.} We employ the mean square error (MSE) \cite{du2021ahead, wang2023privnud} to quantify the deviation between the estimated ($\tilde{f}_Q$) and actual ($f_Q$) answers to range queries $\mathbf{Q}$, denotes as $MSE(\mathbf{Q})=\sum_{Q\in \mathbf{Q}} (f_Q-\tilde{f}_Q)^2 / |\mathbf{Q}|$. During each evaluation, we test 1,000 randomly generated range queries with a specified query volume and report the final MSE by an average of 20 repeats of the experiment.

\textbf{Default Settings.} By default, we use a user allocation ratio $\alpha=0.2$ and an acceleration granularity factor $\phi=127$ for PriPL-Trees. For experiments, we set the privacy budget $\epsilon=0.8$ and the query volume $vol(Q) = 0.5$, where query volume represents the ratio of the query range size to the domain size on each attribute. All experiments use Python 3.11 on a Linux server with an Intel\circled{R} Xeon\circled{R} Gold 5218 CPU (2.3GHz) and 96GB of memory.

\subsection{1-D Experimental Results}
\label{subsec:1d_exp}

In this subsection, we evaluate the performance of PriPL-Tree and its competitors (DHT, AHEAD, and PrivNUD) for 1-D range queries and analyze the impact of data and query parameters on them.

\begin{table}[t]
\Small 
\setlength\tabcolsep{1.8pt} %设置表格边距
\caption{Summary of Datasets}
\centering
\begin{tabular}{|c|c|c|c|c|c|c|c|c|c|}
\hline
\textbf{Dataset} & \makecell[c]{\textbf{U.\#$^{\mathrm{a}}$}} & \makecell[c]{\textbf{L.\#$^{\mathrm{b}}$}} & \makecell[c]{\textbf{Mean}} & \makecell[c]{\textbf{Var.}} & \textbf{Dataset} & \makecell[c]{\textbf{U.\#}} & \makecell[c]{\textbf{L.\#}} & \makecell[c]{\textbf{Mean}} & \makecell[c]{\textbf{Var.}} \\ 
\hline
Gaussian & $10^6$ & 0  & 155.0 & 775.34 & Adult  & 32,561 & 4 & 30.36 & 336.88\\
\hline
MixGaussian & $10^6$ & 0 & 135.12 & 2516.62 & Loan & 148,045 & 3 & 48.67 & 1620.19\\ 
\hline
Cauchy & $10^6$ & 5 & 159.37 & 609.50 & Salary & 2,013,799 & 2 & 33.59 & 515.88\\ 
\hline
Zipf & $10^6$ & 5 & 24.40 & 2517.47 & Financial & 6,362,620 & 5 & 0.23 & 2.65\\ 
\hline
\multicolumn{10}{l}{$^{\mathrm{a}}$ U.\#: The number of users (i.e., samples).}\\
\multicolumn{10}{l}{$^{\mathrm{b}}$ L.\#: The number of leptokurtic attributes with a kurtosis exceeding 3.}
\end{tabular}
\label{tab:datasets}
\end{table}

\textbf{Overall Performance.}
We evaluate PriPL-Tree against three competitors across varying privacy budgets on synthetic and real-world datasets in Figure~\ref{fig:evaluation_1d_epsilon}. Our PriPL-Tree consistently outperforms competitors on most continuous distributions, such as Gaussian, MixGaussian, Cauchy, and those in the Adult, Loan, and Salary datasets. It significantly reduces MSEs by about 12.1\% to 66.6\%, averaging a 37.4\% reduction across different privacy settings. In highly leptokurtic distributions, like those in Zipf and Financial datasets, PriPL-Tree matches the performance of the leading competitor, PrivNUD. For these distributions, where a few values have significant frequencies, PriPL-Tree almost degenerates into an optimized hierarchical tree, similar to PrivNUD. It segregates high-frequency buckets into individual leaf nodes and merges low-frequencies into a single node, with both frequency and slope nearing zero.

\textbf{Impact of User Allocation Ratio $\alpha$.} In Figure~\ref{fig:evaluation_1d_params} (a), we assess the impact of the user allocation ratio $\alpha$ used in phase 1 of PriPL-Tree across four synthetic datasets. MSE remains stable for $\alpha\!\le\!0.5$ and slightly increases for $\alpha\!>\!0.5$, suggesting that fewer users are adequate for accurate PL fitting. Thus, we empirically set $\alpha\!=\!0.2$.

\textbf{Impact of Domain Size $d$.} In Figure~\ref{fig:evaluation_1d_params} (b), we explore the impact of domain size $d$ on the 1-D Gaussian dataset, demonstrating PriPL-Tree's superiority, particularly in large domains. Notably, PriPL-Tree performs less effectively in very small domains, where coarse bucketizing reduces the histogram's accuracy in representing distributions, resulting in suboptimal PL functions and inferior outcomes.

\textbf{Impact of Query Volume $vol(Q)$.} In Figure~\ref{fig:evaluation_1d_params} (c), we assess the impact of query volume on a Gaussian dataset. PriPL-Tree consistently records the lowest MSE. All methods display an MSE that increases initially and then decreases, peaking around $vol(Q)=0.5$. As detailed in Section~\ref{subsubsec:error_analysis}, the error for range queries correlates with the query range size and the frequency of intersections between query ranges and leaf nodes. Below $vol(Q)=0.5$, MSE increases primarily due to the expanding query range size. Above $vol(Q)=0.5$, MSE decreases as intersections occur more frequently at the domain's margins, where frequencies are lower and nearing 0, resulting in fewer PL fitting errors.

\textbf{Impact of User Number $N$.} In Figure~\ref{fig:evaluation_1d_params} (d), we examine the impact of user numbers with a Gaussian dataset. MSEs decrease as user numbers increase, aligning with the law of large numbers. However, PriPL-Tree's advantage diminishes with very small (e.g., $10^4$) or very large (e.g., $10^8$) user numbers. Insufficient users introduce excessive noise in LDP estimation, compromising PL parameter accuracy. Conversely, a larger user pool mitigates LDP noise, enabling even simple hierarchical trees to provide accurate estimates.

\textbf{Runtime Comparison:} In Figure~\ref{fig:runtime_1d}, we compare the runtime of our method with competitors across various datasets, with all methods implemented in Python for consistency. Our construction time is generally under half a minute. On average, our construction time of 27.2s is shorter than the average of our three competitors of 29.8s. This advantage is mainly due to the concise tree structure of PriPL-Tree, which utilizes fewer nodes. Additionally, our average query time is significantly lower, at around 50$\mu$s, while our competitors' times remain in the millisecond range.

\begin{figure}
	\centering
	\includegraphics[width=0.35\textwidth]{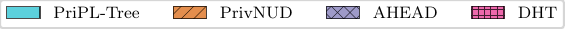}
	\vspace{-0.1in} % shorten the sapce between legend and figures.
	\\
	\subfloat[Construction Time]{
	\includegraphics[width=0.23\textwidth]{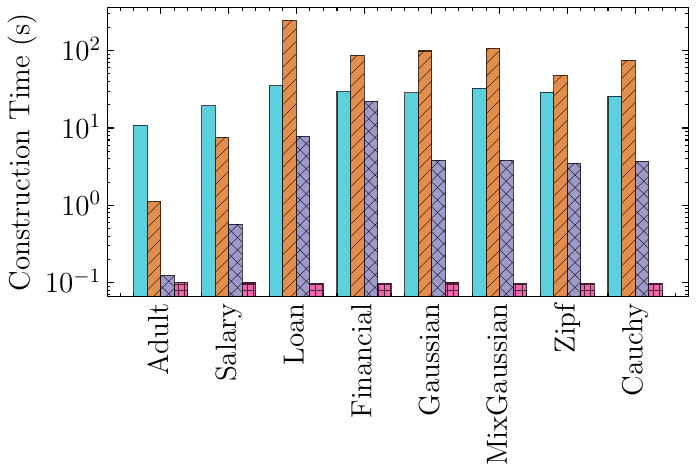}
	}
	\subfloat[Query Time]{
	\includegraphics[width=0.23\textwidth]{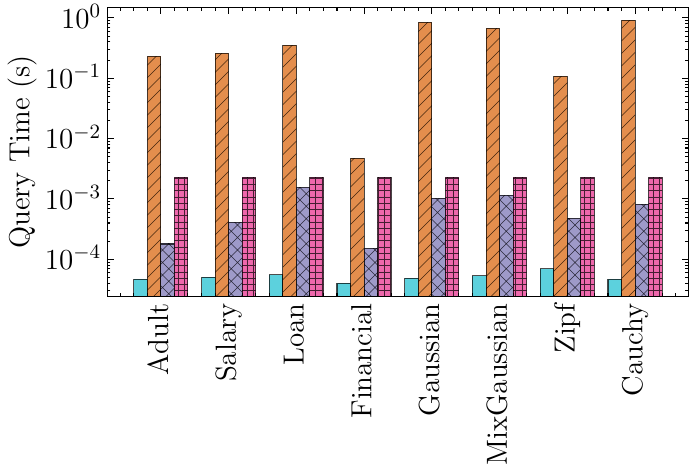}
	}
	\caption{Runtime Evaluation under Different Datasets}
	\label{fig:runtime_1d}
\end{figure}

\begin{figure*}
	\centering
	\includegraphics[width=0.45\textwidth]{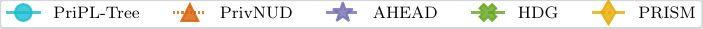}
	\vspace{-0.1in} % shorten the sapce between legend and figures.
	\\
	\subfloat[Gaussian]{
	\includegraphics[width=0.24\textwidth]{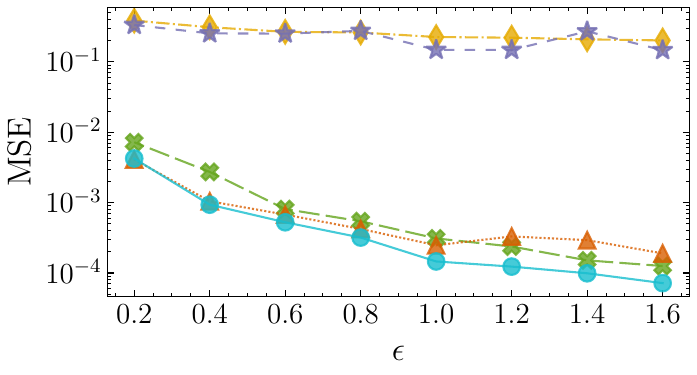}
	}
	\subfloat[MixGaussian]{
	\includegraphics[width=0.24\textwidth]{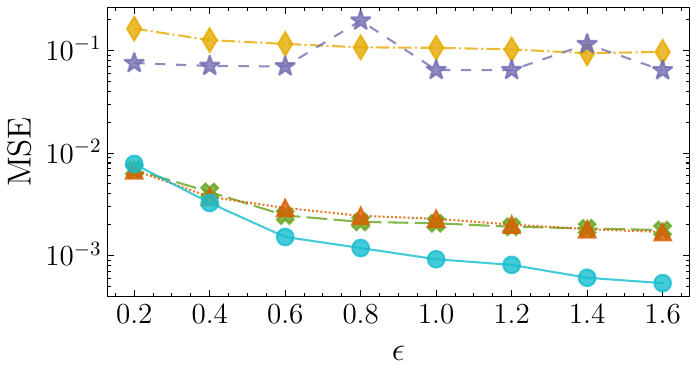}
	}
	\subfloat[Cauchy]{
	\includegraphics[width=0.24\textwidth]{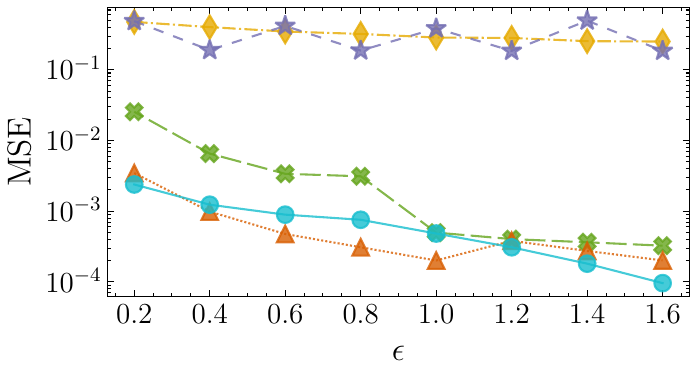}
	}
	\subfloat[Zipf]{
	\includegraphics[width=0.24\textwidth]{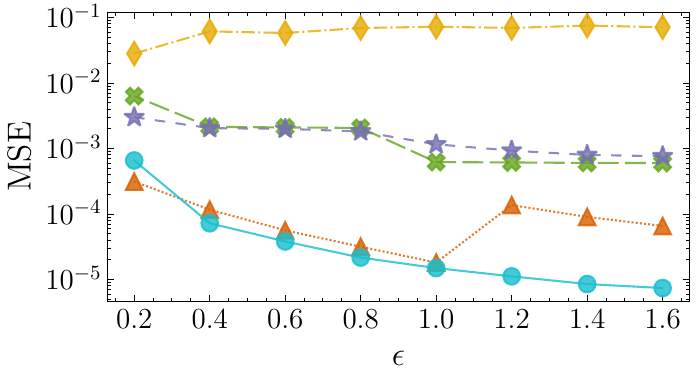}
	}
	\vspace{-0.05in} % shorten the sapce between figures.
	\\
	\subfloat[Adult]{
	\includegraphics[width=0.24\textwidth]{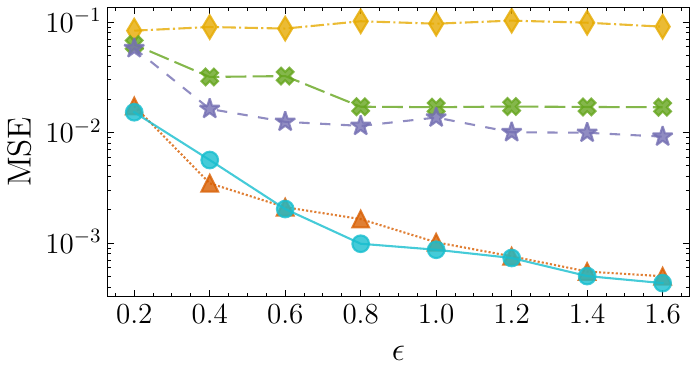}
	}
	\subfloat[Loan]{
	\includegraphics[width=0.24\textwidth]{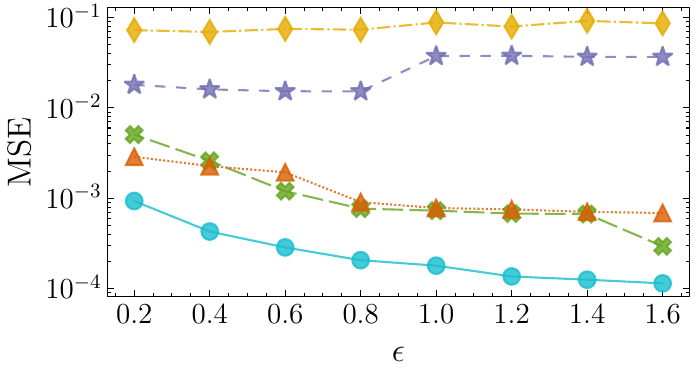}
	}
	\subfloat[Salary]{
	\includegraphics[width=0.24\textwidth]{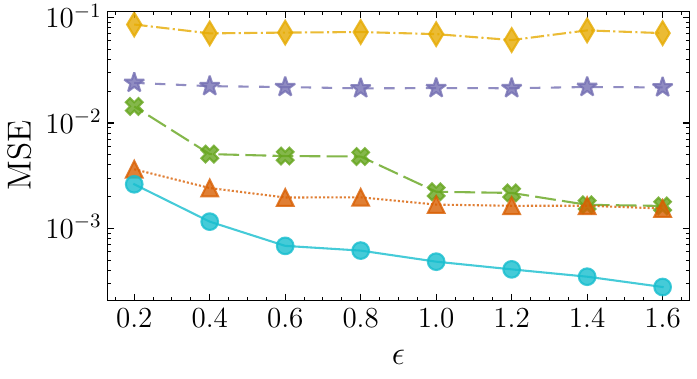}
	}
	\subfloat[Financial]{
	\includegraphics[width=0.24\textwidth]{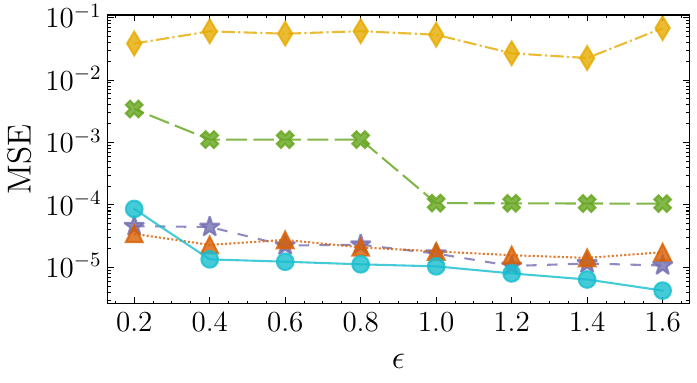}
	}
	\caption{Evaluation for 2-D Range Queries on 5-D Datasets with Varying Privacy Budget $\epsilon$}
	\label{fig:evaluation_md_epsilon}
\end{figure*}

\begin{figure*}
	\centering
	\includegraphics[width=0.45\textwidth]{figures/experiments/legend_md.pdf}
	\vspace{-0.1in} % shorten the sapce between legend and figures.
	\\
	\subfloat[Data dimension $m$ on Gaussian]{
	\includegraphics[width=0.24\textwidth]{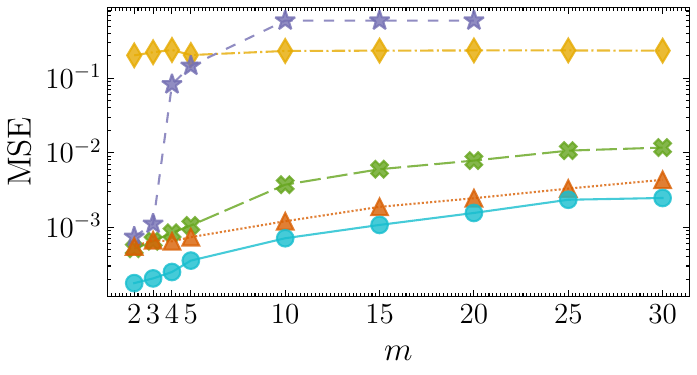}
	}
	\subfloat[Query Dimension $\lambda$ on Gaussian]{
	\includegraphics[width=0.24\textwidth]{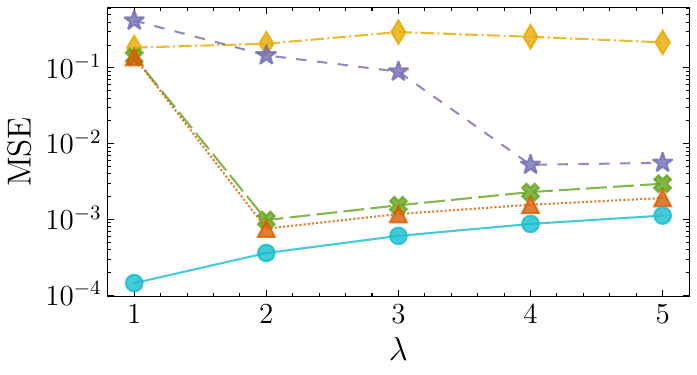}
	}
	\subfloat[Covariance $Cov$ on Gaussian]{
	\includegraphics[width=0.24\textwidth]{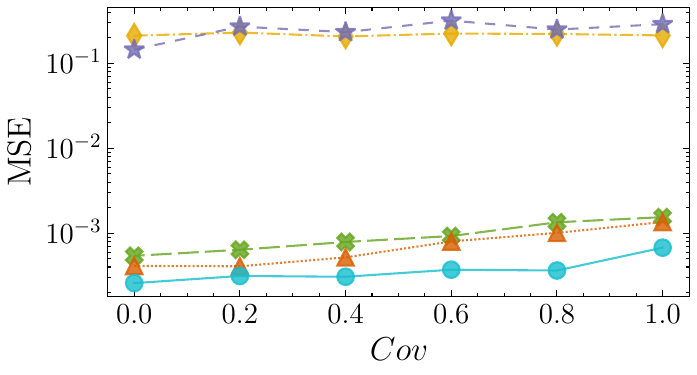}
	}
	\subfloat[Covariance $Cov$ on MixGaussian]{
	\includegraphics[width=0.24\textwidth]{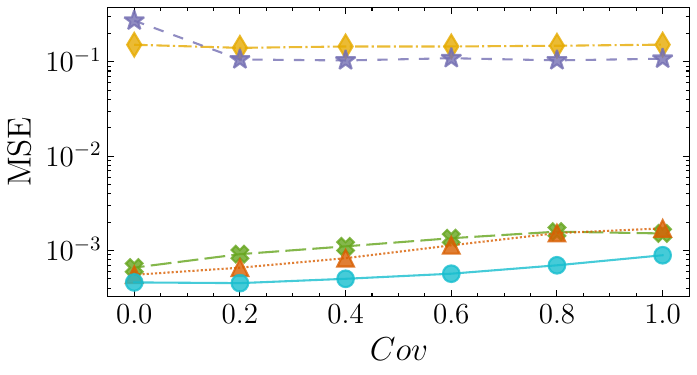}
	}
	\caption{Evaluation for Multi-D Range Queries}
	\label{fig:evaluation_md_params}
\end{figure*}

\subsection{Multi-D Experimental Results}
For multi-dimensional scenarios, we evaluate the performance of our method, PriPL-Tree with adaptive grids, against four competitors (HDG, AHEAD, PrivNUD, and PRISM). Typically, the standard experimental setup is evaluating 2-D queries on 5-D datasets, as all high-dimensional query results are derived from these 2-D queries.

\textbf{Overall Performance.} We evaluate the PriPL-Tree method against competitors under various privacy budgets on 5-D synthetic and real-world datasets, as shown in Figure~\ref{fig:evaluation_md_epsilon}. Utilizing adaptive grids, the PriPL-Tree method achieves the lowest MSEs in most cases, averaging 47.9\% lower MSE on real-world datasets and 23.7\% lower on synthetic datasets compared to state-of-the-art solutions. The extent of improvement of PriPL-Tree varies with the characteristics of different data distributions. In the Gaussian, MixGaussian, Loan, and Salary datasets, where most attributes are not leptokurtic, our PriPL-Tree method reduces MSE by 10.6\% to 81.9\%, averaging a reduction of 56.7\%. Conversely, for the other datasets, including Cauchy, Zipf, Adult, and Financial, which feature predominantly leptokurtic distributions where only a few values have significant frequencies, PriPL-Tree performs similarly to an optimized hierarchical tree, yielding a modest average MSE reduction of 14.9\%.

\textbf{Impact of Data Dimension $m$.} In Figure~\ref{fig:evaluation_md_params} (a), we evaluate PriPL-Tree and competitors across varying data dimensions on the Gaussian dataset with a covariance of 0.6. PriPL-Tree consistently shows the lowest MSEs across different dimensions. As expected, all MSEs increase with the dimension $m$ as users are distributed among more parts for estimation. Notably, two data points for AHEAD at $m \in \{25,30\}$ are missing in the figure due to exceeding our server's 96GB memory capacity, as per their open-source code. Despite these omissions, the available data points are sufficient to demonstrate that AHEAD's performance is inferior to ours.

\textbf{Impact of Query Dimension $\lambda$.} In Figure~\ref{fig:evaluation_md_params} (b), we assess PriPL-Tree and other methods across varying query dimensions on a Gaussian dataset with $Cov=0.6$. PriPL-Tree consistently records the lowest MSE, particularly noticeable in 1-D queries. During these experiments, we construct data structures across all five dimensions, with each 1-D range query selecting a dimension at random. This setup requires dividing users among $5 + \binom{5}{2} = 15$ parts, leading to fewer users per dimension and generally poorer 1-D estimations for competitors like HDG, AHEAD, PrivNUD, and PRISM. In contrast, our robust PriPL-Tree, enhanced by the consistency refinement in phase 3, effectively improves accuracy by updating frequencies and slopes in PriPL-Trees using all related 2-D adaptive grids.

\textbf{Impact of Attribute Correlation.} In Figures~\ref{fig:evaluation_md_params} (c) and (d), we examine the impact of attribute correlation on range queries using Gaussian and MixGaussian datasets. We use covariance to represent attribute correlation. The results reveal that PriPL-Tree consistently achieves the lowest MSEs, especially in datasets with high attribute correlations. This highlights our method's superior capability to capture underlying distributions with adaptive 2-D grids, unlike competitors that rely on uniform grids.

\section{Related Work}
\label{sec:related_work}

In this section, we review related works in both central differential privacy (DP) and LDP scenarios. 
%In the former, the data collector can access the raw data, while in the latter, they cannot.

\textbf{Range Query under DP:} 
In DP scenarios, the far-reaching hierarchical tree and constrained inference method were first proposed by Hay et al. \cite{hay2009boosting} and later optimized by Qardaji et al. \cite{qardaji2013differentially}. Various optimizations have since been proposed to mitigate noise errors on trees: Xiao et al. \cite{xiao2010differential} enhanced trees using Haar wavelet transforms; Cormode et al. \cite{cormode2012differentially} proposed a geometric privacy budget allocation method; Li et al. \cite{li2014data} optimized the non-uniform domain partitioning and privacy budget allocation based on data distribution and query workloads; Zhang et al. \cite{zhang2016privtree} proposed PrivTree (i.e., a Quad-tree ) with optimized node decomposition; Huang et al. \cite{huang2021approximate} employed a balanced box-decomposition tree (BBD-tree) for counting arbitrarily shaped geometric ranges. Beyond hierarchical trees, Qardaji et al. \cite{qardaji2013differentially} presented the grid method with optimized granularity, claiming grids are more suitable for high-dimensional queries. Recently, Zeighami et al. \cite{zeighami2021neural} introduced a model-driven approach that learns noisy answers from multiple 2-D range count queries to predict results without complex indexes.

\textbf{Range Query under LDP:}
In this context, we summarize existing methods according to tree-based and grid-based methods. 
In tree-based methods, HH \cite{cormode2019answering} and HIO \cite{wang2019answering} proposed the basic hierarchical tree in LDP almost simultaneously and optimize the tree's fan-out (i.e., branching factor). As improvements, AHEAD \cite{du2021ahead} merged intervals with low frequencies; PrivNUD \cite{wang2023privnud} customized the fan-out for each node; DHT \cite{cormode2019answering} optimized this tree via Haar wavelet transformation as in \cite{xiao2010differential}.
In grid-based methods, HDG \cite{yang2020answering} proposed the state-of-the-art hybrid dimensional grids, and PRISM \cite{wang2022prism} replaced simple grids with prefix-sum (PS) cubes. 
Additionally, there are other research topics covering range questions. \citet{mckenna2020workload} proposed a general matrix mechanism for linear queries in LDP, which can also be applied to answer range queries. And \citet{ye2021privkvm} explored PrivKVM* for range-based estimation in key-value datasets.

\textbf{Marginal Release under LDP:}
As a relevant problem to this work, we also review marginal release methods in LDP. Cormode et al. \cite{cormode2018marginal} introduced a Fourier transform-based method for private marginal release. Ren et al. \cite{ren2018textsf} explored multi-dimensional joint distribution estimation using the Expectation-Maximization (EM) algorithm and Lasso regression. A more advanced method is CALM, proposed by Zhang et al. \cite{zhang2018calm}, which extends the idea of PriView \cite{qardaji2014priview} from central DP to LDP and reconstructs high-dimensional marginals using low-dimensional estimations. This idea has been widely adopted in multi-dimensional range queries.

\section{Conclusion}
\label{sec:conclusion}

In this paper, we propose the PriPL-Tree to accurately answer range queries on arbitrary data distributions. The key idea is to approximate the underlying distribution using piecewise linear functions, which alleviates both non-uniform error and LDP noise error. We further extend this with adaptive grids to handle multi-dimensional cases, where the grids dynamically adjust to the data density, thus more accurately modeling the 2-D distribution and improving accuracy for multi-dimensional range queries. Extensive experiments on both real and synthetic datasets demonstrate the effectiveness and superiority of PriPL-Tree over state-of-the-art solutions. 

For future work, we will explore automatic and data-aware machine learning models to further enhance estimation in LDP scenarios.

\begin{acks}
% This work was supported by the [...] Research Fund of [...] (Number [...]). Additional funding was provided by [...] and [...]. We also thank [...] for contributing [...].
 This work was supported by the National Natural Science Foundation of China under Grants 62172423, 92270123 and 62372122, and in part by the Research Grants Council, Hong Kong SAR, China under Grants 15209922, 15208923 and 15210023.  
\end{acks}

\bibliographystyle{ACM-Reference-Format}
\bibliography{PriPL-Tree.bib}

\clearpage

\appendix
\section{Technical Details for 1-D PriPL-Tree}
\label{appendix:1d_detail}

In this section, we detail the computation of Eq.(\ref{eq:sub_range}) for responding to 1-D range queries and provide examples to illustrate interval partitioning and its two included strategies.

\subsection{Response to 1-D Range Query}
\label{appendix:query}

In this subsection, we focus on computing $Q([l_{\text{sub}}, r_{\text{sub}}])$ within node $n_k$. Given the interval $I_k=[s_{k-1},s_k]$, slope $\tilde{\beta}_k$, and node frequency $\tilde{f}_k$ of node $n_k$, we first calculate the corresponding linear function $f= \tilde{\beta}_k\cdot v + b_k$, where $b_k$ is the intercept. This function is represented by the green line in Figure~\ref{fig:sub_query_examp}. Because $\tilde{f}_k = \sum_{v\in I_k} (\tilde{\beta}_k \cdot v + b_k)$, we deduce that $b_k = \frac{\tilde{f}_k}{|I_k|} - \frac{\tilde{\beta}_k\cdot(2s_{k-1} + |I_k| - 1)}{2}$. Next, we compute $Q([l_{\text{sub}}, r_{\text{sub}}])$, which represents the frequency of the red-hatched region in Figure~\ref{fig:sub_query_examp}:
\begin{align*}
	&Q([l_{\text{sub}}, r_{\text{sub}}]) \\
	&= \sum_{v\in [l_{sub}, r_{sub}]} (\tilde{\beta}_k \cdot v + b_k) \\
	&= \tilde{\beta}_k \cdot \frac{(l_{sub} + r_{sub})\cdot(r_{sub}-l_{sub}+1)}{2} + b_k \cdot(r_{sub}-l_{sub}+1) \\
	&= (r_{\text{sub}} - l_{\text{sub}} + 1)\cdot\left(\tilde{\beta}_k \left(\frac{l_{\text{sub}} + r_{\text{sub}} + 1 - |I_k|}{2} - s_{k-1}\right) + \frac{\tilde{f}_k}{|I_k|}\right).
\end{align*}

\begin{figure}[h]
\centering
\includegraphics[width=0.4\textwidth]{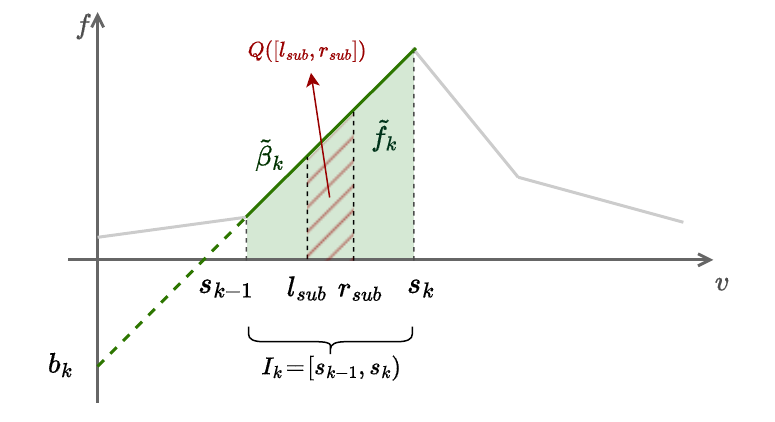}
\caption{An Example of $Q([l_{\text{sub}}, r_{\text{sub}}])$ within node $n_k$.}
\label{fig:sub_query_examp}
\end{figure}

\subsection{Interval Partitioning}

In this subsection, we illustrate the basic workflow of interval partitioning in Figure~\ref{fig:pl_fitting_example}~(a). During interval partitioning, we have presented two strategies — twice partitioning and search acceleration — to enhance effectiveness and efficiency, respectively, in Section~\ref{subsubsec:error_analysis}. Here, we provide additional examples of these two strategies for clarity and demonstrate their benefits.

\begin{figure}[h]
\centering
\includegraphics[width=0.48\textwidth]{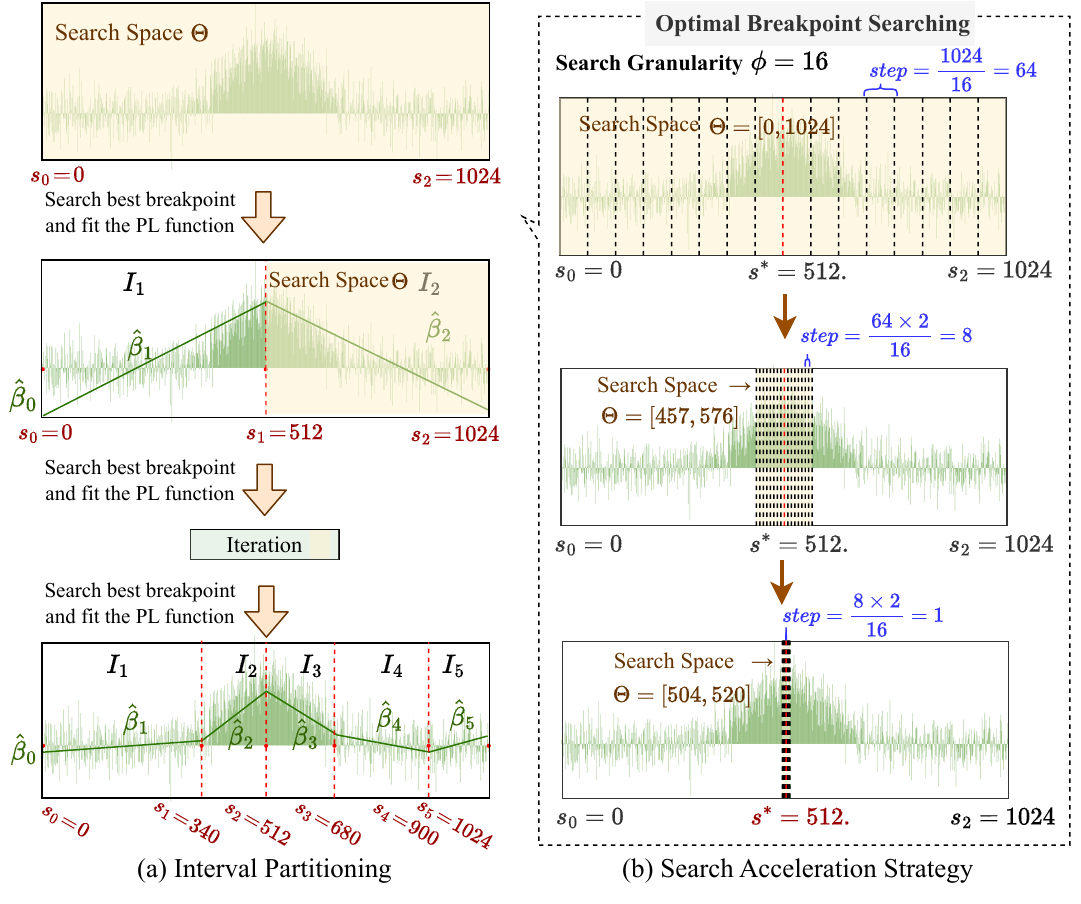}
\caption{An Example of Interval Partitioning.}
\label{fig:pl_fitting_example}
\end{figure}

\subsubsection{Twice Partitioning Strategy}

To illustrate the twice partitioning strategy, we present its results on the Cauchy and Salary datasets in Figure~\ref{fig:twice_partition}. The blue solid lines, representing $\hat{\mathbf{F}}^{EM}$ derived by EM in SW, are jagged and noisy, while the red solid lines, representing $\hat{\mathbf{F}}^{EMS}$ derived by EMS in SW, are smoother but tend to flatten peak features. Initial partitions based on $\hat{\mathbf{F}}^{\text{EM}}$ are marked by dashed blue lines and help identify critical peaks indicated by yellow stars. Subsequent partitions on $\hat{\mathbf{F}}^{\text{EMS}}$ are shown with red dashed lines, identifying critical turning points marked by black stars. In combination, this strategy facilitates the discovery of the most crucial partitions for precise PL fitting.

It is important to note that while $\hat{\mathbf{F}}^{\text{EMS}}$ does not directly provide accurate slopes at this phase, they can be refined during the PriPL-Tree refinement phase, deriving piecewise linear lines denoted by the black solid lines in Figure~\ref{fig:twice_partition}. Additionally, this strategy does not significantly increase the time complexity of interval partitioning. Partitioning over the jagged $\hat{\mathbf{F}}^{\text{EM}}$ typically converges quickly, and the partitioning over the smoother $\hat{\mathbf{F}}^{\text{EMS}}$, building on initial partitioning results, also converge swiftly. The total number of partitions remains within the maximum segment limit $K_{\max}$.

\begin{figure}[h]
	\centering
	\includegraphics[width=0.45\textwidth]{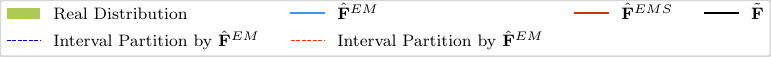}
	\vspace{-0.1in} % shorten the sapce between legend and figures.
	\\
	\subfloat[Cauchy]{
	\includegraphics[width=0.23\textwidth,valign=t]{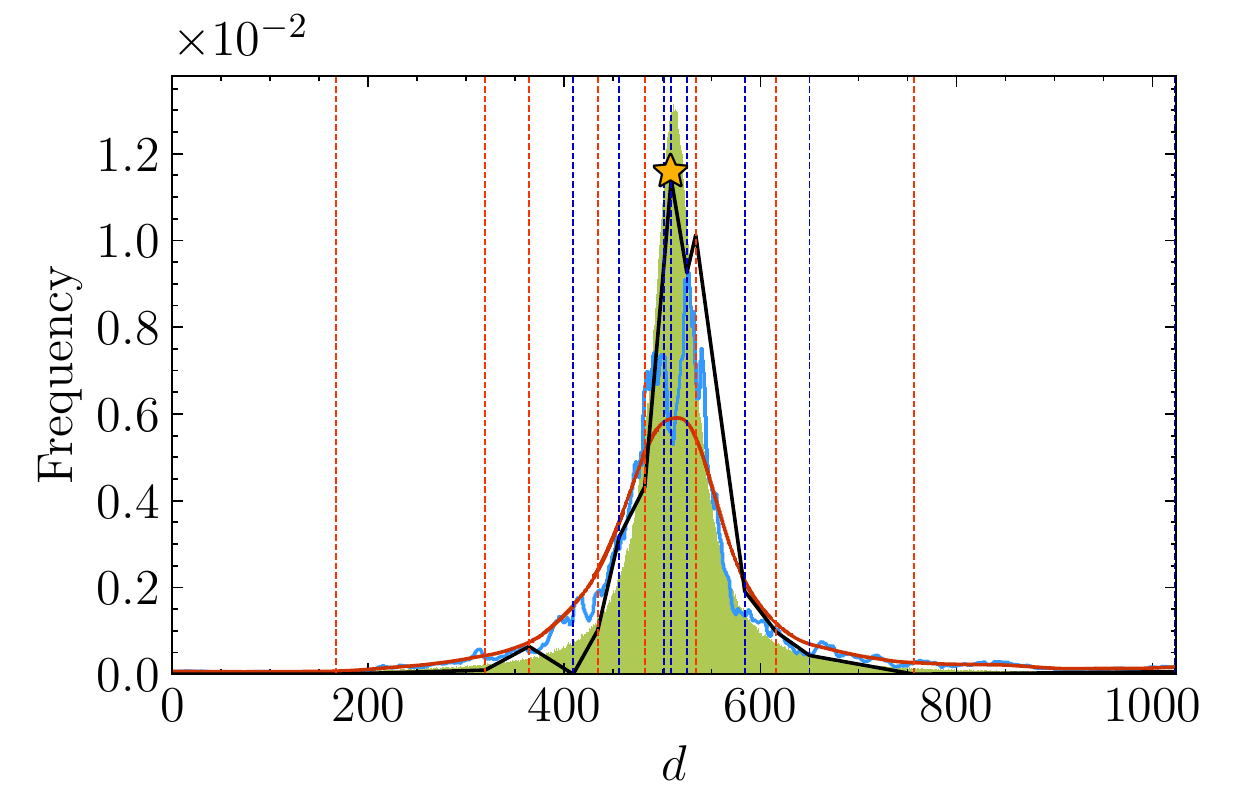}
	}
	\subfloat[Salary]{
	\includegraphics[width=0.23\textwidth,valign=t]{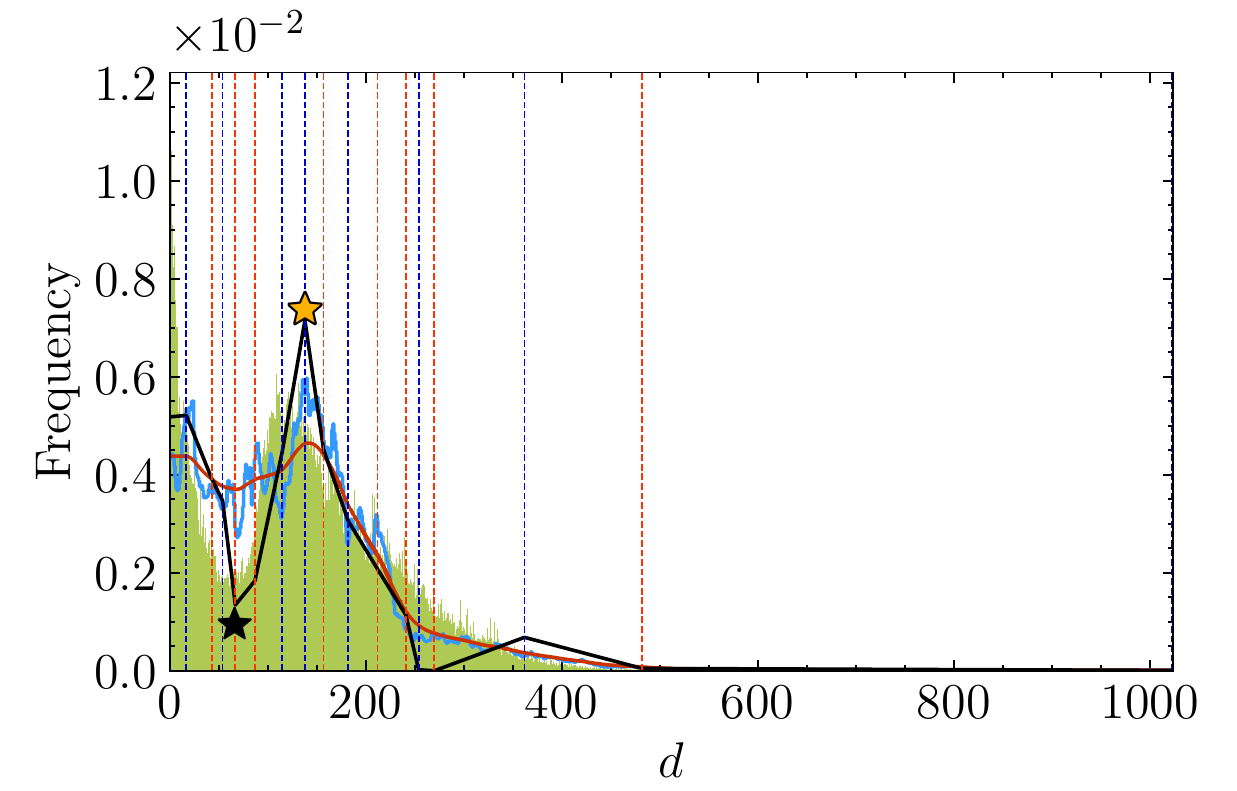}
	}
	\caption{An Example of Twice Partitioning with $\epsilon = 0.2$}
	\label{fig:twice_partition}
\end{figure}

\subsubsection{Search Acceleration Strategy}

In Figure~\ref{fig:pl_fitting_example}~(b), we illustrate the search acceleration strategy, which speeds up the initial breakpoint search during interval partitioning. Using a granularity factor of $\phi = 16$, we reduce the search step from 64 to 1, limiting the breakpoint candidates to no more than 16 per search. Black dashed lines mark the candidates, with the red dashed line indicating the optimal breakpoint. This method, compared to the basic one traversing all domain values, reduces the times of PL fitting for breakpoints from $O(d)$ to $O(\phi \cdot \log_{\phi} d)$. Although this acceleration might lead to a local rather than global optimal breakpoint, it has minimal impact on final results since each breakpoint in a piecewise linear function naturally represents a local optima to partition the domain. Moreover, because PL fitting occurs on a noisy histogram, a global optimal breakpoint might not accurately represent the actual distribution. Using a multi-granular search strategy that increases randomness can help mitigate this issue.

To validate the search acceleration strategy, we evaluated both the accelerated PriPL-Tree construction time and the MSE of queries on corresponding trees across four synthetic and four real-world datasets in Figure~\ref{fig:evaluation_seach_granularity}. A granularity factor $\phi$ of $1024$, meaning all values in the domain are traversed, represents no acceleration. The green bar indicates the accelerated private PL fitting time, which increases with $\phi$. The red line shows the MSE, fluctuating irregularly with increases in $\phi$, yet the variation between the highest and lowest MSEs remained within a two-fold difference. Observationally, a larger $\phi$ in the range of $[128,256]$ often provides better utility while effectively speeding up the process.

\begin{figure}
	\centering
	\includegraphics[width=0.45\textwidth]{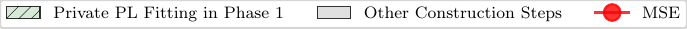}
	\vspace{-0.1in} % shorten the sapce between legend and figures.
	\\
	\subfloat[Gaussian]{
	\includegraphics[width=0.24\textwidth]{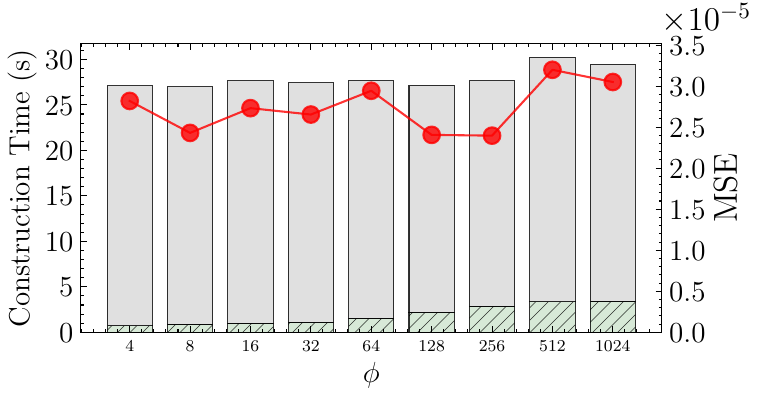}
	}
	\subfloat[MixGaussian]{
	\includegraphics[width=0.24\textwidth]{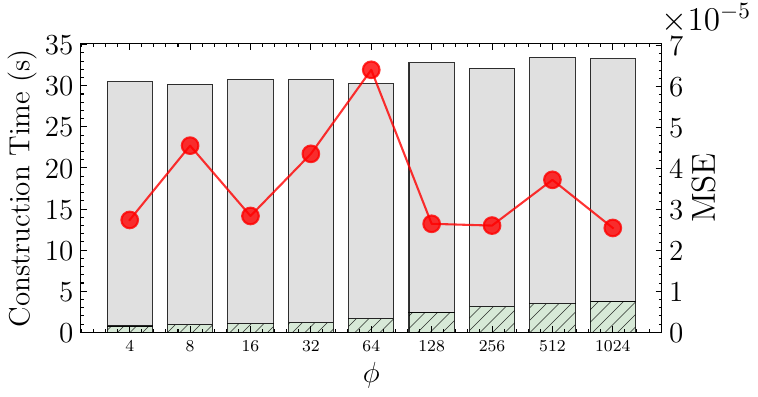}
	}
	\\
	\subfloat[Cauchy]{
	\includegraphics[width=0.24\textwidth]{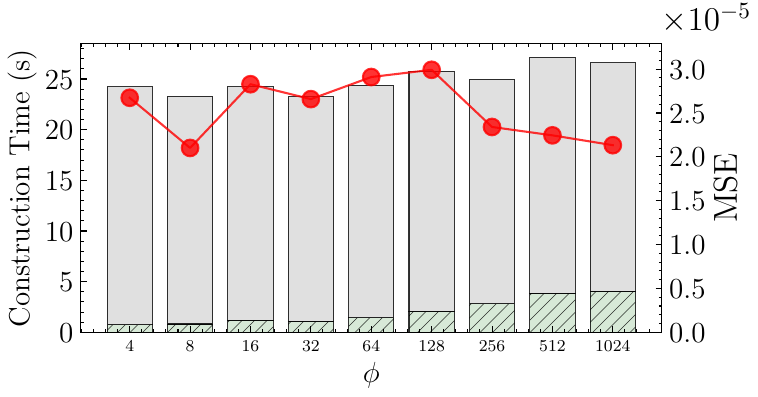}
	}
	\subfloat[Zipf]{
	\includegraphics[width=0.24\textwidth]{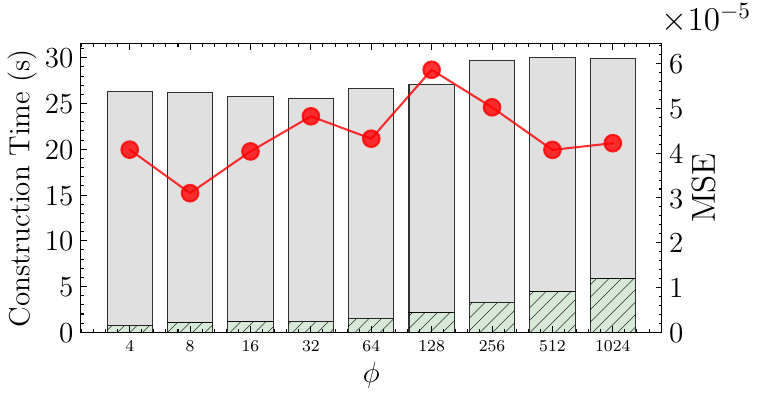}
	}
	\\
	\subfloat[Adult]{
	\includegraphics[width=0.24\textwidth]{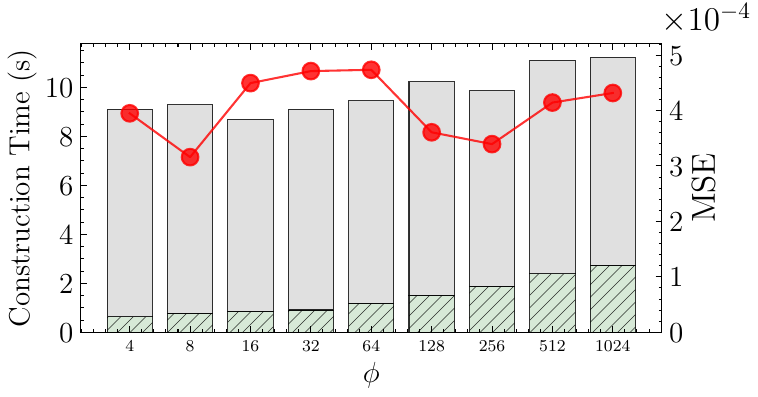}
	}
	\subfloat[Loan]{
	\includegraphics[width=0.24\textwidth]{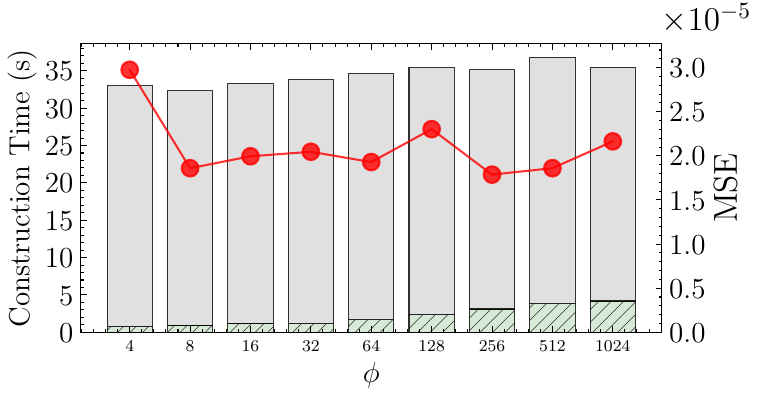}
	}
	\\
	\subfloat[Salary]{
	\includegraphics[width=0.24\textwidth]{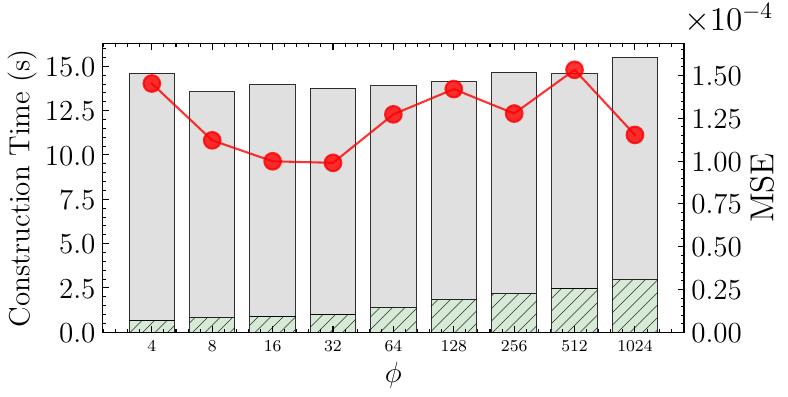}
	}
	\subfloat[Financial]{
	\includegraphics[width=0.24\textwidth]{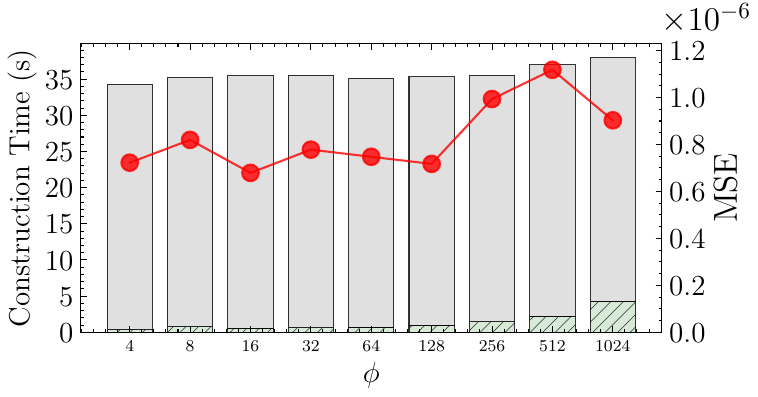}
	}
	\caption{Construction Time and MSE for PriPL-Tree with Search Acceleration Strategy across Different Granularity Factors $\phi$}
	\label{fig:evaluation_seach_granularity}
\end{figure}

\section{Technical Details for Multi-Dimensional Range Queries}

In this section, we illustrate the entire workflow and detail the algorithm for adaptive 2-D grid partitioning.

\subsection{The Workfolow}
\label{appendix:md_detail}

We provide an example of the workflow in Figure~\ref{fig:adaptive_grids_workflow} to illustrate this process further.

\begin{figure*}[htbp]
\centering
\includegraphics[width=\textwidth]{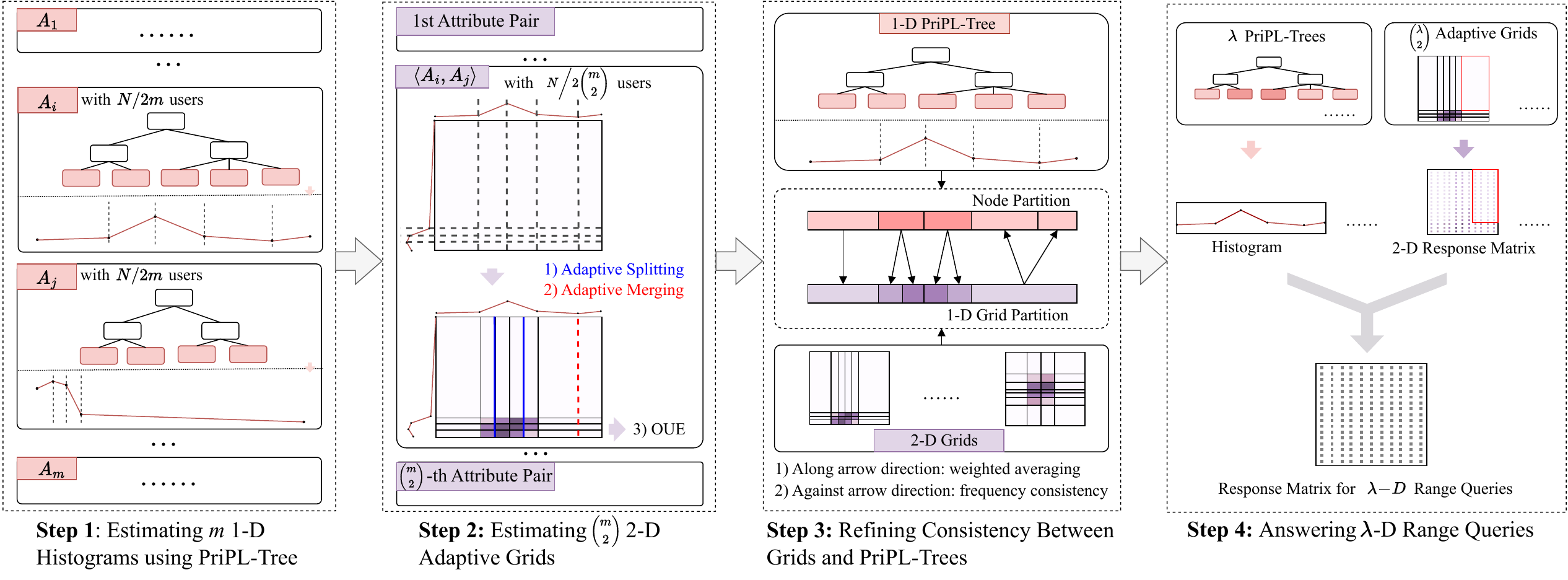}
\caption{Workflow of Multi-dimensional Range Queries}
\label{fig:adaptive_grids_workflow}
\end{figure*}

\subsection{The Algorithm For Adaptive Grid Partitioning}
\label{appendix:adaptive_partition}

Building on the concept of adaptive partitioning described in Section~\ref{subsec:adap_grid}, we detail the algorithm in Algorithm~\ref{algo:grid_partition} and provide an example in step 2 of Figure~\ref{fig:adaptive_grids}. For the attribute pair $\langle A_i, A_j \rangle$, we initially partition its domain based on the leaf node partitions in the PriPL-Trees, creating a grid $G$ with $g_i \times g_j$ cells (as shown in line 1). We then dynamically adjust partition lines along each dimension for adaptive partitioning, where adding a line splits a marginal cell (as shown in line 5) and removing a line merges marginal cells (as shown in line 13). In the algorithm, let $I$ index marginal cells on both $A_i$ and $A_j$; we consistently select the most significant cells for splitting (high-frequency cells) or merging (low-frequency cells). This adjustment of marginal cells adheres to two principles: ensuring each cell's frequency exceeds the standard deviation of the OUE noise $\bar{\sigma} = \sqrt{2\sigma^2 \cdot \binom{m}{2}}$, and minimizing the squared error $Err_G$ for range queries, as dictated by the splitting and merging conditions in lines 6 and 14.

\begin{algorithm}[tbp]
	\small
	\caption{Adaptive 2-D Grid Partitioning}
	\label{algo:grid_partition}
	\KwIn{\textls[-5]{Attribute pair $\langle A_i, A_j \rangle$ and their marginal histograms}}
	\KwOut{Grid $G$ with adaptive partitions}
	Initialize a Grid $G$ with size $g_i\times g_j$\;
	Set $S\!=\!\{I^i_1,I^i_2,\ldots,I^i_{g_i}\} \cup \{I^j_1,I^j_2,\ldots,I^j_{g_j}\}$ and $M\!=\!\phi$\;
	\While{$S\neq \phi$}{
		Select marginal cell $c_I$ ($I\!\in\!S$) with maximal frequency $\tilde{f}_{I}$\;
		Consider splitting $c_I$ into $c_{I_1}$ and $c_{I_2}$ with $\tilde{f}_{c_{I_1}}\!\approx\!\tilde{f}_{c_{I_2}}$\;
%		Consider splitting $c_I$ into $c_{I_1}$ and $c_{I_2}$ with $o_{c_{I_1}} \tilde{f}_{c_{I_1}}\!\approx\!o_{c_{I_2}}\tilde{f}_{c_{I_2}}$\;
		\If{$\tilde{f}_{I}\!>\!\bar{\sigma}$ and $Err_G(S)\!>\!Err_G(S\!-\!\{I\}\!\cup\!\{I_1,\!I_2\})$}{
			Split $c_I$ and update $S \leftarrow S\!-\!\{I\} \cup\{I_1, I_2\}$\;
		}
		\Else{
			Delete $I$ from $S$ and add it to $M$\;
		}
	}
	\While{$M\neq \phi$}{
		Select marginal cell $c_I$ ($I\!\in\!M$) with minimal frequency\;
		\If{the index $I'$ of its neighbor exists in $M$}{
			Consider merging $c_I$ and $c_{I'}$ into $c_{I_3}$\;
			\If{$\tilde{f}_{I_3}\!\le\!\bar{\sigma}$ or $Err_G(M)\!>\!Err_G(M\!-\!\{I,I'\}\!\cup\!\{I_3\})$}{
				Merging $c_I$ and $c_{I'}$, update $M\!\leftarrow\!M\!-\!\{I,I'\}\!\cup\!\{I_3\}$\;
			}
			Delete $I'$ from $M$\;
		}
		Delete $I$ from $M$\;
	}
\end{algorithm}

\section{Error Analysis on 1-D PriPL-Tree}
\label{appendix:error_ana_1d}

In this section, we prove the asymptotic bound for noise and sampling error in PriPL-Tree and present a numerical method to accurately compute this error.

\subsection{Proof and Analysis of Theorem~\ref{theorem:pripl_tree_node_error}}

For convience, we restate the theorem as follows.

\begin{reptheorem}{theorem:pripl_tree_node_error}
	Given a PriPL-Tree with at most $K$ segments (corresponding to $K$ leaf nodes), the error variance of frequencies after weight averaging in refinement (phase~3) is $O\left(\frac{K\cdot \log K}{(1-\alpha) \cdot (K+1) \cdot N \cdot \epsilon^2}\right)$ for non-leaf nodes and $O\left(\frac{\log K}{(1-\alpha) \cdot N \cdot \epsilon^2}\right)$ for leaf nodes. After frequency consistency, these variance is capped at $O\left(\frac{K\log K}{(1-\alpha) N \epsilon^2}\right)$.
\end{reptheorem}

\subsubsection{Analysis}

It is important to note that after the frequency consistency step, some nodes may experience an increase in variance from $O\left(\frac{\log K}{(1-\alpha)N\epsilon^2}\right)$ to $O\left(\frac{K\log K}{(1-\alpha)N\epsilon^2}\right)$. However, most nodes exhibit reduced variances, which can be verified by computing the precise variances numerically. This phenomenon is common to any tree structure with non-uniform branchings, such as PrivNUD \cite{wang2023privnud} and AHEAD \cite{du2021ahead}, which exhibit heteroscedastic frequencies across nodes.
In particular, in cases where the tree features uniform branching at each layer and all leaf nodes are at the same depth, each layer's nodes will be allocated an equal number of users, resulting in uniform variance among their estimated frequencies. According to the Gauss-Markov theorem \cite{hay2009boosting}, the variance post-frequency consistency in such scenarios can be estimated as $O\left(\frac{\log K}{(1-\alpha) N \epsilon^2}\right)$.

\subsubsection{Proof} We prove the Theorem~\ref{theorem:pripl_tree_node_error} as follows.

\begin{proof}
For convenience, we assume the PriPL-Tree has a maximum height of $h_{\max} \le \log_2 K$. Each node utilizes $\alpha_k$ users, with $\alpha_k \ge \frac{1-\alpha}{h_{\max}} \ge \frac{1-\alpha}{\log_2 K}$, and each non-leaf node has $b_k$ branches with $1\le b_k \le K$. 

First, we prove the error bound after the weight averaging step. 

For leaf nodes $n_k$, there are two frequencies: $\hat{f}_k$ derived from the SW mechanism using $\alpha N$ users and $\bar{f}_k$ from the OUE mechanism using $\alpha_k$ users. By weighted averaging, the variance of the updated frequency $\dot{f}_k$ is given by
\begin{align*}
	\var(\dot{f}_k) 
	&= \frac{\var(\hat{f}_k)\cdot \var(\bar{f}_k)}{\var(\hat{f}_k) + \var(\bar{f}_k)} \\
	&=  \var(\bar{f}_k) - \frac{\var ^2(\bar{f}_k)}{\var(\hat{f}_k) + \var(\bar{f}_k)} \\
	&\le \var(\bar{f}_k) \\
	&= \frac{4e^{\epsilon}}{N \cdot \alpha_k \cdot(e^\epsilon-1)^2} \\
	&= O\left(\frac{1}{N\alpha_k \epsilon^2}\right)\\
	&= O\left(\frac{\log K}{(1-\alpha) N \epsilon^2}\right).
\end{align*}

For the non-leaf node $n_{p}$ with $b_p$ child nodes, its frequency is updated based on its own and its children's frequencies $\{\dot{f}_c | n_c \in \text{child}(n_p)\}$. Before computing the updated variance, we present two preliminary conclusions:

(1) For a child node $n_c$, its updated frequency variance $\var(\dot{f}_c)$ is less than or equal to the variance of the parent node's original frequency, i.e., $\var(\bar{f}_{p})$. Assuming $N'$ users are allocated to the subtree rooted at $n_{p}$ and $h_{p}$ denotes the subtree's maximum height. Since node $n_{p}$ estimates its frequency $\bar{f}_{p}$ with at most $N'/h_{p}$ users and its children are allocated at least $N'/h_{p}$ users, the child node's frequency variance $\var(\bar{f}_c)$ does not exceed that of $n_{p}$, i.e., $\var(\bar{f}_c) \le \var(\bar{f}_{p})$. Due to weighted averaging, the updated variance $\var(\dot{f}_c) \le \var(\bar{f}_c)$, implying $\var(\dot{f}_c) \le \var(\bar{f}_{p})$.

(2) Estimates for different child nodes $\{\dot{f}_c | n_c \in \text{child}(n_p)\}$ of node $n_p$ are independent, as each aggregates frequencies from its respective subtree, with no overlap among these subtrees.

As such, the variance of $n_{p}$ for non-leaf nodes can be deduced as follows:
\begin{align*}
	\var(\dot{f}_{p})
	&= \frac{\var(\bar{f}_{p}) \cdot \var\left(\sum_{n_c\in \text{child}(n_{p})} \dot{f}_c\right)}{\var(\bar{f}_{p}) + \var\left(\sum_{n_c\in \text{child}(n_{p})} \dot{f}_c\right)} \\
	&= \frac{\var(\bar{f}_{p}) \cdot \left(\sum_{n_c\in \text{child}(n_{p})} \var(\dot{f}_c)\right)}{\var(\bar{f}_{p}) + \sum_{n_c\in \text{child}(n_{p})} \var(\dot{f}_c)} \\
	&= \var(\bar{f}_{p}) - \frac{\var^2(\bar{f}_{p})}{\var(\bar{f}_{p}) + \sum_{n_c\in \text{child}(n_{p})} \var(\dot{f}_c)} \\
	&\le \var(\bar{f}_{p}) - \frac{\var^2(\bar{f}_{p})}{\var(\bar{f}_{p}) + b \cdot \var(\bar{f}_{p})} \\
	&= \frac{b_k}{b_k+1} \cdot \var(\bar{f}_{p}) \\
	&\le \frac{K}{K+1} \cdot \var(\bar{f}_{p}) \\
	&= O\left(\frac{K}{K+1} \cdot \frac{\log K}{(1-\alpha)N\epsilon^2}\right).
\end{align*}

Next, we analyze the error variance following the frequency consistency step. Let $n_c$ represent the child node pending update, and $n_p$ its parent. At this stage, the nodes $n_c \in \text{child}(n_p)$ are divided into two groups, $D_0$ and $D_+$. Nodes in $D_0$ typically display frequencies $\dot{f}_c$ close to or below zero, which are then adjusted to zero. This modification might slightly increase or decrease their variances but does not alter their variance bound. For nodes in $D_+$, frequency updates are based on both parental and sibling frequencies, leading to a variance as follows.
\begin{align*}
 	&\var(\tilde{f}_c) \\
 	&= \var\left(\frac{|D_+|-1}{|D_+|}\cdot \dot{f}_c + \frac{1}{|D_+|}\cdot \tilde{f}_p - \frac{1}{|D_+|} \sum_{n_s \in \text{child}(n_p)\slash n_c} \dot{f}_s\right)\\
 	&= \frac{(|D_+|-1)^2}{|D_+|^2} \cdot \var(\dot{f}_c) + \frac{1}{|D_+|^2} \cdot \var(\tilde{f}_p) \\
 	&\quad + \frac{1}{|D_+|^2} \cdot \sum_{n_s \in \text{child}(n_p)\backslash n_c}\var(\dot{f}_s) + \frac{2(|D_+|-1)}{|D_+|^2} \cdot \cov(\tilde{f}_p, \dot{f}_c) \\
 	&\quad - \frac{2}{|D_+|^2} \cdot \cov\left(\tilde{f}_p, \sum_{n_s \in \text{child}(n_p)\backslash n_c} \dot{f}_s \right) \\
 	&\le \frac{(|D_+|-1)^2}{|D_+|^2} \cdot \var(\dot{f}_c) + \frac{1}{|D_+|^2} \cdot \var(\tilde{f}_p) \\
 	&\quad + \frac{1}{|D_+|^2} \cdot \sum_{n_s \in \text{child}(n_p)\backslash n_c}\var(\dot{f}_s) \\
 	&\quad + \frac{2(|D_+|-1)}{|D_+|^2} \cdot \sqrt{\var(\tilde{f}_p) \cdot \var(\dot{f}_c)} \\
 	&\le \frac{|D_+|^2 + |D_+| - 1}{|D_+|^2} \max\left(\{\var(\tilde{f}_p)\} \cup \{\var(\dot{f}_j) | j \in \text{child}(n_p)\}\right) \\
 	&\le \frac{|D_+| + 1}{|D_+|} \max\left(\{\var(\tilde{f}_p)\} \cup \{\var(\dot{f}_j) | j \in \text{child}(n_p)\}\right) \\
 	&\le \left(\frac{|D_+| + 1}{|D_+|}\right)^{h_{\max}} \max\left(\{\var(\bar{f}_p)\} \cup \{\var(\bar{f}_j) | j \in \text{child}(n_p)\}\right) \\
 	&\le 2^{\log_2 K} \cdot O\left(\frac{\log K}{(1-\alpha) N \epsilon^2}\right) \\
 	&= O\left(\frac{K\log K}{(1-\alpha) N \epsilon^2}\right).
 \end{align*}
 \end{proof}

\subsection{The Numerical Method for Calculating Noise and Sampling Error}

The variances of node $n_k$'s frequency $\tilde{f}_k$ after PriPL-Tree Refinement can be veiwed as a weighted average of frequency estimates of all nodes from the first two phases \cite{qardaji2013understanding}, as we exemplified in Figure~\ref{fig:numerical_error}. Here, the vector $\mathbf{V}$ stores the original independent variances of each node, and each node $n_k$ maintains a weight vector $\mathbf{W}_k$ corresponding to each value in $\mathbf{V}$. Let $w_{k,j}$ denote the $j$-th value in $\mathbf{W}_k$ and $v_j$ denote the $j$-th value in $\mathbf{V}$. For node $n_k$, its final variance can be computed as $\sum_{j\le|\mathbf{V}|}w_{k,j}^2 \cdot v_j$. 

\begin{figure}[h]
\centering
\includegraphics[width=0.48\textwidth]{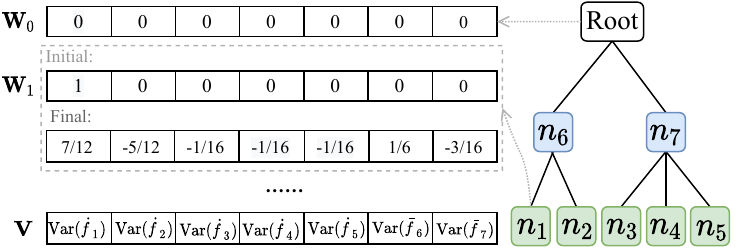}
\caption{An Example of Variance Computation.}
\raggedright
(For convenience, assuming all values in $\mathbf{V}$ are equal to $v^*$ and all frequencies remain positive during the frequency consistency step, we can derive weight $\mathbf{W}_1$ as illustrated in the figure. Consequently, the final updated variance for $\tilde{f}_1$ is $\var(\tilde{f}_1) = \frac{113}{192}v^*$.)
\label{fig:numerical_error}
\end{figure}

Following this idea, the main challenges in calculating the variance of each node's frequency involve computing the basic variance vector $\mathbf{V}$ and each node's weight vector $\mathbf{W}_k$. We provide an algorithm in Algorithm~\ref{algo:numerical_error} and present it as follows.

For the vector $\mathbf{V}$, each non-leaf node $n_k$ records the variance $\var(\bar{f}_k)$ (i.e., OUE's variance) within it, as shown in lines 6$\sim$7 in Algorithm~\ref{algo:numerical_error}. For each leaf node $n_k$, it stores the variance of the weighted updated frequency $\dot{f}_k$, which is computed by combining $\var(\bar{f}_k)$ derived according to OUE and $\var(\hat{f}_k)$ derived according to SW, as shown in line 5 in Algorithm~\ref{algo:numerical_error}. Unlike OUE, which yields an unbiased estimate with constant variance, SW leads to estimates that are biased and affected by the unknown original data distribution, making it difficult to express the variance directly \cite{li2020estimating}. As an alternative, we can compute its empirical error $(\hat{f}_k-\bar{f}_k)^2$ if we approximate $f_k$ by $\bar{f}_k$. Considering $\E((\hat{f}_k-\bar{f}_k)^2) = \E(((\hat{f}_k - f_k) + (\bar{f}_k-f_k))^2) = \E(\hat{f}_k - f_k)^2 + \var(\bar{f}_k)$ and that SW typically derives a better estimate with smaller variance than OUE \cite{li2020estimating}, we can use $(\hat{f}_k-\bar{f}_k)^2 - \var(\bar{f}_k)$ to approximate its square error further when $(\hat{f}_k-\bar{f}_k)^2 > \var(\bar{f}_k)$. For convenience, we use this approximate square error to approximate the variance, as shown in lines 3$\sim$4 in Algorithm~\ref{algo:numerical_error}.

For the weight vector $\mathbf{W}_k$ of node $n_k$, after \emph{weighted averaging}, it is updated as shown in lines 10$\sim$11 in Algorithm~\ref{algo:numerical_error}, with the optimized variance presented in Section~\ref{subsec:tree_refinement}. During \emph{frequency consistency}, node $n_k\in D_0$, which has a frequency close to or below zero, is set to zero, and its variance changes minimally. For node $n_k \in D_+$, its updated frequency is the weighted average of itself, its parent's frequency, and its sibling $n_s\in D_+$'s frequencies. As such, its weight can be updated as indicated in lines~14$\sim$15 in Algorithm~\ref{algo:numerical_error}.

\begin{algorithm}[htbp]
	\small
	\caption{Numerical Method for Variance Estimation}
	\label{algo:numerical_error}
	\KwIn{\textls[-1]{Frequencies $\hat{f}_k$ or $\bar{f}_k$ for each node and the node number $|\mathcal{T}|$}}
	\KwOut{The error variance of each node's refined frequency.}
	\tcp{Initialize $\mathbf{V}_{|\mathcal{T}| \times 1}$}
	Initialize vector $\mathbf{V}$\;
	\For{leaf node $n_k$}{
		$\var(\bar{f}_k) = \frac{4e^{\epsilon}}{\alpha_k \cdot N \cdot (e^{\epsilon} - 1)^2}$\;
		$\var(\hat{f}_k)\!\approx\!(\hat{f}_k\!-\!\bar{f}_k)^2\!>\!\var(\bar{f}_k) ? (\hat{f}_k\!-\!\bar{f}_k)^2 - \var(\bar{f}_k)\!:\!(\hat{f}_k\!-\!\bar{f}_k)^2$\;
		$v_k = \var(\dot{f}_k) = \frac{\var(\hat{f}_k) \var(\bar{f}_k)}{\var(\hat{f}_k) + \var(\bar{f}_k)}$\;
	}
	\For{non-leaf node $n_k$}{
		$v_k = \var(\bar{f}_k) = \frac{4e^{\epsilon}}{\alpha_k \cdot N \cdot (e^{\epsilon} - 1)^2}$\;
	}
	\tcp{Calculate $\mathbf{W}$}
	Initialize vector $\mathbf{W}_0 = \mathbf{0}_{|\mathcal{T}| \times 1}$ for the root and $\mathbf{W}_k$ with only the $k$-th value being one and others being zero for node $n_k$\;
	\For{non-leaf node $n_k$ visited in postorder traversal}{
		$\theta = \frac{\sum_{n_c \in \text{child}(n_k)} \var(\dot{f}_c)}{\var(\dot{f}_k) + \sum_{n_c \in \text{child}(n_k)} \var(\dot{f}_c)}$\;
		$\mathbf{W}_k = \theta \cdot \mathbf{W}_k + (1-\theta) \cdot (\sum_{n_c \in \text{child}(n_k)} \mathbf{W}_c)$\;
	}
	\For{node $n_k$ visited in level order traversal}{
		\If{$n_k \in D_+$}{
			Let $n_p$ represent $n_k$'s parent\;
			$\mathbf{W}_k = \frac{|D_+| - 1}{|D_+|} \cdot \mathbf{W}_k + \frac{1}{|D_+|} \cdot \mathbf{W}_p - \sum_{n_s \in D_+} \frac{1}{|D_+|} \cdot \mathbf{W}_s$\;
		}
	}
	\tcp{Calculate variances}
	\For{each node $n_k$}{
		$\var(\tilde{f}_k) = \sum_{1 \le j \le |\mathcal{T}|} w_{k,j}^2 \cdot v_j$\;
	}
	\KwRet $\var(\tilde{f}_k)$ of each node $n_k$ \;
\end{algorithm}

\section{Error Analysis on Multi-dimensional Queries}
\label{appendix:error_ana_md}

Multi-dimensional queries primarily depend on the estimated frequency of 2-D grids, which are adaptively constructed and refined based on the PriPL-Tree for each attribute. As such, we analyze the error in queries on these 2-D grids. As discussed in Section~\ref{subsec:adap_grid}, for a range query $Q$ that selects a portion $r$ of the area of a grid of size $g_i \times g_j$, the squared error is $2 rg_i g_j \sigma^2 \binom{m}{2} + r\eta \sum_{c \in G}f_c^2$, where $\sigma^2 = \frac{4e^{\epsilon}}{N \cdot (e^{\epsilon} - 1)^2}$. For simplicity, let $g$ represent the maximum grid partition size on each attribute; the asymptotic bound of the squared error is $O(\frac{r\cdot g^2 \cdot m^2}{N \cdot \epsilon^2})$. Furthermore, Theorem~\ref{theorem:md_error} provides the maximum absolute error, derived from Lemma~\ref{theorem:md_error_cell} and the analysis in Section~\ref{subsec:adap_grid}.
	
\begin{theorem}
\label{theorem:md_error}
	For any range query $Q$ selecting a portion $r$ of the area of a grid with size $g^2$, with at least $1-\beta$ probability, 
	\begin{equation}
	\max_{c}|\hat{Q}(r) - Q(r)| = O\left(\frac{rg^2m\sqrt{\log (g^2/\beta)}}{\epsilon \sqrt{N}}\right)
\end{equation}
\end{theorem}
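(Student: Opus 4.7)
The plan is to lift the per-cell high-probability bound from Lemma~\ref{theorem:md_error_cell} to a query-level bound by combining a union bound over all cells with an aggregation step along the query rectangle.

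First, I would instantiate Lemma~\ref{theorem:md_error_cell} for each of the $g^2$ cells of the grid. Since each 2-D grid is estimated via OUE with $N' = N/(2\binom{m}{2})$ users, the per-cell estimator is unbiased with variance $\sigma^2 = O\bigl(m^2/(N\epsilon^2)\bigr)$. The sub-Gaussian concentration underlying Lemma~\ref{theorem:md_error_cell} then yields, for any single cell, an absolute error of order $O\bigl(m\sqrt{\log(1/\beta')}/(\epsilon\sqrt{N})\bigr)$ with probability $1-\beta'$. The key point is that the cell-level error scales as $\sqrt{\sigma^2}$, which contributes the linear $m$ factor (not $m^2$) that must appear in the final bound.

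Second, I would apply a union bound with $\beta' = \beta/g^2$ over the $g^2$ cells to obtain, with probability at least $1-\beta$, the simultaneous bound
\begin{equation*}
\max_{c \in G} \bigl|\hat{f}_c - f_c\bigr| \;=\; O\!\left(\frac{m\sqrt{\log(g^2/\beta)}}{\epsilon\sqrt{N}}\right).
\end{equation*}
Third, I would sum the per-cell errors along the query rectangle. Since $Q$ covers a fraction $r$ of the grid area, at most $O(rg^2)$ cells contribute. Cells fully inside the rectangle are summed directly, giving an aggregate error of $rg^2$ times the per-cell bound. Cells only partially intersecting the rectangle contribute via the marginal-product surrogate analyzed in Section~\ref{subsec:adap_grid}, and their count is $O(g\sqrt{r})$ by a perimeter argument, which is dominated by the $O(rg^2)$ bulk term. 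Collecting everything gives
\begin{equation*}
\max_{c} \bigl|\hat{Q}(r) - Q(r)\bigr| \;=\; O\!\left(\frac{rg^2 m \sqrt{\log(g^2/\beta)}}{\epsilon\sqrt{N}}\right),
\end{equation*}
which is the claimed bound.

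The main obstacle will be handling the partially covered boundary cells cleanly: their error combines LDP noise in the marginals with the product-of-marginals approximation for intra-cell portions, rather than a direct OUE estimate. I would argue that the consistency refinement of Section~\ref{subsec:consistency_refinement} keeps the per-boundary-cell error at the same asymptotic order as a full cell, so that the $O(g\sqrt{r})$ boundary contribution is strictly dominated by the $O(rg^2)$ bulk contribution and does not alter the bound. A secondary bookkeeping subtlety is ensuring that the $\binom{m}{2}$ user-split and the $\epsilon$-dependence of OUE's variance together produce exactly the $m/(\epsilon\sqrt{N})$ factor, which is what links the per-cell concentration bound to the final high-probability maximum error.
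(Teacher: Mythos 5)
Your proposal takes essentially the same route as the paper: a per-cell concentration bound via Bernstein's inequality combined with a union bound over the $g^2$ cells (this is exactly Lemma~\ref{theorem:md_error_cell}), followed by multiplying the resulting simultaneous per-cell error by the $O(rg^2)$ cells covered by the query. Your additional bookkeeping for partially covered boundary cells is more careful than the paper's own derivation, which simply scales the per-cell maximum by the cell count, but it does not change the argument or the bound.
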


\begin{lemma}
\label{theorem:md_error_cell}
For any cell $c$ in 2-D grids whose frequency is estimated by OUE with $\frac{N}{2 \cdot \binom{m}{2}}$ users, let $g$ denotes the maximum grid partition size on each dimension, $\hat{f}_c$ denote the estimated frequency and $f_c$ denotes the actual frequency, with at least $1-\beta$ probability, 
\begin{equation}
	\max_{c}|\hat{f}_c - f_c| = O\left(\frac{m\sqrt{\log (g^2/\beta)}}{\epsilon \sqrt{N}}\right)
\end{equation}
\end{lemma}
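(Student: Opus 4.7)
The plan is to apply Hoeffding's inequality to the OUE estimator for a single cell, and then take a union bound over the $g^2$ cells of the grid. Recall that under OUE with $N' = N/(2\binom{m}{2}) = \Theta(N/m^2)$ users, the estimate for cell $c$ can be written as $\hat{f}_c = \frac{1}{N'(p-q)}\bigl(\sum_{i=1}^{N'} X_{i,c} - N'q\bigr)$, where $X_{i,c} \in \{0,1\}$ is the perturbed bit reported by user $i$ for cell $c$, and $p = 1/2$, $q = 1/(e^\epsilon+1)$ are the OUE perturbation probabilities. By design of OUE, this estimator is unbiased, so $\hat f_c - f_c = \frac{1}{N'(p-q)}\sum_i (X_{i,c} - \E[X_{i,c}])$ is an average of $N'$ independent zero-mean random variables each bounded in an interval of length $\frac{1}{N'(p-q)}$.

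First, I would apply Hoeffding's inequality to obtain, for a fixed cell $c$,
\begin{equation*}
\Pr\bigl[|\hat f_c - f_c| > t\bigr] \le 2\exp\bigl(-2 N'(p-q)^2 t^2\bigr).
\end{equation*}
Next, I would use the standard estimate $p - q = \tfrac{e^\epsilon-1}{2(e^\epsilon+1)} = \Theta(\epsilon)$ for bounded $\epsilon$, together with $N' = \Theta(N/m^2)$, so that $N'(p-q)^2 = \Theta(N\epsilon^2/m^2)$. Inverting the tail bound by setting the right-hand side to $\beta/g^2$ yields a per-cell deviation of
\begin{equation*}
t \;=\; O\!\left(\frac{m\sqrt{\log(g^2/\beta)}}{\epsilon\sqrt{N}}\right).
\end{equation*}

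Finally, a union bound over the at most $g^2$ cells of the grid multiplies the failure probability by $g^2$, so with probability at least $1-\beta$ the bound $t$ holds simultaneously for every cell, giving the claimed $\max_c |\hat f_c - f_c| = O\bigl(\tfrac{m\sqrt{\log(g^2/\beta)}}{\epsilon\sqrt{N}}\bigr)$. The only mildly delicate step is verifying that Hoeffding applies with the correct range parameter: the summands $\frac{X_{i,c} - \E X_{i,c}}{N'(p-q)}$ are independent across users (since each user perturbs their bit vector independently) and bounded by $\frac{1}{N'(p-q)}$ in absolute value, so Hoeffding's constant evaluates cleanly. The main (minor) obstacle is simply tracking the dependence of $(p-q)^{-2}$ on $\epsilon$; if a tighter small-$\epsilon$ regime is desired one could replace Hoeffding with Bernstein to exploit the variance $\sigma^2 = O(m^2/(N\epsilon^2))$ derived earlier, but this does not change the asymptotic form of the bound.
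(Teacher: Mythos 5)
Your proposal is correct and follows the same skeleton as the paper's proof: exploit the unbiasedness of the OUE estimator for a single cell, apply an exponential concentration inequality, and finish with a union bound over the at most $g^2$ cells. The only difference is the choice of inequality: the paper applies Bernstein's inequality using the actual OUE variance $\frac{4e^\epsilon}{(e^\epsilon-1)^2}$ per user (plus the range term $O(1/\epsilon)$), whereas you use Hoeffding with the worst-case range $\frac{1}{N'(p-q)}$. Since Hoeffding's variance proxy $\frac{1}{4(p-q)^2} = \frac{(e^\epsilon+1)^2}{(e^\epsilon-1)^2}$ exceeds the true per-user variance only by the factor $\frac{(e^\epsilon+1)^2}{4e^\epsilon}$, which is $O(1)$ for bounded $\epsilon$, both routes yield the identical asymptotic bound $O\bigl(\frac{m\sqrt{\log(g^2/\beta)}}{\epsilon\sqrt{N}}\bigr)$; your closing remark anticipating the Bernstein variant is exactly what the paper does. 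One shared (and equally unaddressed) subtlety in both arguments: $f_c$ in the lemma is the population frequency while the $N'$ reporting users are a subsample, so strictly speaking there is an additional sampling deviation of order $O(1/\sqrt{N'})$, but this is dominated by the noise term and does not affect the stated bound.
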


\begin{proof}
	For a frequency $\hat{f}_c$ estimated by OUE using $n = \frac{N}{2 \cdot \binom{m}{2}}$ users, it is unbiased and has a variance given by $\var(\hat{f}_c) = \frac{4e^{\epsilon}\cdot \binom{m}{2}}{N \cdot (e^{\epsilon} - 1)^2}$. 
	
	By Bernstein's inequality, when $\epsilon$ is small,
	\begin{align*}
		\Pr[|\hat{f}_c - f_c| \ge \delta]
		&\le 2 \cdot \exp\left( - \frac{n^2 \delta^2}{2\cdot n \cdot \frac{4 e^\epsilon}{(e^\epsilon - 1)^2} + \frac{1}{3} \cdot n \cdot \delta \cdot \frac{2\cdot e^\epsilon + 1}{e^\epsilon - 1} }\right) \\
		&= 2 \cdot \exp\left( - \frac{n \delta^2}{2\cdot O(\frac{1}{\epsilon^2}) + \frac{1}{3} \cdot \delta \cdot O(\frac{1}{\epsilon}) }\right).
	\end{align*}
	
	By the union bound, there exist $\delta\!=\!O\left(\frac{\sqrt{\log (g^2\!/\!\beta)}}{\epsilon\sqrt{n}}\right)\!=\!O\left(\frac{m\sqrt{\log (g^2\!/\!\beta)}}{\epsilon \sqrt{N}}\right)$ such that $\max_{c}|\hat{f}_c - f_c| \le \delta$ holds with at least $1-\beta$ probability.
\end{proof}

\section{Time Complexity Analysis}
\label{appendix:time_complexity}

In this section, we theoretically and experimentally compare the time complexity between PriPL-Tree and its competitors. For convenience, we assume all attributes have the same domain size $d$.

\subsection{1-D Time Complexity}
\label{appendix:time_complexity_1d}

\textbf{Theoretical Analysis of PriPL-Tree:} The construction time follows three phases. Phase 1 is relatively complex, involving the distribution estimation and private PL fitting two steps. The first step mainly includes the user values' aggregation time $O(N\cdot \alpha)$ and the distribution estimation time $O(d \cdot T)$ based on the EM or EMS algorithm within the SW mechansim, where $T$ denotes the number of iterations in EM and EMS. The second step includes matrix-based segment fitting, taking time $O(K^2 \cdot d)$, where calculating the inverse of matrices by the Gaussian elimination method, and interval partitioning, invoking at most $O(K\cdot d)$ segment fitting. When we apply the search acceleration strategy in interval partitioning with a granularity factor $\phi$ (refer to Section~\ref{subsubsec:interval_partitioning}), $O(K \cdot \phi \cdot \log_{\phi}d)$ times segment fitting is required. Next, in phase 2, involving the construction of the PriPL-Tree and the estimation of each node's frequency, the time complexity is $O(K + N\cdot (1-\alpha)\cdot K)$. Phase 3, refining the PriPL-Tree through two traversals, leads to a complexity of $O(K)$. Overall, the construction time complexity primarily stems from private user data aggregation, frequency estimation, and private PL fitting during phases 1 and 2, totaling $O(N\cdot K + d\cdot T + d\cdot\log d\cdot K^3)$. This has been confirmed across various datasets and domain sizes $d$, as illustrated in Figure~\ref{fig:runtime_1d_phase1}. The legend labeled ``LDP frequency estimation'' represents user data aggregation and frequency estimation using existing LDP mechanisms.

\begin{figure}[h!]
	\centering
	\includegraphics[width=0.45\textwidth]{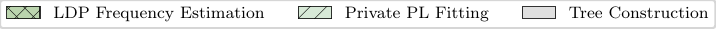}
	\vspace{-0.1in} % shorten the sapce between legend and figures.
	\\
	\subfloat[Varying Datasets with $d=1024$]{
	\includegraphics[width=0.23\textwidth,valign=t]{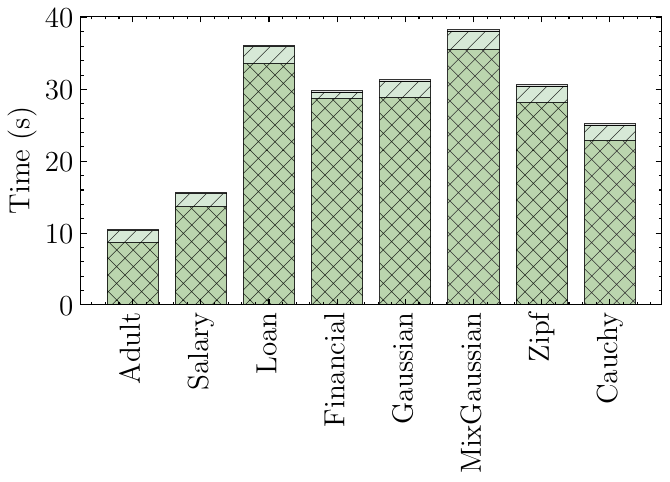}
	}
	\subfloat[Varying Domain Size $d$ on Gaussian]{
	\includegraphics[width=0.23\textwidth,valign=t]{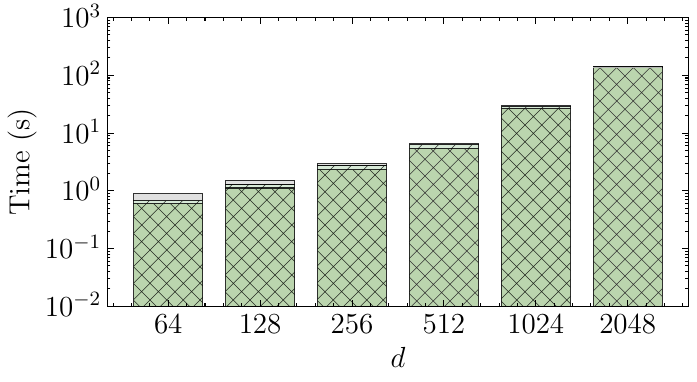}
	}
	\caption{Construction Time of PriPL-Tree}
	\label{fig:runtime_1d_phase1}
\end{figure}

\textbf{Theoretical Analysis of Competitors:} In Table~\ref{tab:performance_1d}, we compare the construction and query time complexities of different methods in 1-D scenarios. Because all compared methods utilize hierarchical tree structures, the construction time complexity of them primarily stems from processing users' reports, while the query time complexity depends on the tree height. In this table, the construction time complexities for AHEAD and DHT are sourced from \cite{du2021ahead}, and that for PrivNUD is analyzed by us using the same methodology. Since that the maximum segment number $K$ in PriPL-Tree is typically small, our construction time complexity is competitive with others, and our query time is significantly lower than our competitors.

\begin{table}[h!]
\small 
\caption{Time Complexity Comparison for 1-D Range Queries}
\centering
\begin{tabular}{|c|c|c|}
\hline
\textbf{Methods} & \textbf{Construction Time} & \textbf{Query Time}\\ 
\hline
PriPL-Tree & $O(N\cdot K + d\cdot T + d\cdot \log d\cdot K^3)$ & $O(\log_2 K)$ \\
\hline
PrivNUD & $O(\log_d 2 \cdot N \cdot d)$  & $O(\log_2 d)$\\
\hline
AHEAD & $O(\log_d 2 \cdot N \cdot d)$ & $O(\log_2 d)$\\
\hline
DHT & $O(N + d^3)$ &  $O(\log_2 d)$ \\
\hline
\end{tabular}
\label{tab:performance_1d}
\end{table}

\textbf{Experimental Comparison:} To ensure a fair comparison of method runtimes, we reimplemented DHT, originally in C++, in Python to match the programming language of the other methods. We have analyzed the construction and query times across different datasets in Section~\ref{subsec:1d_exp}. Additionally, we evaluate the construction times of different methods on varying domain sizes $d$ and the query times on varying query volumes.

For construction time, all methods are mainly influenced by the time of aggregating users' perturbed messages, a step inherent in employed LDP frequency estimation mechanisms, and thus increase with domain size $d$. In our comparisons, AHEAD \cite{du2021ahead} and PrivNUD \cite{wang2023privnud} share the same asymptotic time complexity in Table~\ref{tab:performance_1d}, but show significant discrepancies in experimental performance. This variation stems from their different implementations, where AHEAD utilizes the Treelib package for tree structures, similar to ours, while PrivNUD uses a nested array.

For query time, DHT shows a rising trend with increasing query volume $Vol(Q)$, whereas others, including PriPL-Tree, AHEAD, and PrivNUD, initially rise then decline as query volume increases. This trend is significantly pronounced for PrivNUD but slight for PriPL-Tree and AHEAD. This observation aligns with the theoretical analysis that in the hierarchical trees, the query time correlates with the differing numbers of nodes accessed at various query volumes, typically peaking when the query volume is around $50\%$.  

\begin{figure}[h]
	\centering
	\includegraphics[width=0.35\textwidth]{figures/experiments/legend_1d_runtime.pdf}
	\vspace{-0.1in} % shorten the sapce between legend and figures.
	\\
	\subfloat[Varying Domain Size $d$]{
	\includegraphics[width=0.23\textwidth,valign=t]{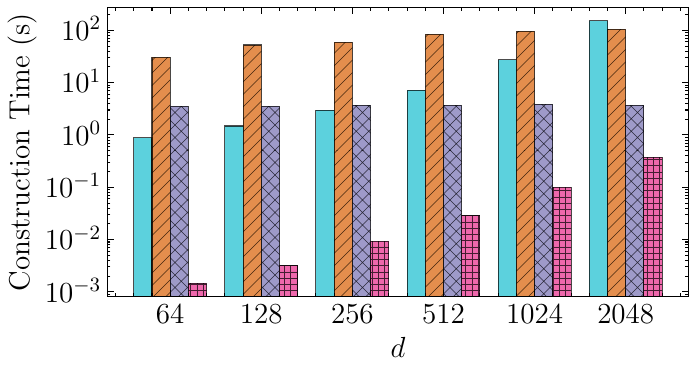}
	}
	\subfloat[Varying Query Volume $vol(Q)$]{
	\includegraphics[width=0.23\textwidth,valign=t]{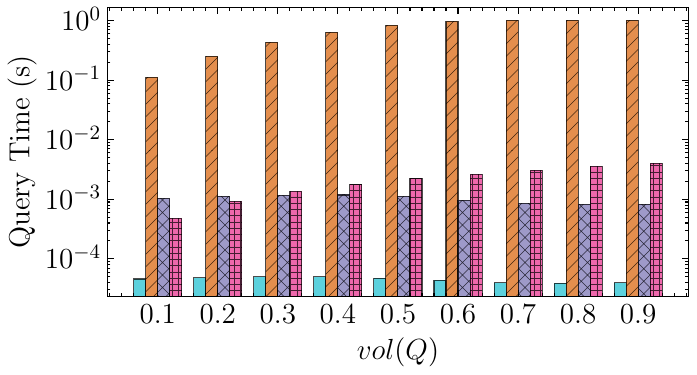}
	}
	\caption{Runtime Evaluation on Gaussian}
	\label{fig:runtime_1d_varying_para}
\end{figure}

\subsection{Multi-D Time Complexity}
\label{appendix:time_complexity_md}

\textbf{Theoretical Analysis:} In multi-dimensional scenarios, leveraging the 1-D and 2-D estimations is a common strategy across all methods. The construction time complexity of these hybrid low-dimensional data structures is primarily from processing $N$ user's reports, where each user reports multiple perturbed counts for nodes in the tree or cells in the grid. The query time complexity arises from computing the frequencies of the corresponding $\binom{\lambda}{2}$ 2-D grids and generating the response matrix with $2^\lambda$ values. Because generating the response matrix with given $\binom{\lambda}{2}$ 2-D grids is uniform for all methods, we focus on the time complexity of 2-D grids estimation for comparison, omitting the time consumption of the response matrix step. Following this idea, we analyze the time complexity of our PriPL-Tree and other competitors in Table~\ref{tab:performance_md}. During our analysis, we note that the granularity of the grids in the PriPL-Tree method is typically smaller than the maximum number of segments per attribute, denoted as $K$; therefore, we set the granularity to $O(K)$. For other methods, we use the optimized granularity reported in their respective papers.

\begin{table}[h]
\small 
\setlength\tabcolsep{1pt} %设置表格边距
\caption{Time Complexity Comparison for $\lambda$-d Range Queries on $m$-d Datasets}
\centering
\begin{tabular}{|c|c|c|}
\hline
\textbf{Methods} & \textbf{Construction Time} & \textbf{Query Time}\\ 
\hline
PriPL-Tree & $O(N\cdot K^2 + m\cdot d\cdot T)$ & $O(\lambda^2\cdot K^2)$ \\
\hline
PrivNUD & $O(\log_d 2 \cdot N \cdot d / m + N\sqrt{N}/m)$ & $O(\lambda^2\cdot \sqrt{N} / m)$ \\
\hline
AHEAD & $O(\log_d 4\cdot N \cdot d^2)$ & $O(\lambda^2 \cdot\log_4 d^2)$ \\
\hline
HDG & $O(N\sqrt{N}/m)$ & $O(\lambda^2\cdot \sqrt{N} / m)$ \\
\hline
PRISM & $O(N\sqrt{N}/m)$ & $O(\lambda^2\cdot \sqrt{N} / m)$ \\
\hline
\multicolumn{3}{l}{\makecell[l]{$*$: $K$ is the maximum segment number in PriPL-Tree, and $T$ is the number\\ of iterations in the EM algorithm within the SM protocol \cite{li2020estimating}. }}\\
\end{tabular}
\label{tab:performance_md}
\end{table}

\textbf{Experimental Comparison:} In Figure~\ref{fig:runtime_md}, we assess the actual runtime of these methods. For construction time across eight different datasets, i.e., Figure~\ref{fig:runtime_md} (a), our PriPL-Tree performs comparably to PrivNUD and is slightly longer than HDG, which is the fastest and only uses grid structures without trees. In Figure~\ref{fig:runtime_md} (c), the construction times for all methods increase with data dimensionality. PRISM is an exception, with significantly larger time consumption in 2-D cases. This is because it generates significantly finer grids in the 2-D case than in other dimensions with its granularity optimization strategy \cite{wang2022prism}. For query time, i.e., Figure~\ref{fig:runtime_md} (b) and (d), our PriPL-Tree method consistently achieves the fastest responses, with times increasing as query dimension $\lambda$ increases.

\begin{figure}[ht]
	\centering
	\includegraphics[width=0.45\textwidth]{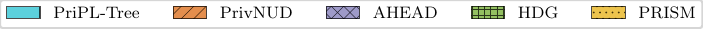}
	\vspace{-0.1in} % shorten the sapce between legend and figures.
	\\
	\subfloat[Construction Time]{
	\includegraphics[width=0.23\textwidth,valign=t]{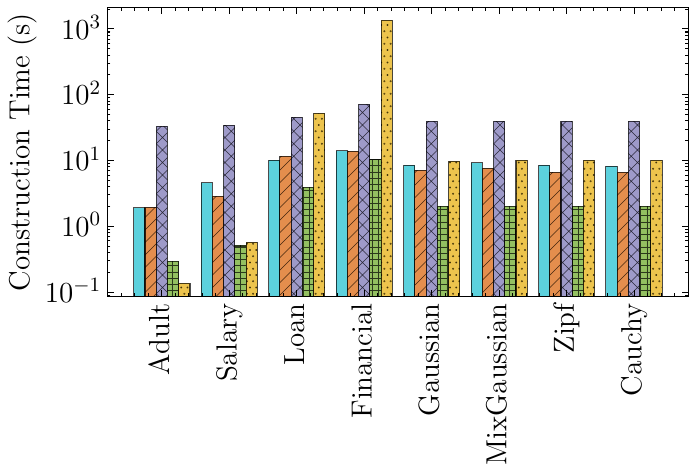}
	}
	\subfloat[Query Time]{
	\includegraphics[width=0.23\textwidth,valign=t]{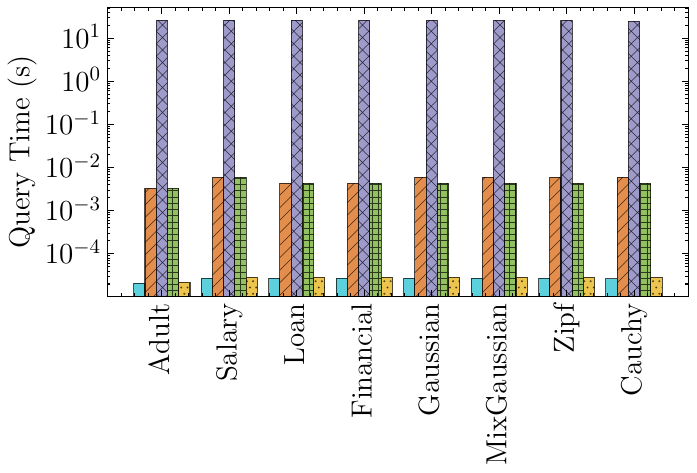}
	}
	\\
	\subfloat[Construction Time on Gaussian]{
	\includegraphics[width=0.23\textwidth,valign=t]{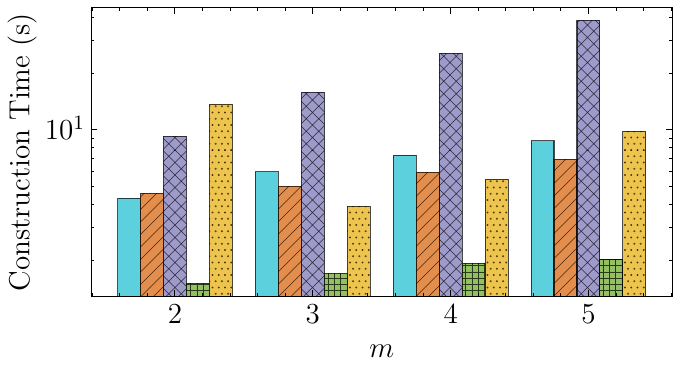}
	}
	\subfloat[Query Time on Gaussian]{
	\includegraphics[width=0.23\textwidth,valign=t]{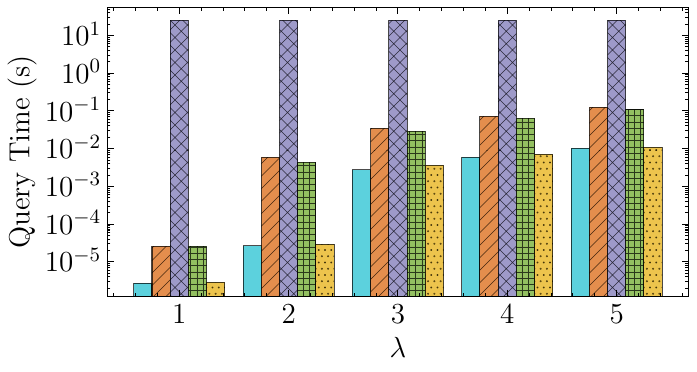}
	}
	\caption{Runtime Evaluations (Defaulty, $m = 5$ and $\lambda = 2$)}
	\label{fig:runtime_md}
\end{figure}

\section{Datasets Description}
\label{appendix:datasets}

In this section, we provide additional descriptions of both synthetic and real-world datasets. For our main 1-D scenario, we illustrate the distribution for the default attribute of each dataset in Figure~\ref{fig:datasets}. For multi-dimensional scenarios, we present the mean, variance, skewness, and kurtosis for each attribute across all datasets. Skewness indicates data distribution asymmetry, with positive values suggesting a long right tail and negative values a long left tail. Kurtosis measures the peak sharpness and tail thickness of the distribution. We bold the Kurtosis values exceeding 3 in the following tables, indicating an excessively peaked distribution. Given the similar statistical characteristics across dimensions, we summarize the average indicators for synthetic datasets in Table~\ref{tab:synthetic}. For real-world datasets, we detail these characteristics per attribute in the following: Adult dataset in Table~\ref{tab:adult}, Loan dataset in Table~\ref{tab:loan}, Salary dataset in Table~\ref{tab:salary}, and Financial dataset in Table~\ref{tab:financial}.

\begin{figure}
	\centering
	\subfloat[Gaussian]{
		\includegraphics[width=0.24\textwidth]{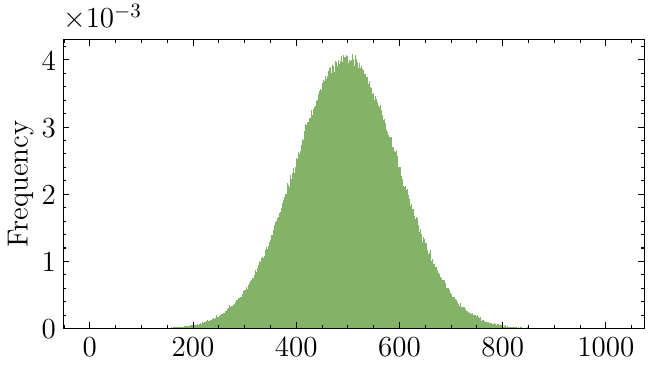}
	}
	\subfloat[MixGaussian]{
		\includegraphics[width=0.24\textwidth]{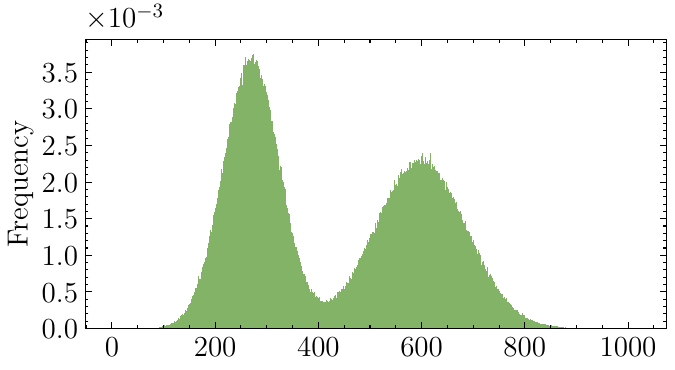}
	}
	\\
	\subfloat[Cauchy]{
		\includegraphics[width=0.24\textwidth]{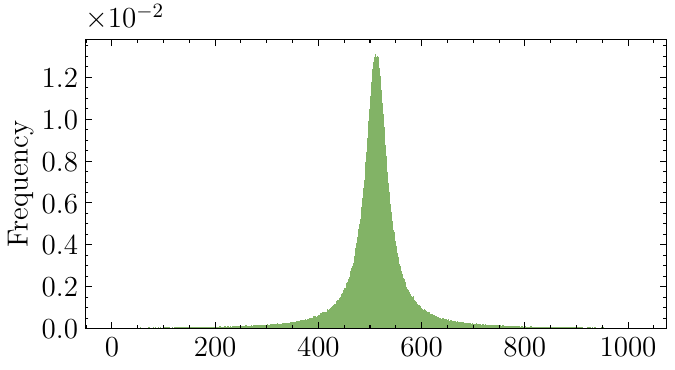}
	}
	\subfloat[Zipf]{
		\includegraphics[width=0.24\textwidth]{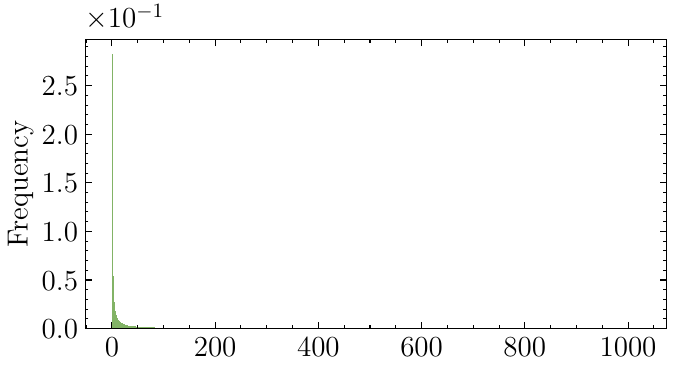}
	}
	\\
	\subfloat[Adult (``fnlwgt'')]{
		\includegraphics[width=0.24\textwidth]{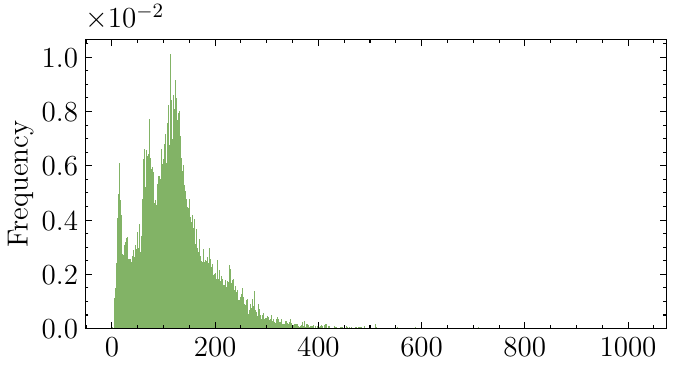}
	}
	\subfloat[Loan (``total\_pymnt'')]{
		\includegraphics[width=0.24\textwidth]{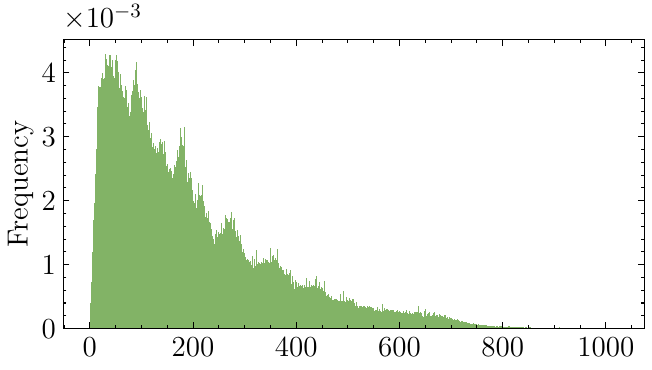}
	}
	\\
	\subfloat[Salary (``TotalPay'')]{
		\includegraphics[width=0.24\textwidth]{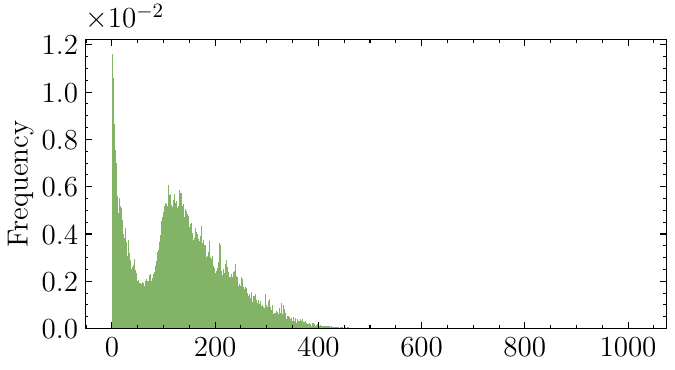}
	}
	\subfloat[Financial (``amount'' )]{
		\includegraphics[width=0.24\textwidth]{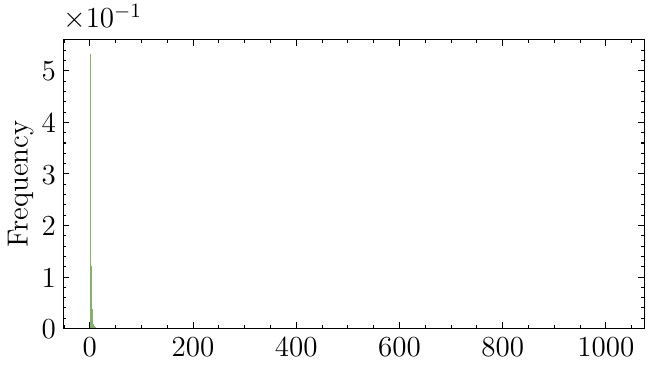}
	}
	\caption{Data Distributions of 1-D Datasets}
	\label{fig:datasets}
\end{figure}

\begin{table}[h]
\caption{Summary of Synthetic Datasets ($d=256$)}
\centering
\begin{tabular}{ccccc}
\hline
\textbf{Dataset} & \textbf{Mean} & \textbf{Variance} & \textbf{Skewness} & \textbf{Kurtosis} \\
\hline
Gaussian       & 155.0        & 775.34      & 1.44           & -0.28          \\
MixGaussian     & 135.12        & 2516.62      & 1.15           & -0.28          \\
Cauchy     & 159.37        & 609.50      & 4.42           & \textbf{15.46}          \\
Zipf     & 24.40       & 2517.47      & 18.58           & \textbf{284.47}          \\
\hline
\end{tabular}
\label{tab:synthetic}
\end{table}

\begin{table}[h]
\caption{Summary of Adult Dataset ($d=256$)}
\centering
\begin{tabular}{ccccc}
\hline
\textbf{Attribute} & \textbf{Mean} & \textbf{Variance} & \textbf{Skewness} & \textbf{Kurtosis} \\
\hline
fnlwgt           & 30.36         & 336.88                     & 2.19   & \textbf{4.03}            \\
age          & 21.58        & 186.06                      & -0.03  & -1.64           \\
capital-gain            & 2.71        & 353.74                      & 15.90 & \textbf{250.75}         \\
capital-loss            & 5.11       & 555.48                      & 15.90  & \textbf{250.93}          \\
hours-per-week            & 39.44       & 152.45                    & 8.82   & \textbf{80.72}    
\\
\hline      
\end{tabular}
\label{tab:adult}
\end{table}

\begin{table}[h]
\caption{Summary of Loan Dataset ($d=256$)}
\centering
\begin{tabular}{ccccc}
\hline
\textbf{Attribute}             & \textbf{Mean} & \textbf{Variance} & \textbf{Skewness} & \textbf{Kurtosis} \\
\hline
total\_pymnt                  & 48.67      &1620.19    & 1.27           & 0.35           \\
total\_rec\_int                & 21.69     &596.95     & 2.98           & \textbf{8.38}            \\
installment                   & 66.01     &1595.08     & 1.12           & 0.37            \\
\makecell[l]{total\_il\_high\\ \_credit\_limit} & 5.02     & 29.44     & 5.61             & \textbf{32.28}           \\
loan\_amnt                    & 93.03     &3661.59     & 4.75           & \textbf{26.27}          
\\
\hline
\end{tabular}
\label{tab:loan}
\end{table}

\begin{table}[h]
\caption{Summary of Salary Dataset ($d=256$)}
\centering
\begin{tabular}{ccccc}
\hline
\textbf{Attribute} & \textbf{Mean} & \textbf{Variance} & \textbf{Skewness} & \textbf{Kurtosis} \\
\hline
TotalPay           & 33.59      &515.88     & 1.80            & 2.56            \\
TotalPayBenefits   & 42.15      &797.60     & 1.64            & 2.64             \\
BasePay            & 52.80      &1174.51       & 1.46            & 2.33            \\
OvertimePay        & 5.08      &140.77       & 15.43           & \textbf{240.57}          \\
OtherPay           & 6.22      &25.47      & 13.47           & \textbf{194.04}           
\\
\hline
\end{tabular}
\label{tab:salary}
\end{table}

\begin{table}[h]
\caption{Summary of Financial Dataset ($d=256$)}
\centering
\begin{tabular}{ccccc}
\hline
\textbf{Attribute} & \textbf{Mean} & \textbf{Variance} & \textbf{Skewness} & \textbf{Kurtosis} \\
\hline
amount             & 0.23      &2.65      & 15.76          & \textbf{247.68}       \\
oldbalanceOrg    & 3.40      & 151.81      & 15.82         & \textbf{249.22}       \\
newbalanceOrig     & 4.24      & 225.10      & 15.72           & \textbf{246.97}          \\
oldbalanceDest     & 0.57      & 5.61     & 15.57          & \textbf{243.49}         \\
newbalanceDest   & 0.64      & 6.60      & 15.47         & \textbf{241.30}         
\\
\hline
\end{tabular}
\label{tab:financial}
\end{table}

\section{Additional Evaluation on High-dimensional Dataset}

In this section, we evaluate performance on the high-dimensional IPUMS dataset, as shown in Figure~\ref{fig:high_dimension_evaluation}. This dataset, sourced from the 2022 IPUMS repository \footnote{https://usa.ipums.org/usa/}, contains 50 dimensions and 15,721,123 samples. Due to the high runtime complexity of AHEAD, as detailed in Appendix~\ref{appendix:time_complexity_md}, it could not produce effective results on 50-D datasets within the limited time in our experimental setup, so we excluded its results from the figure. This omission does not impact our analysis, as previous experiments have already demonstrated its inferior performance compared to PrivNUD and our PriPL-Tree.

Overall, PriPL-Tree consistently achieves the lowest MSE among all competitors. For various privacy budgets (Figure~\ref{fig:high_dimension_evaluation} (a)), PriPL-Tree shows a reduction in MSE ranging from 46.34\% to 72.22\%, with an average reduction of 60.67\%. Across different query dimensions (Figure~\ref{fig:high_dimension_evaluation} (b)), it reduces MSE by 56.80\% to 91.72\%, averaging 69.90\%. The superiority of PriPL-Tree here is more pronounced than in experiments on only 5-D datasets, as shown in Figures~\ref{fig:evaluation_md_epsilon} and \ref{fig:evaluation_md_params}.

Furthermore, Figure~\ref{fig:high_dimension_evaluation} (b) shows a decrease in MSE with increasing query dimension $\lambda$, in contrast to the results from 5-D synthetic Gaussian datasets depicted in Figure~\ref{fig:evaluation_md_params} (b), where MSE increases with $\lambda$. This discrepancy arises from the different characteristics of the datasets. Methods estimating high-dimensional range queries using low-dimensional responses are more accurate when attribute correlations are minor. The Gaussian dataset, with high inter-attribute correlation (covariance of 0.6), shows decreasing accuracy as $\lambda$ increases. In contrast, IPUMS, with generally lower correlations among attributes, allows for relatively precise estimates across varying dimensions. The observed decrease in MSE is primarily due to the lower actual frequency of higher-dimensional queries, where the error is proportional to this frequency.
\begin{figure}[t]
	\centering
	\includegraphics[width=0.45\textwidth]{figures/experiments/legend_md.pdf}
	\\
	\subfloat[Varying Privacy Budget $\epsilon$]{
	\includegraphics[width=0.23\textwidth,valign=t]{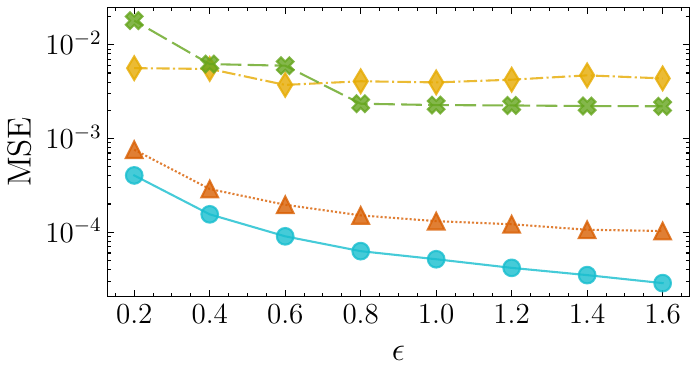}
	}
	\subfloat[Varying Query Dimension $\lambda$]{
	\includegraphics[width=0.23\textwidth,valign=t]{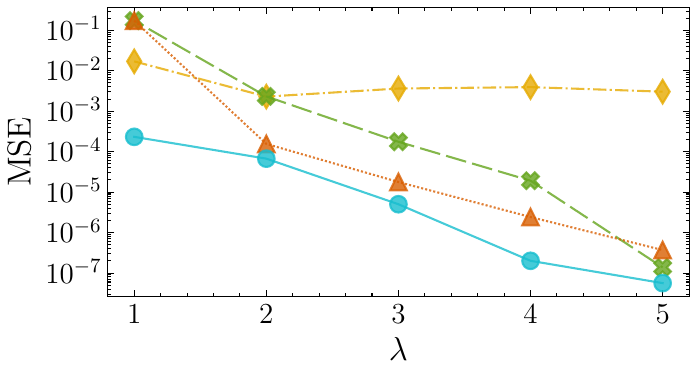}
	}
	\caption{Evaluation of Range Queries on 50-D IPUMS}
	\label{fig:high_dimension_evaluation}
\end{figure}

\end{document}